\newtheorem{definition}{Definition}
\newtheorem{lemma}{Lemma}
\newtheorem{theorem}{Theorem}
\newtheorem{observation}{Observation}
\newtheorem{corollary}{Corollary}
\newcommand{\Oh}[1]{\ensuremath{\mathcal{O}\left(#1\right)}\xspace}
\newenvironment{proof}{\textbf{Proof:}}{\hspace*{0mm}\hfill\ensuremath{\Box}}
\begin{document}

\title{Approximating the Integral Fr\'{e}chet Distance}

\author{\begin{tabular}{ c c c }
  Anil Maheshwari & J\"{o}rg-R\"{u}diger Sack & Christian Scheffer \\
  School of Computer Science & School of Computer Science & Department of Computer Science \\
  Carleton University & Carleton University & TU Braunschweig\\
  Ottawa, Canada K1S5B6 & Ottawa, Canada K1S5B6 & M\"{u}hlenpfordtstr. 23,\\
  && 38106 Braunschweig, Germany \\
  \texttt{anil@scs.calreton.cs} & \texttt{sack@scs.carleton.ca} & \texttt{scheffer@ibr.cs.tu-bs.de}
\end{tabular}
  }

\date{}
\maketitle

%%%%%%%%%%%%%%%%%%%%%%%%%%%%%%%%%%%%%%%%%%%%%%%%%%%%%%%%%%%%%%%%%%%%

\begin{abstract}
  	A pseudo-polynomial time $(1 + \varepsilon)$-approximation algorithm is presented for computing the integral and average Fr\'{e}chet distance between two given polygonal curves $T_1$ and $T_2$. In particular, the running time is upper-bounded by $\mathcal{O}(  \zeta^{4}n^4/\varepsilon^{2})$ where $n$ is the complexity of $T_1$ and~$T_2$ and $\zeta$ is the maximal ratio of the lengths of any pair of segments from $T_1$ and~$T_2$. The Fr\'{e}chet distance captures the minimal cost of a continuous deformation of $T_1$ into $T_2$ and vice versa and defines the cost of a deformation as the maximal distance between two points that are related. The integral Fréchet distance defines the cost of a deformation as the integral of the distances between points that are related. The average Fréchet distance is defined as the integral Fréchet distance divided by the lengths of $T_1$ and $T_2$.

Furthermore, we give relations between weighted shortest paths inside a single parameter cell~$C$ and the monotone free space axis of $C$. As a result we present a simple construction of weighted shortest paths inside a parameter cell. Additionally, such a shortest path provides an optimal solution for the partial Fr\'{e}chet similarity of segments for all leash lengths. These two aspects are related to each other and are of independent interest.
\end{abstract}

%%%%%%%%%%%%%%%%%%%%%%%%%%%%%%%%%%%%%%%%%%%%%%%%%%%%%%%%%%%%%%%%%%%%%%%%%%%%%%%%%%%%%%%%%%%%%%%%%%%%%%%%%%%%%%%%%%%%%%%%%%%%%%%%%%%%%%%%%%%%%%%%%%%%%%%%%%%%%%%%%%%%%%%%%%%%%%%%%%%%%%%%%%%%%%%%%%%%%%%%%%%%%%%%%%%%%%%%%%%%%%%%%%%%%%%%%%%%%%%%%%%%%%%%%%%%%%%%%%%%%%%%%%%%%%

\section{Introduction}\label{sec:intro}

	Measuring  similarity between geometric objects is a fundamental problem in many areas of science and engineering. Applications arise e.g., when studying animal behaviour, human movement, traffic management, surveillance and security, military and battlefield, sports scene analysis, and movement in abstract spaces~\cite{gudmundsson:movement,gudmundsson:gpu,gudmundsson:football}.  Due to its practical relevance, the resulting algorithmic problem of curve matching has become one of the well-studied problems in computational geometry.  One of the prominent measures of similarities between curves is given by the  \emph{Fr\'{e}chet distance} and its variants. \emph{Fr\'{e}chet}  measures have been applied e.g., in  hand-writing recognition~\cite{DBLP:conf/icdar/SriraghavendraKB07}, protein structure alignment~\cite{DBLP:journals/jbcb/JiangXZ08}, and vehicle tracking~\cite{wenk:vehicle}. %One of the Fr\'{e}chet measures' key advantages is that they exploit more global information of the curves than some of the other measures as e.g. the Hausdorff distance.
	
	In the well-known dog-leash metaphor, the (standard) \emph{Fr\'{e}chet  distance} is described as follows:
suppose a person walks a dog, while both have to move from the starting point to the ending point on their respective curves~$T_1$ and~$T_2$. The \emph{Fr\'{e}chet}  distance is the minimum leash length required over all possible pairs of walks, if neither person nor dog is allowed to move backwards. Here, we see the  \emph{Fr\'{e}chet  distance} as capturing the cost of a continuous deformation of~$T_1$ into $T_2$ and vice versa. (A deformation is required to maintain the order along~$T_1$ and~$T_2$.) A specific deformation induces a relation $R \subset T_1 \times T_2$ such that ``$p \in T_1$ is deformed into $q \in T_2$''. For $(p,q) \in R$ we say \emph{$p$ is related to~$q$} and vice versa. The \emph{Fr\'{e}chet distance} defines the cost of a deformation as the maximal distance between two related points. %, see Figure~\ref{fig:deformationDistanceVSintegral}. Following the well known dog-leash metaphor, we call $(p,q)$ a \emph{leash}. As a curve induces an order of its points, a deformation is required to maintain the order of~$T_1$ and~$T_2$. In other words,  if~$a_1$ lies on~$T_1$ not ``behind''~$b_1$ on $T_1$ then all points that are related to~$a_1$ lie on $T_2$ not ``behind'' a point that is related~$b_1$ on $T_2$, see Figure~\ref{fig:deformationDistanceVSintegral}.
%In particular, a deformation between $T_1$ and $T_2$ is a homotopy between two curves $C_1$ and $C_2$ that have the same image as $T_1$ and $T_2$.

\begin{figure}[ht]
  \begin{center}
    \begin{tabular}{ccccc}
      \includegraphics[height=2.5cm]{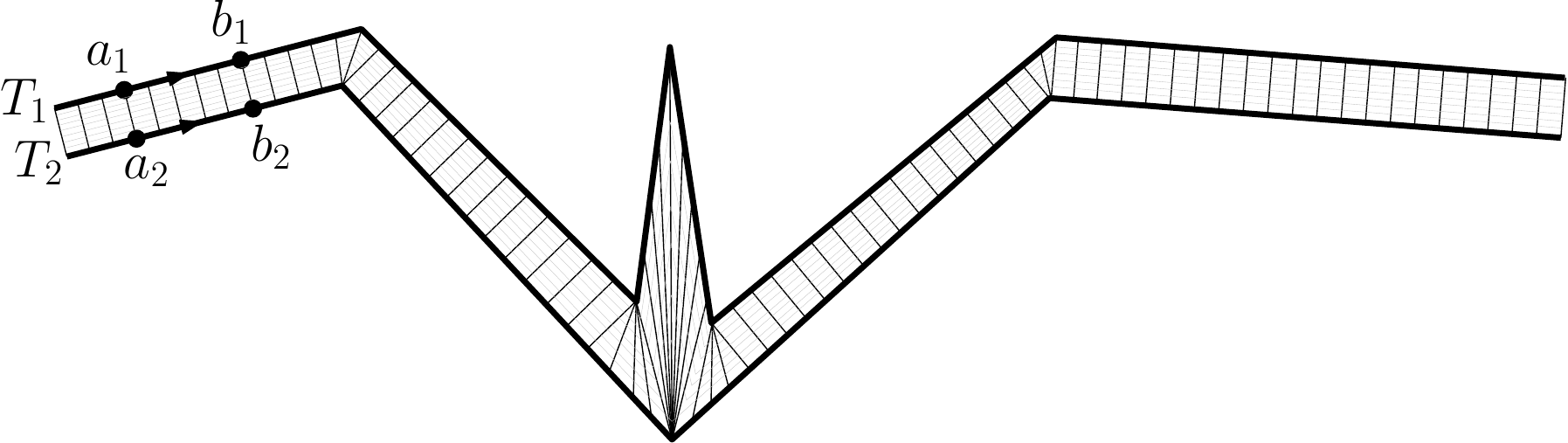} & &\\
     % \includegraphics[height=4.8cm]{degenerate}&&
      %\includegraphics[height=4.8cm]{degenerateCH}\\ 
      %{\small (a) Angle constrained Voronoi cells ...} & &
      %{\small (b) degenerate} &&
      %{\small (c) degenerate convex hull}
       %{\small a direction vector of $t_i$ \;\;\;\;\;\;\;\;\;\;\;\;\;\;\;\;\;\;\;\;\;\;\;\;\;\;\;.} & &
      %{\small \;\;\;\;\;\;$\mathcal{CH}$ with the parameter cell $C_{i,j}$ \;\;\;\;.}\\ 
      \end{tabular}
  \end{center}
  \vspace*{-12pt}
  \caption{A deformation between $T_1$ and $T_2$ and the relation between $T_1$ and $T_2$. The deformation maintains the order of points along the curves. The distances between related points on the peak are larger than the distances between related points that do not lie  on the peak.
%% AM?: The largest distance between related points is significant larger than the majority if the distances between related points.
}
  \label{fig:deformationDistanceVSintegral}
\end{figure}
	
	In this paper, we study the integral and average Fr\'{e}chet distance originally introduced by Buchin~\cite{buchin:phd}. The \emph{integral Fr\'{e}chet distance} defines the cost of a deformation as the integral of the distances between points that are related. The \emph{average Fr\'{e}chet distance} is defined as the integral Fr\'{e}chet distance divided by the lengths of $T_1$ and~$T_2$. Next, we define these notions formally.
%\Large
%	\noindent	
%	{\bf Problem Definitions} 
%\normalsize
\subsection{Problem Definition} 
	Let $T_1,T_2: [0,n] \rightarrow \mathbb{R}^2$ by two polygonal curves. We denote the first derivative of a function~$f$ by $f'$. By, $|| \cdot ||_p$, we denote the $p$-norm and by $d_p( \cdot, \cdot)$ its induced $L_p$ metric. The \emph{lengths~$|T_1|$ and~$|T_2|$} of $T_1$ and~$T_2$ are defined as $\int^n_0 ||(T_1)'(t)||_2\ dt$ and $\int^n_0 ||(T_2)'(t)||_2\ dt$, respectively.  To simplify the exposition, we assume that $|T_1| = |T_2| = n$ and that $T_1$ and $T_2$ each have $n$ segments. A \emph{reparametrization} is a continuous function $\alpha: [0,n] \rightarrow [0,n]$ with $\alpha(0) = 0$ and $\alpha(n)= n$. A reparameterization $\alpha$ is \emph{monotone} if $\alpha(t_1) \leq \alpha(t_2)$ holds for all $0 \leq t_1 \leq t_2 \leq n$. A \emph{(monotone) matching} is a pair of (monotone) reparametrizations $(\alpha_1,\alpha_2)$. The \emph{Fr\'echet distance} of $T_1$ and~$T_2$ w.r.t. $d_2$ is defined as $\mathscr{D} \left( T_1, T_2 \right)=\inf_{(\alpha_1,\alpha_2)} \max_{t \in [0,n]} d_2 (T_1(\alpha_1(t)), T_2(\alpha_2(t)))$.
	
	For a given leash length $\delta \geq 0$, Buchin et al.~\cite{buchin:exact} define the \emph{partial Fr\'{e}chet similarity $\mathcal{P}_{(\alpha_1,\alpha_2)}(T_1,T_2)$ w.r.t. a matching $(\alpha_1,\alpha_2)$} as
	 	\begin{equation*}
			\int_{d_2( T_1 \left( \alpha_1 \left( t \right) \right), T_2 \left( \alpha_2 \left( t \right) \right)  ) \leq \delta}   \left( || \left( T_1 \circ \alpha_1 \right)' \left( t \right)||_2 + || \left( T_2 \circ \alpha_2 \right)' \left( t \right)||_2 \right) dt
		\end{equation*}
	and the \emph{partial Fr\'{e}chet similarity} as $\mathcal{P}_{\delta}(T_1,T_2)=\sup_{\alpha_1,\alpha_2} \mathcal{P}_{(\alpha_1,\alpha_2)} (T_1,T_2)$.
		
	Given a monotone matching $\left( \alpha_1, \alpha_2 \right)$, the \emph{integral Fr\'echet distance $\mathcal{F}_{\mathcal{S},(\alpha_1,\alpha_2)} \left( T_1, T_2 \right)$ of $T_1$ and $T_2$ w.r.t. $\left( \alpha_1,\alpha_2 \right)$} is defined as:
		\begin{equation*}
			\int_{0}^n d_2( T_1 \left( \alpha_1 \left( t \right) \right), T_2 \left( \alpha_2 \left( t \right) \right)  )  \left( || \left( T_1 \circ \alpha_1 \right)' \left( t \right)||_2 + || \left( T_2 \circ \alpha_2 \right)' \left( t \right)||_2 \right) dt
		\end{equation*}
	and the \emph{integral Fr\'{e}chet distance} as $\mathcal{F}_{\mathcal{S}} \left( T_1, T_2 \right)=\inf_{(\alpha_1,\alpha_2)} \mathcal{F}_{\mathcal{S},(\alpha_1,\alpha_2)} \left( T_1, T_2 \right)$~\cite{buchin:phd}. Note that the derivatives of $(T_1 \circ \alpha_1)(\cdot)$ and $(T_2 \circ \alpha_2)(\cdot)$ are measured w.r.t. the $L_2$-norm because the lengths of $T_1$ and $T_2$ are measured in  Euclidean space.  The \emph{average Fr\'{e}chet distance} is defined as $\mathcal{F}_S (T_1,T_2) / (|T_1| + |T_2|)$~\cite{buchin:phd}.
	
	 While the integral Fr\'{e}chet distance has been studied ~\cite[p. 860]{wenk:vehicle}, no  efficient algorithm exists to compute this distance measure (see Subsection~\ref{subsec:rel} for details).  In this paper, we design  the first pseudo-polynomial time algorithm for computing an $(1+ \varepsilon)$-approximation of the integral Fr\'{e}chet distance and consequently of the average Fr\'{e}chet distance.\\
\subsection{Related Work} \label{subsec:rel} 
%\noindent
%\Large
%{\bf Related Work} 
%\normalsize

In their seminal paper, Alt and Godau~\cite{alt:computing} provided an algorithm that computes the Fr\'{e}chet distance between two polygonal curves $T_1$ and $T_2$ in $\mathcal{O}(n^2 \log (n))$ time, where $n$ is the complexity of $T_1$ and $T_2$. In the presence of outliers though, the Fr\'{e}chet distance may not provide an appropriate result. This is due to the fact that the Fr\'{e}chet distance measures the maximum of the distances between points that are related. This means that already one large  "peak"  may substantially increase the Fr\'{e}chet distance between $T_1$ and $T_2$ when the remainder of $T_1$ and $T_2$ are similar to each other, see Figure~\ref{fig:deformationDistanceVSintegral} for an example.
	
\begin{figure}[ht]
  \begin{center}
    \begin{tabular}{ccccc}
      \includegraphics[height=2.5cm]{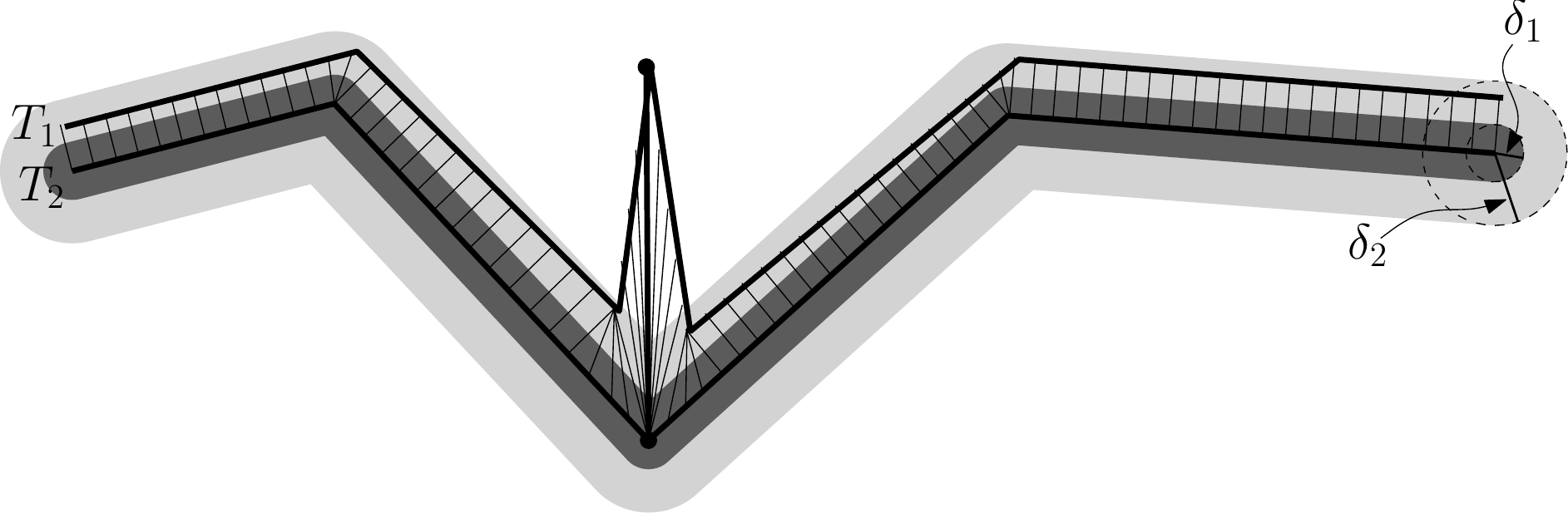} & &\\
     % \includegraphics[height=4.8cm]{degenerate}&&
      %\includegraphics[height=4.8cm]{degenerateCH}\\ 
      %{\small (a) Angle constrained Voronoi cells ...} & &
      %{\small (b) degenerate} &&
      %{\small (c) degenerate convex hull}
       %{\small a direction vector of $t_i$ \;\;\;\;\;\;\;\;\;\;\;\;\;\;\;\;\;\;\;\;\;\;\;\;\;\;\;.} & &
      %{\small \;\;\;\;\;\;$\mathcal{CH}$ with the parameter cell $C_{i,j}$ \;\;\;\;.}\\ 
      \end{tabular}
  \end{center}
  \vspace*{-12pt}
  \caption{An optimal deformation between $T_1$ and $T_2$ for both, the Fr\'{e}chet distance and the partial Fr\'{e}chet similarity. Relative to the other portions, the Fr\'{e}chet distance is significantly increased by the peak on $T_1$. The partial Fr\'{e}chet similarity is unstable for distance thresholds between $\delta_1$ and $\delta_2$ where~$\delta_1 \approx \delta_2$. The integral Fr\'{e}chet distance between $T_1$ and $T_2$ is robust w.r.t. to small changes of the distances between related points and the influence of the peak. %Two curves have in general, similar shapes, but this is not recognizable via the Fr\'{e}chet distance. The influence of three outliers on $T_1$ whose distances to $T_2$ are falsifying large, while the majority of the used leashes is small. This majority is captured by the integral and average Fr\'{e}chet distances. The partial Fr\'{e}chet similarity measures the (green) parts of $T_1$ and $T_2$ which are participating at leash lengths no larger than a given threshold distance.
  }
  \label{fig:outlier}
\end{figure}
	
	To overcome the issue of outliers, Buchin et al.~\cite{buchin:exact} introduced the notion of \emph{partial Fr\'{e}chet similarity} and gave an algorithm running in $\mathcal{O}(n^3 \log (n))$ time, where distances are measured w.r.t. the $L_1$ or $L_{\infty}$ metric. The partial Fr\'{e}chet similarity measures the cost of a deformation as the lengths of the parts of $T_1$ and $T_2$ which are made up of points that fulfill the following: The distances that are induced by straightly deforming points into their related points are upper-bounded by a given threshold $\delta \geq 0$, see Figure~\ref{fig:outlier}. De Carufel et al.~\cite{carufel:similarity} showed that the partial Fr\'{e}chet similarity w.r.t. to the $L_2$ metric cannot be computed exactly over the rational numbers. Motivated by that, they gave an $(1 \pm \varepsilon)$-approximation algorithm guaranteeing a pseudo-polynomial running time. An alternative perspective on the partial Fr\'{e}chet similarity is the partial Fr\'{e}chet dissimilarity, i.e., the minimization of the portions on $T_1$ and $T_2$ which are involved in distances that are larger than $\delta$. Observe that an exact solution for the similarity problem directly leads to an exact solution for the dissimilarity problem. In particular, the sum of both values is equal to the sum of the lengths of $T_1$ and $T_2$.
	
	Unfortunately, both the partial Fr\'{e}chet similarity and dissimilarity are highly dependent on  the choice of $\delta$ as provided by the user. As a function of $\delta$, the partial Fr\'{e}chet distance is unstable, i.e., arbitrary small changes of $\delta$ can result in arbitrarily large changes of the partial Fr\'{e}chet (dis)similarly, see Figure~\ref{fig:outlier}. In particular, noisy data may yield  incorrect similarity results. For noisy data,   the computation of the Fr\'{e}chet distance in the presence of imprecise points has been explored in~\cite{ahn:imprecise}. The idea behind this approach is to model signal errors by replacing each vertex $p$ of the considered chain $T_1$ by a small ball centered at $p$. Unfortunately, the above described outlier-problem cannot be resolved by such an approach because the distance of an outlier to the other chain $T_2$ could be arbitrarily large. This would mean that the radii of the corresponding balls would have been chosen extremely large.
	
%	There are several interesting and intuitive descriptions and applications of Fr\'{e}chet measures, e.g. the dog-leash metaphor, which induce or require different definitions of Fr\'{e}chet measures. The integral Fr\'{e}chet distance seems to be appropriate for the application of deforming curves into each other because it measures the integral of the costs to deform each point into another point, i.e., the cost of an optimal relation. Furthermore, the integral Fr\'{e}chet distance tackles the phenomena of outliers and noisy data simultaneously. 
	
	An approach related to the integral Fr\'{e}chet distance is dynamic time warping (DTW), which arose in the context of speech recognition~\cite{rabiner:fundamentals}. Here, a discrete version of the integral Fr\'{e}chet distance is computed via dynamic programming. This is not suitable for general curve matching (see~\cite[p. 204]{efrat:mathching}). Efrat et al.~\cite{efrat:mathching} worked out an extension of the idea of DTW to a continuous version. In particular, they  compute shortest path distances on a combinatorial piecewise linear $2$-manifold that is constructed by taking the Minkowski sum of $T_1$ and $T_2$. Furthermore, they gave two approaches dealing with that manifold. The first one does not yield an approximation of the integral Fr\'{e}chet distance. The second one does not lead to theoretically provable guarantees regarding both: polynomial running time and approximation quality of the integral Fr\'{e}chet distance.  
	
	More specifically, ~\cite{efrat:mathching} designed two approaches for continuous curve matching by computing shortest paths on a combinatorial piecewise linear $2$-manifold $\mathcal{M}(T_1,T_2) := T_1 \ominus T_2 := \{ T_1(\mu) - T_2(\lambda) | \lambda,\mu \in [0,n] \}$. In particular, they consider shortest path lengths between the points $T_1(0) - T_2(0)$ and $T_1(n) - T_2(n)$ on the polyhedral structure which is induced by $\mathcal{M}(T_1,T_2)$. The first approach is to compute in polynomial time the unweighted monotone shortest path length on $\mathcal{M}(T_1,T_2)$ w.r.t. $d_2$. This approach does not take into account the weights in form of the considered leash length. Therefore, it does not yield an approximation of the integral Fr\'{e}chet distance. In contrast to this, the second approach considers an arbitrarily chosen weight function $f$ such that the minimum path integral over all connecting curves on $\mathcal{M}(T_1,T_2)$ is approximated. In terms of Fr\'{e}chet distances, this approach is an approximation of the integral Fr\'{e}chet distances as described next. By flattening and rectifying $\mathcal{M}(T_1,T_2)$, we have a representation of the parameter space in the space of $T_1$ and $T_2$, such that by setting $f = w$ and considering shortest path length w.r.t. $d_1$ instead of $d_2$, we obtain the problem setting of computing the integral Fr\'{e}chet distance (the function $w$ is defined in Section \ref{sec:prelim}). However, to compute the weighted shortest path length on $\mathcal{M}(T_1,T_2)$, Efrat et al. apply the so-called \emph{Fast Marching Method}, ``to solve the Eikonal equation numerically''~\cite[p. 211]{efrat:mathching}. While ``the solution it (ed.: the algorithm) provides converges monotonically''~\cite[p. 211]{efrat:mathching}, the solution does not give a $(1+\varepsilon)$ approximation with pseudo-polynomial running-time.
\subsection{Contributions} 
\label{sec:our-result}
%\noindent
%\Large
%{\bf Contributions} 
%\normalsize
\begin{itemize}
\item
We present a (pseudo-)polynomial time algorithm that approximates the integral  Fr\'{e}chet Distance,   $\mathcal{F}_S(T_1,T_2)$, up to an multiplicative error of $(1+\varepsilon)$. This measure is desirable because it integrates the inter-curve distances along the curve traversals, and is thus more stable (w.r.t. to the choice of $\delta$) than other Fr\'{e}chet Distance measures defined by the  maximal such distance.

\item
The running time of our approach is $\Oh{  \zeta^{4}n^4/\varepsilon^{2}  \log (\zeta n /\varepsilon) }$,  where~$\zeta$ is the maximal ratio of the lengths of any pair of segments from $T_1$ and $T_2$. Note that achieving a running time that is independent of $|T_1| + |T_2|$ seems to be quite challenging as $\mathcal{F}_S(T_1,T_2)$ could be arbitrary small compared to $|T_1| + |T_2|$. 

\item This  guarantees   an $(1+\varepsilon)$ approximation within pseudo-polynomial running time which was  not been achieved by the approach of \cite{efrat:mathching}.

\item  Our results thus answer the implicit question raised in \cite{wenk:vehicle}: ``Unfortunately there is no algorithm known that computes the integral Fr\'{e}chet distance.''

\item		As a by-product, we show that a shortest weighted path $\pi_{ab}$ between two points $a$ and~$b$ inside a parameter cell $C$ can be computed in constant time. We also make the observation that $\pi_{ab}$ provides an optimal matching for the partial Fr\'{e}chet similarity for all leash length thresholds. This provides a natural extension of locally correct Fr\'{e}chet matchings that were first introduced by Buchin et al.~\cite{buchin:locally}. They suggest to: ``restrict to the locally correct matching that decreases the matched distance as quickly as possible.''\cite[p. 237]{buchin:locally}. The matching induced by $\pi_{ab}$ fulfils this requirement. 

\end{itemize}

\section{Preliminaries} \label{sec:prelim}
%\subsection{Duality to shortest path distances in parameter space}
	The \emph{parameter space $P$} of $T_1$ and~$T_2$ is an axis aligned rectangle. The bottom-left corner $\mathfrak{s}$ and  upper-right corner  $\mathfrak{t}$  correspond to $(0,0)$ and $(n,n)$, respectively. We denote the $x$- and the $y$-coordinate of a point $a \in P$ by $a.x$ and $a.y$, respectively. A point $b \in P$ \emph{dominates} a point $a \in P$, denoted by $a \leq_{xy} b$, if $a.x \leq b.x$ and $a.y \leq b.y$ hold. A path~$\pi$ is \emph{($xy$-) monotone} if $\pi(t_1) \leq \pi(t_2)$ holds for all $0\leq t_1\leq t_2 \leq n$. Thus a monotone matching corresponds to a monotone path $\pi$ with $\pi(0) = \mathfrak{s}$ and $\pi(n) = \mathfrak{t}$. By inserting $n+1$ vertical and $n+1$ horizontal \emph{parameter lines}, we refine $P$ into $n$ rows and $n$ columns such that the $i$-th row (column) has a height (resp., width) that corresponds to the length of the $i$-th segment on $T_1$ (resp., $T_2$).  This induces a partitioning of $P$ into cells,  called \emph{parameter cells}.
	
	For $a,b \in P$ with $a \leq_{xy} b$, we have $||ab||_1 = \int_{a.x}^{b.x} ||(T_1)'(t)||_2 \ dt + \int_{a.y}^{b.y} ||(T_2)'(t)||_2 \ dt$. This is equal to the sum of the lengths of the subcurves between $T_1(a.x)$ and $T_1(b.x)$ and between $T_2(a.y)$ and $T_2(b.y)$. Thus, we define the \emph{length $|\pi|$ of a path $\pi: [0,n] \rightarrow P$} as $\int_{0}^{n}||(\pi)'(t)||_1 \ dt$. Note that for the paths inside the parameter space  the $1$-norm is applied, while the lengths of the curves in the Euclidean space are measured w.r.t. the $2$-norm. As $\mathcal{F}_S(T_1,T_2)$ measures the length of $T_1$ and $T_2$ at which each $(T_1(\alpha_1(t)),T_2(\alpha_2(t)))$ is weighted by $d_2 (T_1(\alpha_1(t)),T_2(\alpha_2(t)))$, we consider the \emph{weighted length} of $\pi$ defined as follows:
	
%\begin{definition}
	Let $w(\cdot) : P \rightarrow \mathbb{R}_{\geq 0}$ be defined as $w((x,y)) := d_2 (T_1(x), T_2(y))$ for all $(x,y) \in P$. The weighted length $|\pi|_w$ of a path $\pi : [a,b] \rightarrow P$ is defined as $\int_a^b w \left( \pi \left( t \right) \right)  || (\pi)' \left( t \right) ||_1 dt.$	
%\end{definition}

\begin{observation}[\cite{buchin:phd}]\label{obs:dualpaths}
	Let $\pi$ be a shortest weighted monotone path between $\mathfrak{s}$ and~$\mathfrak{t}$ inside~$P$. Then, we have $|\pi|_w = \mathcal{F}_{\mathcal{S}} \left( T_1, T_2 \right)$.
\end{observation}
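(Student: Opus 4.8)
The plan is to establish a bijective correspondence between monotone matchings $(\alpha_1,\alpha_2)$ of $T_1$ and $T_2$ on the one hand, and monotone paths $\pi$ from $\mathfrak{s}$ to $\mathfrak{t}$ in the parameter space $P$ on the other, and then to show that under this correspondence the cost functional $\mathcal{F}_{\mathcal{S},(\alpha_1,\alpha_2)}(T_1,T_2)$ equals the weighted length $|\pi|_w$. Taking the infimum on both sides then yields the claim. First I would make the correspondence explicit: given a monotone matching $(\alpha_1,\alpha_2): [0,n] \to [0,n]^2$, define $\pi := (\alpha_1, \alpha_2): [0,n] \to P$. Since each $\alpha_i$ is continuous with $\alpha_i(0)=0$, $\alpha_i(n)=n$ and is non-decreasing, $\pi$ is a continuous path with $\pi(0)=\mathfrak{s}$, $\pi(n)=\mathfrak{t}$ that is $xy$-monotone in the sense of the Preliminaries. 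Conversely any such $\pi$ decomposes into its two coordinate functions, which form a monotone matching. (A small technical point: one should restrict attention to parametrizations that are, say, Lipschitz / absolutely continuous so that the integrals below are well defined; this is harmless since the infimum is unchanged by reparametrization within such a class.)

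The second step is the pointwise identity of the integrands. By definition $w(\pi(t)) = d_2(T_1(\alpha_1(t)), T_2(\alpha_2(t)))$, which is exactly the leash length factor appearing in $\mathcal{F}_{\mathcal{S},(\alpha_1,\alpha_2)}$. For the arc-length factor, observe that $\pi'(t) = (\alpha_1'(t), \alpha_2'(t))$, and since $\pi$ is monotone both components are non-negative, so
\begin{equation*}
\|\pi'(t)\|_1 = \alpha_1'(t) + \alpha_2'(t).
\end{equation*}
On the other hand, by the chain rule $(T_i \circ \alpha_i)'(t) = (T_i)'(\alpha_i(t))\,\alpha_i'(t)$, and since $\alpha_i'(t) \geq 0$ this gives $\|(T_i\circ\alpha_i)'(t)\|_2 = \|(T_i)'(\alpha_i(t))\|_2 \cdot \alpha_i'(t)$. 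Here I would invoke the normalization $|T_1| = |T_2| = n$ together with the refinement of $P$ into rows and columns whose heights and widths equal the segment lengths of $T_1$ and $T_2$: this is precisely the statement that the parameter space is set up so that, after the arc-length reparametrization of each $T_i$ already assumed in the problem definition, $\|(T_i)'(s)\|_2 = 1$ for almost every $s$. Consequently $\|(T_i\circ\alpha_i)'(t)\|_2 = \alpha_i'(t)$, and therefore
\begin{equation*}
\|(T_1\circ\alpha_1)'(t)\|_2 + \|(T_2\circ\alpha_2)'(t)\|_2 = \alpha_1'(t) + \alpha_2'(t) = \|\pi'(t)\|_1 .
\end{equation*}
Multiplying by the common leash factor $w(\pi(t))$ and integrating over $[0,n]$ shows $\mathcal{F}_{\mathcal{S},(\alpha_1,\alpha_2)}(T_1,T_2) = |\pi|_w$ for every matching.

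Finally, since the correspondence $(\alpha_1,\alpha_2) \leftrightarrow \pi$ is a bijection between monotone matchings and monotone $\mathfrak{s}$--$\mathfrak{t}$ paths (and the weighted length of a path is invariant under monotone reparametrization of the path, which matches the fact that the integral Fréchet cost is invariant under simultaneously reparametrizing $\alpha_1,\alpha_2$), taking the infimum over matchings on the left equals taking the infimum over monotone paths on the right: $\mathcal{F}_{\mathcal{S}}(T_1,T_2) = \inf_\pi |\pi|_w$. A shortest weighted monotone path $\pi$ from $\mathfrak{s}$ to $\mathfrak{t}$ attains this infimum, so $|\pi|_w = \mathcal{F}_{\mathcal{S}}(T_1,T_2)$. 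I expect the main obstacle to be purely a matter of care rather than depth: one must argue that the infimum over all matchings is not affected by restricting to a regularity class where the chain rule and the integral identities hold (a standard reparametrization / approximation argument), and one must be explicit about why the row/column refinement of $P$ encodes the unit-speed assumption on the $T_i$, so that $\|\pi'\|_1$ genuinely records $|T_1|$- and $|T_2|$-arc length and not parameter length. Everything else is the pointwise computation above.
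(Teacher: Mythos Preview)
Your argument is correct and is the natural one: the bijection between monotone matchings and monotone $\mathfrak{s}$--$\mathfrak{t}$ paths together with the unit-speed normalisation of $T_1,T_2$ (encoded by the row/column widths of $P$) makes the two integrands agree pointwise. Note, however, that the paper does not prove this statement at all; it is recorded as an observation attributed to Buchin's thesis~\cite{buchin:phd} and used as a black box, so there is nothing in the paper to compare your proof against beyond confirming that your reading of the setup (arc-length parametrisation of the curves, $L_1$ length in $P$) matches the authors' conventions in Section~\ref{sec:prelim}.
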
	
	
%\subsection{Free space, ellipses, and free space axes}
	
	Motivated by Observation~\ref{obs:dualpaths}, we approximate $\mathcal{F}_S(T_1,T_2)$ by approximating the length of a shortest weighted monotone path $\pi \subset P$ connecting $\mathfrak{s}$ and $\mathfrak{t}$. 
	Let $\delta \geq 0$ be chosen arbitrarily but fixed. Inside each parameter cell~$C$, the union of all points $p$ with $w(p) \leq \delta$ is equal to the intersection of an ellipse $\mathcal{E}$ with $C$.  Observe that $\mathcal{E}$ can be computed in constant time~\cite{alt:computing}. $\mathcal{E}$ is characterized by two focal points $F_1$ and $F_2$ and a radius $r$ such that $\mathcal{E} = \{ x \in \mathbb{R}^2 \mid d_2 (x,F_1) + d_2 (x,F_2) \leq r \}$. The two axes $\ell$ (monotone) and $\hbar$ (not monotone) of $\mathcal{E}$, called the \emph{free space axes}, are defined as the line induced by $F_1$ and $F_2$ and the bisector between $F_1$ and $F_2$. If $\mathcal{E}$ is a disc, $\ell$ and $\hbar$ are the lines with gradients $1$ and $-1$ and which cross each other in the middle of $\mathcal{E}$. Note that the axes are independent of the value of $\delta$. 
	
\begin{figure}[ht]
  \begin{center}
    \begin{tabular}{ccccc}
      \includegraphics[height=1.8cm]{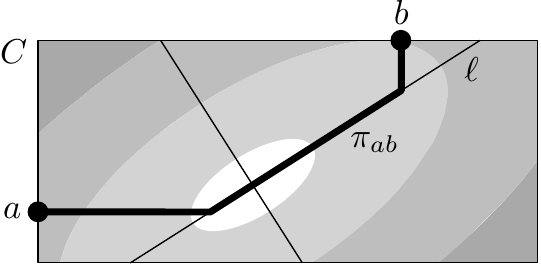} & &\\
     % \includegraphics[height=4.8cm]{degenerate}&&
      %\includegraphics[height=4.8cm]{degenerateCH}\\ 
      %{\small (a) Angle constrained Voronoi cells ...} & &
      %{\small (b) degenerate} &&
      %{\small (c) degenerate convex hull}
       %{\small a direction vector of $t_i$ \;\;\;\;\;\;\;\;\;\;\;\;\;\;\;\;\;\;\;\;\;\;\;\;\;\;\;.} & &
      %{\small \;\;\;\;\;\;$\mathcal{CH}$ with the parameter cell $C_{i,j}$ \;\;\;\;.}\\ 
      \end{tabular}
  \end{center}
  \vspace*{-12pt}
  \caption{A weighted shortest $xy$-monotone path $\pi_{ab}$ between two points $a,b \in C$, where $a \leq_{xy} b$. The subpaths of  $\pi_{ab}$ that do not lie on $\ell$ are minimal.}
  \label{fig:shortestVSaxis}
\end{figure}	
	
	To approximate $|\pi|_w$ efficiently we make the following observation that is of independent interest: Let $a,b$ be two parameter points that lie in the same parameter cell $C$ such that~$a \leq_{xy} b$. The shortest weighted monotone path $\pi_{ab}$ between $a$ and $b$ (that induces an optimal solution for the integral Fr\'{e}chet distance) is the monotone path between $a$ and $b$ that maximizes its subpaths that lie on $\ell$  (see Figure~\ref{fig:shortestVSaxis} and Lemma~\ref{lem:key}). Another interesting aspect of $\pi_{ab}$ is that it also provides an optimal matching for the partial Fr\'{e}chet similarity (between the corresponding (sub-)segments) for all leash lengths,  as $\pi \cap \mathcal{E}_{\delta}$ has the maximal length for all $\delta \geq 0$, where $\mathcal{E}_{\delta} := \mathcal{E}$ for a specific $\delta \geq 0$. Next, we discuss our algorithms.

%\section{Our approach}\label{sec:pre}
\section{An Algorithm for Approximating Integral Fr\'{e}chet Distance}\label{sec:pre}
%First we give a high level description of our algorithm, see Subsection~\ref{subsec:outline} followed by detailed description of it, see Subsection~\ref{subsec:description}. Finally, we analyze our algorithm regarding approximation quality and running time, see Subsection~\ref{subsec:analsyis}.
	%
%\subsection{Outline of the Algorithm}\label{subsec:outline}
%
	We approximate the length of a shortest weighted monotone path between $\mathfrak{s}$ and $\mathfrak{t}$ as follows: We construct two weighted, directed, geometric graphs $G_1 = (V_1,E_1,w_1)$ and $G_2 = (V_2,E_2,w_2)$ that lie embedded in $P$ such that $\mathfrak{s},\mathfrak{t} \in V_1$ and $\mathfrak{s},\mathfrak{t} \in  V_2$. Then, in parallel, we compute for $G_1$ and $G_2$ the lengths of the shortest weighted paths between $\mathfrak{s}$ and $\mathfrak{t}$. Finally, we output the minimum of both values as an approximation for $\mathcal{F}_S(T_1,T_2)$.
	
%\subsection{Description of the Algorithm}\label{subsec:description} First we give the construction of $G_1$ and then the construction of $G_2$. For both, we need the following definitions and notations: 

We introduce some additional terminology. 	A \emph{geometric graph $G = (V,E)$} is a graph where each $v \in V$ is assigned to a point $p_v \in P$, its \emph{embedding}. The \emph{embedding} of an edge $(u,v) \in E$ (into $P$) is $p_{u}p_{v}$. The \emph{embedding of $G$ (into $P$)} is $\bigcup_{(u,v) \in E} p_up_v$. For $v \in V$ and $e \in E$, we denote simultaneously the vertex $v \in V$, the edge $e \in E$, and the graph $(V,E)$ and their embeddings by $v$, $e$, and $G$, respectively. $G$ is \emph{monotone (directed)} if $p_u \leq_{xy} p_v$ holds for all $(u,v) \in E$.  Let $R \subseteq P$ be an arbitrarily chosen axis aligned rectangle with height $h$ and width $b$. The \emph{grid (graph) of $R$ with mesh size $\sigma$} is the geometric graph that is induced by the segments that are given as the intersections of $R$ with the following lines: Let $h_1, \dots,h_{k_1}$ be the $\lceil \frac{h}{\sigma} \rceil+1$ equidistant horizontal lines and let $b_1, \dots,b_{k_2}$ be the $\lceil \frac{b}{\sigma} \rceil+1$ equidistant vertical lines such that $\partial R = R \cap (h_1 \cup h_{k_1} \cup b_1 \cup b_{k_2})$. \\ \\
\noindent {\bf Construction of $G_1$:}  Let $\mu$ be the length of a smallest segment from $T_1$ and $T_2$. We construct $G_1=(V_1,E_1) \subset P$ as the monotone directed grid graph of $P$ with a mesh size of $\frac{\varepsilon \mu}{40000 (|T_1| + |T_2|)}$. Furthermore, we set $w_1((u,v)):=|uv|_w$ for all $(u,v) \in E_1$.\\ \\
\noindent{\bf Construction of $G_2$: }  For $u \in P$ and $r \geq 0$, we consider the ball $B_r(u)$ with its center at $u$ and a radius of $r$ w.r.t. the $L_{\infty}$ metric. 
For the construction of $G_2$ we need the free space axes of the parameter cells and so called grid balls:

\begin{definition}\label{def:gridball}
	Let $u \in P$ and $r \geq 0$ be chosen arbitrarily. The \emph{grid ball $G_r(u)$} is defined as the grid of $B_r(u)$ that has a mesh size of $\frac{\varepsilon}{456}w(u)$. We say~$G_r(u)$ \emph{approximates}~$B_r(u)$.
\end{definition}

We define $G_2$ as the monotone directed graph that is induced by the arrangement that is made up of the following components restricted to $P$: 

\noindent\begin{minipage}{0.52\linewidth}\vspace*{2ex}

	\begin{itemize}
		\item (1) All monotone free space axes restricted to their corresponding parameter cell.  
		\item (2) All grid balls $G_{62w(u)}(u)$ for $u := \arg \min_{p \in e}w(u)$ and any parameter edge $e$. 
		\item (3) The segments $\mathfrak{s}c_{\mathfrak{s}}$ and $\mathfrak{t}c_{\mathfrak{t}}$ if the parameter cells $C_{\mathfrak{s}}$ and $C_{\mathfrak{t}}$ that contain $\mathfrak{s}$ and $\mathfrak{t}$ are intersected by their corresponding monotone free space axes $\ell_{\mathfrak{s}}$ and $\ell_{\mathfrak{t}}$, where $c_{\mathfrak{s}}$ and $c_{\mathfrak{t}}$ are defined as the  bottom-leftmost and  top-rightmost point of $\ell_{\mathfrak{s}} \cap C_{\mathfrak{s}}$ and $\ell_{\mathfrak{t}} \cap C_{\mathfrak{t}}$.
	\end{itemize}
	
\end{minipage}
\begin{minipage}{0.4\linewidth}\vspace*{2ex}
\begin{center}
    \begin{tabular}{p{6cm}}
      \includegraphics[height=2.3cm]{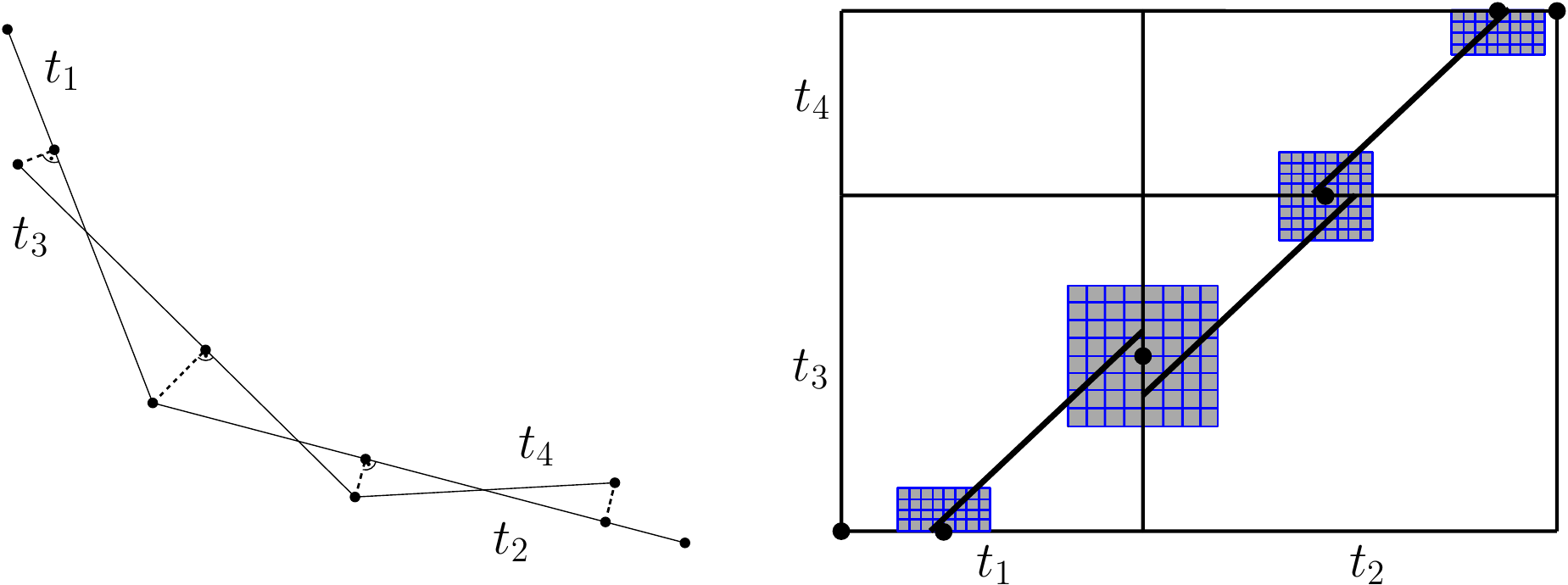}\\ 
      {\small Exemplified construction of $G_2$ for two given polygonal curves $T_1$ and $T_2$.  For simplicity we only illustrate four grid balls (with reduced radii) and the corresponding point pairs from $T_1 \times T_2$.}
    \end{tabular}
  \end{center}
\end{minipage}\vspace*{2ex}

Finally, we set $w_2((v_1,v_2)):= |v_1v_2|_w$ for all $(v_1,v_2) \in E_2$. 
For each edge $e \in G$ we choose the point $u \in e$ as the center of the corresponding grid ball because the free space axes of the parameters cells adjacent to $e$ lie close to $u$. \\  \\
\noindent{\bf Analysis of our approach: }
%\label{subsec:analsyis}
Since $G_1$ is monotone and each edge $(p_1,p_2) \in E_1$ is assigned to $|p_1p_2|_w$, we obtain that for each path $\widetilde{\pi} \subset G_1$ between~$\mathfrak{s}$ and $\mathfrak{t}$ holds $|\pi|_w \leq |\widetilde{\pi}|_w$.  The same argument applies to $G_2$. Hence, we still have to ensure that there is a path $\widetilde{\pi} \subset G_1$ or $\widetilde{\pi} \subset G_2$ such that $|\widetilde{\pi}|_w \leq (1+\varepsilon)|\pi|_w$. We say that a path $\pi \subset P$ is \emph{low} if $w(p) \leq \frac{\mu}{100}$ holds for all $p \in \pi$.  For our analysis, we show the following:\\
	{\bf Case A: }  There is a $\widetilde{\pi} \subset G_1$ with $|\widetilde{\pi}|_w \leq (1+\varepsilon)|\pi|_w$ if there is a shortest path $\pi \subset P$ that is not low (see Subsection~\ref{subsubsec:anaG1}).\\
	{\bf Case B:}  Otherwise, there is a $\widetilde{\pi} \subset G_2$ with $|\widetilde{\pi}|_w \leq (1+\varepsilon)|\pi|_w$ (see Subsection~\ref{subsubsec:anaG2}).

\subsection{Analysis of Case A}\label{subsubsec:anaG1} 

	In this subsection, we assume that there is a shortest path $\pi$ between $\mathfrak{s}$ and $\mathfrak{t}$ that is not low, i.e., there is a $p \in \pi$ with $w(p) \geq \frac{\mu}{100}$. Furthermore, for any $o,p\in \pi$, we denote the subpath of~$\pi$ which is between $o$ and~$p$ by $\pi_{op}$. 
First we prove a lower bound for $|\pi|_w$ (Lemma~\ref{lem:lowerBoundForSummedFDcase1}). This lower bound ensures that the approximation error that we make for a path in $G_1$ is upper-bounded by $\varepsilon |\pi|_w$ (Lemma~\ref{lem:apprQualityG1}).

A \emph{cell $C$ of $G_1$} is the convex hull of four vertices $v_1,v_2,v_3,v_4 \in V_1$ such that $C \cap V_1 = \{ v_1,v_2,v_3,v_4 \}$. As the mesh size of $G_1$ is $\frac{\varepsilon \mu}{40000 (|T_1| + |T_2|)}$, we have $d_1(p_1,p_2) \leq \frac{\varepsilon \mu}{20000 (|T_1| + |T_2|)}$ for any two points $p_1$ and $p_2$ that lie in the same cell of $G_1$. The following property of $w(\cdot)$ is the key in the analysis of the weighted shortest path length of $G_1$:

\begin{definition}[\cite{funke:smooth}]\label{def:lip}
		$f: P \rightarrow \mathbb{R}_{\geq 0}$ is $1$-Lipschitz if $f(x) \leq f(y) + d_1(x,y)$ for all $x,y \in P$~\footnote{The requirement $|f(x)-f(y)|\leq d_1(x,y)$ is also occasionally used to define $1$-Lipschitz continuity. Note that this alternative definition is equivalent to Definition~\ref{def:lip}.}.
	\end{definition}

\begin{lemma}\label{lem:lip}
	$w(\cdot)$ is $1$-Lipschitz w.r.t. $L_1$.
\end{lemma}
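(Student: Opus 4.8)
The plan is to prove this directly from the definition of $w$. Recall $w((x,y)) = d_2(T_1(x), T_2(y))$, so for two parameter points $p = (p.x, p.y)$ and $q = (q.x, q.y)$ in $P$ we want to show
\[
  w(p) \;\le\; w(q) + d_1(p,q) \;=\; w(q) + |p.x - q.x| + |p.y - q.y|.
\]
First I would write $w(p) = d_2(T_1(p.x), T_2(p.y))$ and insert the intermediate point $T_2(q.y)$, applying the triangle inequality in Euclidean space:
\[
  d_2(T_1(p.x), T_2(p.y)) \;\le\; d_2(T_1(p.x), T_2(q.y)) + d_2(T_2(q.y), T_2(p.y)).
\]
Then I would insert $T_1(q.x)$ into the first term and apply the triangle inequality once more, obtaining
\[
  w(p) \;\le\; d_2(T_1(q.x), T_2(q.y)) + d_2(T_1(p.x), T_1(q.x)) + d_2(T_2(p.y), T_2(q.y)) \;=\; w(q) + d_2(T_1(p.x), T_1(q.x)) + d_2(T_2(p.y), T_2(q.y)).
\]

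The remaining step is to bound $d_2(T_1(p.x), T_1(q.x))$ by $|p.x - q.x|$ and similarly for $T_2$. This is where the arc-length parametrization assumption enters: since $|T_1| = |T_2| = n$ and the curves are parametrized over $[0,n]$, the curves are parametrized by arc length, i.e.\ $\|(T_1)'(t)\|_2 = \|(T_2)'(t)\|_2 = 1$ almost everywhere. Hence for $x_1 \le x_2$,
\[
  d_2(T_1(x_1), T_1(x_2)) \;\le\; \int_{x_1}^{x_2} \|(T_1)'(t)\|_2\, dt \;=\; x_2 - x_1,
\]
because the Euclidean distance between two points on a curve is at most the length of the subcurve joining them; the same holds for $T_2$. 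Substituting these two bounds yields $w(p) \le w(q) + |p.x - q.x| + |p.y - q.y| = w(q) + d_1(p,q)$, which is exactly $1$-Lipschitzness w.r.t.\ $L_1$ as in Definition~\ref{def:lip}.

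I do not expect any real obstacle here; the proof is a routine two-fold application of the triangle inequality plus the arc-length bound. The only point deserving a sentence of care is making explicit that the standing normalization $|T_1| = |T_2| = n$ with domain $[0,n]$ means the parametrizations have unit speed, so that displacement along a parameter axis upper-bounds the corresponding Euclidean displacement on the curve; this is precisely what converts the Euclidean triangle inequalities into an $L_1$ bound in parameter space. One could alternatively phrase the whole argument using the identity $\|ab\|_1 = \int_{a.x}^{b.x}\|(T_1)'\|_2 + \int_{a.y}^{b.y}\|(T_2)'\|_2$ already recorded in Section~\ref{sec:prelim}, which makes the unit-speed dependence transparent without a separate remark.
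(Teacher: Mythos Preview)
Your proof is correct and follows essentially the same approach as the paper: both apply the triangle inequality in Euclidean space to insert the intermediate points $T_1(q.x)$ and $T_2(q.y)$, and then use the arc-length parametrization to bound the Euclidean displacement along each curve by the corresponding parameter difference. Your write-up is in fact a bit more explicit about why the unit-speed assumption is needed, which is a minor improvement over the paper's somewhat terse version.
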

\begin{proof}
	Let $(a_1,a_2), (b_1,b_2) \in P$ be chosen arbitrarily. %Let $\varpi$ ($\varpi'$) be defined as $|a - b|$ ($|a' - b'|$, respectively). 
	The subcurves $t_{T_1(a_1)T_2(b_2)} \subset T_1$ between $T_1(a_1)$ and $T_2(b_2)$ and $t_{T_2(a_2)T_2(b_2)} \subset T_2$ between $T_2(a_2)$ and $T_2(b_2)$ have lengths no larger than $|a_1 - b_2|$ and $|a_2 - b_2|$. Thus $d_2 (T_1(a_1), T_1(b_1)) \leq |a_1 - b_1|$ and $d_2 (T_2(a_2), T_2(b_2)) \leq |a_2 - b_2|$. Furthermore, $w((a_1,a_2))$ is equal to $d_2 (T_1(a_1), T_2 (a_2) )$. By triangle inequality,  $w((b_1,b_2)) = d_2 \left( T_1(b_1), T_2(b_2) \right) \leq d_2( T_2(b_2), T_2(a_2) )+ d_2 (T_2(a_2), T_1(a_1) ) + d_2 (T_1(a_1), T_1(b_1) ) \leq d_1 ((a_1,a_2), (b_1,b_2)) + w ((a_1,a_2))$, because $d_2(T_2(b_2), T_2(a_2)) = |b_2 - a_2|$, $d_2(T_2(a_2), T_1(a_1)) = w((a_1,a_2))$, $d_2(T_1(a_1), T_1(b_1)) = |b_1 - a_1|$, and $d_1((a_1,a_2), (b_1,b_2)) = |b_1-a_1| + |b_2-a_2|$. %As $(a_1,a_2), (b_1,b_2) \in P$ were chosen arbitrarily, this concludes the proof.
\end{proof}

	Lemma~\ref{lem:lip} allows us to prove the following lower bound for the weighted length of $\pi$.

\begin{lemma}\label{lem:lowerBoundForSummedFDcase1}
	$|\pi|_w \geq \frac{\mu}{20000}$
\end{lemma}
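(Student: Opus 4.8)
The plan is to exploit the hypothesis that $\pi$ is not low, i.e.\ there exists $p\in\pi$ with $w(p)\ge\frac{\mu}{100}$, together with the $1$-Lipschitz property of $w$ established in Lemma~\ref{lem:lip}. The idea is that $w$ cannot drop from a value of at least $\frac{\mu}{100}$ at $p$ down to $0$ (which it attains at $\mathfrak{s}$, since $w(\mathfrak{s})=d_2(T_1(0),T_2(0))$ need not be $0$ in general, so one should instead argue toward whichever endpoint is convenient, or simply argue along a short subpath around $p$) without traversing $L_1$-length proportional to $\frac{\mu}{100}$; along that stretch $w$ stays bounded below by a constant fraction of $\frac{\mu}{100}$, and the weighted length of $\pi$ integrates this positive weight against the arc length, yielding a quantitative lower bound.

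Concretely, first I would pick the point $p\in\pi$ with $w(p)\ge\frac{\mu}{100}$. Consider the portion of $\pi$ consisting of all points $q$ with $w(q)\ge\frac{\mu}{200}$ that lie in the connected component of $p$ in this superlevel set along $\pi$. By Lemma~\ref{lem:lip}, for any such $q$ we have $\frac{\mu}{100}\le w(p)\le w(q)+d_1(p,q)$, so $d_1(p,q)\ge w(p)-w(q)$; this by itself does not immediately give a long subpath. The cleaner route: walk along $\pi$ away from $p$ in one direction until $w$ first drops to $\frac{\mu}{200}$ (if it never does, $\pi$ has a whole sub-arc of weight $\ge\frac{\mu}{200}$, which is even better); call the first such point $p'$. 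Then again by $1$-Lipschitzness $\frac{\mu}{100}-\frac{\mu}{200}=\frac{\mu}{200}\le w(p)-w(p')\le d_1(p,p')\le |\pi_{pp'}|$, so the subpath $\pi_{pp'}$ has $L_1$-length at least $\frac{\mu}{200}$. On every point $q\in\pi_{pp'}$ we have $w(q)\ge\frac{\mu}{200}$ by the choice of $p'$ as the first exit point. Therefore
\[
|\pi|_w\ \ge\ |\pi_{pp'}|_w\ =\ \int_{\pi_{pp'}} w(q)\,\|(\pi)'(t)\|_1\,dt\ \ge\ \frac{\mu}{200}\cdot|\pi_{pp'}|\ \ge\ \frac{\mu}{200}\cdot\frac{\mu}{200}\ =\ \frac{\mu^2}{40000}.
\]
Since we have assumed $|T_1|=|T_2|=n$ and $\mu$ is the length of a smallest segment, $\mu\le n$ but that does not directly help; rather one uses $\mu\le n$ is irrelevant and instead notes the stated bound wants $\frac{\mu}{20000}$, so one would adjust the threshold (take the superlevel set at $\frac{\mu}{200}$ giving exit bound $\frac{\mu}{200}$ on length and weight $\ge \frac{\mu}{200}$, which already gives $\frac{\mu^2}{40000}$; if instead one wants linear-in-$\mu$ one must invoke that a path of weighted length can be compared against $\mu$ directly — here I suspect the intended statement silently uses $\mu\le 2$ or a normalization, or the exponent on $\mu$ in the lemma is as written because some earlier normalization forces $\mu\le 1$; in any case the mechanism is identical and only the constants change).

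The main obstacle I anticipate is purely bookkeeping: making sure the point $p'$ exists on at least one side of $p$ along $\pi$ (it does, because $\pi$ reaches the endpoints $\mathfrak s,\mathfrak t$ where $w$ is bounded — or, if $w$ happens to stay large along all of $\pi$, the bound only improves since then the whole of $\pi$ contributes), and tracking the constants so they match the $\frac{\mu}{20000}$ (respectively $\frac{\mu^2}{40000}$) claimed, which ties into the constant $40000$ chosen for the mesh size of $G_1$. No deep idea is needed beyond Lemma~\ref{lem:lip}; the proof is a one-line integral estimate once the sub-arc of controlled weight and controlled length has been isolated.
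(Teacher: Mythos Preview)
Your approach is essentially the same as the paper's: the paper intersects $\pi$ with a small ball $B_{\mu/100}(p)$ around the high-weight point and invokes the $1$-Lipschitz property of $w$ (Lemma~\ref{lem:lip}) to lower-bound the weighted length of that sub-arc, which is precisely your ``walk along $\pi$ until $w$ drops to $\mu/200$'' argument phrased geometrically. Your hesitation about $\mu$ versus $\mu^2$ is justified---both arguments naturally produce a bound of order $\mu^2$, and the paper's one-line proof exhibits the same tension---but the mechanism is identical.
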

\begin{proof}
	Let $p \in \pi$ such that $w(p) \geq \frac{\mu}{100}$. Let $\psi := \pi \cap B_{\frac{\mu}{100}}(p)$. We have $|\psi|_w \geq \frac{\mu}{200}$ because $w(\cdot)$ is $1$-Lipschitz. Furthermore, $\psi \subset \pi$ implies $|\psi|_w \leq |\pi|_w$ which yields $\frac{\mu}{200} \leq |\pi|_w$.
\end{proof}

	%By combining Lemma~\ref{lem:lip} and~\ref{lem:lowerBoundForSummedFDcase1} we obtain the following:

\begin{lemma}\label{lem:apprQualityG1}
	There is a path $\widetilde{\pi} \subset G_1$ that connects $\mathfrak{s}$ and $\mathfrak{t}$ such that $|\widetilde{\pi}|_w \leq (1 + \varepsilon) |\pi|_w$.
\end{lemma}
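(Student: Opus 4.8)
The plan is to take the shortest non-low path $\pi$ and replace it by a path $\widetilde{\pi} \subset G_1$ that ``snaps'' $\pi$ onto the grid, then bound the weighted length of $\widetilde{\pi}$ by comparing it to $|\pi|_w$ cell by cell. First I would produce the combinatorial skeleton of $\widetilde{\pi}$: walk along $\pi$ and record the sequence of $G_1$-cells it visits; between consecutive cell boundaries, $\pi$ travels some $L_1$-length, and I want to route $\widetilde{\pi}$ through the corresponding grid vertices in a monotone staircase fashion. Concretely, I would map each point $\pi(t)$ to the ``dominating'' grid vertex of its cell (round each coordinate up to the next parameter-line-of-$G_1$), which gives a monotone sequence of vertices; $\widetilde{\pi}$ is the monotone staircase path through that sequence, and since $G_1$ is the full monotone grid graph, every such staircase edge is present in $E_1$. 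Monotonicity of $\widetilde{\pi}$ follows because rounding up preserves $\leq_{xy}$.

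Next I would bound the length blow-up. The $L_1$-length of $\widetilde{\pi}$ exceeds that of $\pi$ by at most the perimeter-rounding at the two endpoints plus interior slack, but the cleanest route is: $|\widetilde\pi|_1 \le |\pi|_1 + 2\sqrt2\cdot(\text{diam of one }G_1\text{-cell})\cdot(\text{number of cells})$ is \emph{not} what I want, since the number of cells is large; instead I would argue the staircase has $|\widetilde\pi|_1 \le |\pi|_1 + (\text{one cell's width}+\text{height})$, because the monotone staircase through the rounded-up vertices has the same total horizontal extent and the same total vertical extent as $\pi$ up to one extra cell step at the very end (monotone rounding telescopes). So $|\widetilde{\pi}|_1 \le |\pi|_1 + \frac{\varepsilon\mu}{10000(|T_1|+|T_2|)}$, using the cell-diameter bound stated just before the lemma. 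For the \emph{weighted} length I would use the $1$-Lipschitz property of $w$ (Lemma~\ref{lem:lip}): for any point $\widetilde{\pi}(t)$, its weight differs from the weight of the nearby point of $\pi$ in the same cell by at most the cell $L_1$-diameter $\frac{\varepsilon\mu}{20000(|T_1|+|T_2|)}$, and $w$ along $\pi$ is bounded by $|\pi|_w$-related quantities only crudely — so instead I would write $|\widetilde\pi|_w = \int w(\widetilde\pi)\,\|\widetilde\pi'\|_1 \le \int (w(\pi\text{-partner}) + \tfrac{\varepsilon\mu}{20000(|T_1|+|T_2|)})\,\|\widetilde\pi'\|_1$, split into the ``main term'' $\le (1+o(1))|\pi|_w$ and an ``additive term'' $\le \frac{\varepsilon\mu}{20000(|T_1|+|T_2|)}\cdot|\widetilde\pi|_1 \le \frac{\varepsilon\mu}{20000(|T_1|+|T_2|)}\cdot 2(|T_1|+|T_2|) = \frac{\varepsilon\mu}{10000}$, which by Lemma~\ref{lem:lowerBoundForSummedFDcase1} ($|\pi|_w \ge \frac{\mu}{20000}$) is at most $2\varepsilon|\pi|_w$. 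Rescaling $\varepsilon$ by the constant factors absorbed along the way yields $|\widetilde{\pi}|_w \le (1+\varepsilon)|\pi|_w$.

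The main obstacle I expect is controlling the ``main term'' $\int w(\pi\text{-partner})\,\|\widetilde\pi'\|_1$ against $|\pi|_w = \int w(\pi)\,\|\pi'\|_1$ without an extra length blow-up: the naive pairing of a point on $\widetilde\pi$ with ``a point of $\pi$ in the same cell'' is not an isometry, so I must choose the correspondence carefully — e.g. parametrize both $\pi$ and $\widetilde\pi$ by a common monotone parameter so that corresponding points always lie in a common $G_1$-cell, and then bound $\|\widetilde\pi'\|_1$ against $\|\pi'\|_1$ segment-by-segment using that the staircase is monotone and $|T_1|=|T_2|=n$ keeps all lengths commensurate. Handling the endpoints $\mathfrak{s},\mathfrak{t}$ (which are grid vertices of $G_1$ by construction, so actually no rounding is needed there) removes one annoyance. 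The bookkeeping of constants ($40000$, $20000$, $10000$) is routine once the two estimates — additive $L_1$-length increase of one cell, and additive weight increase of one cell-diameter pointwise — are in place, and both follow directly from the mesh size of $G_1$ and Lemma~\ref{lem:lip}.
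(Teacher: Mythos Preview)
Your approach is the paper's: snap $\pi$ to a monotone staircase $\widetilde\pi\subset G_1$, bound the pointwise weight difference by one cell-diameter via Lemma~\ref{lem:lip}, and convert the additive error to a multiplicative one via Lemma~\ref{lem:lowerBoundForSummedFDcase1}. The ``main obstacle'' you flag, however, is not one: any two monotone paths from $\mathfrak{s}$ to $\mathfrak{t}$ have \emph{identical} $L_1$-length $d_1(\mathfrak{s},\mathfrak{t})=|T_1|+|T_2|$, so $|\widetilde\pi|=|\pi|$ exactly (your telescoping argument already gives this, since both endpoints are grid vertices and the rounding is exact there). The paper exploits this by parametrizing both paths by $L_1$-arc-length---equivalently, pairing each $p\in\pi$ with the unique point $\widetilde p\in\widetilde\pi$ on the same anti-diagonal $\{x+y=p.x+p.y\}$---which gives $\|(\widetilde\pi)'(t)\|_1=\|(\pi)'(t)\|_1$ identically and keeps partners in a common $G_1$-cell. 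With this pairing the ``main term'' is exactly $|\pi|_w$, and the additive term is $\tfrac{\varepsilon\mu}{20000(|T_1|+|T_2|)}\cdot(|T_1|+|T_2|)=\tfrac{\varepsilon\mu}{20000}\le\varepsilon|\pi|_w$ directly, with no rescaling of $\varepsilon$ needed.
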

\begin{proof} Starting from $\mathfrak{s}$, we construct $\widetilde{\pi}$ inductively as follows: If $\pi$ crosses  a vertical
	
\noindent\begin{minipage}{0.8\linewidth}\vspace*{0.5ex}
 (horizontal) parameter line next, $\widetilde{\pi}$ goes one step to the right (top). For $p \in \pi$ let $h_p$ be the line with gradient $-1$ such that $p \in h_p$ (see the figure on the right). As $\pi$ and $\widetilde{\pi}$ are monotone, $\widetilde{p} := h_p \cap \widetilde{p}$ is unique and well defined. For all $p$,  $p$ and $\widetilde{p}$ lie in the same cell of $G_1$ and thus, $w(\widetilde{p}) \leq w(p) + \frac{\varepsilon \mu}{20000 (|T_1| + |T_2|)}$. This implies $|\widetilde{\pi}|_w \leq (1+\varepsilon) |\pi|_w$ because $|\widetilde{\pi}| = |\pi|$. To be more precise, we consider $\widetilde{\pi}, \pi: [0,1] \rightarrow P$ to be parametrized such that $d_1(\mathfrak{s},\widetilde{\pi}(t)) = d_1(\mathfrak{s},\widetilde{\pi}(t)) = t d_1( \mathfrak{s}, \mathfrak{t})$. We obtain, $||(\widetilde{\pi})'(t)||_1 =d_1(\mathfrak{s},\mathfrak{t})= ||(\pi)'(t)||_1$ for all $t \in [0,1]$.
\end{minipage}
\begin{minipage}{0.2\linewidth}
  \begin{center}
    \includegraphics[height=3cm]{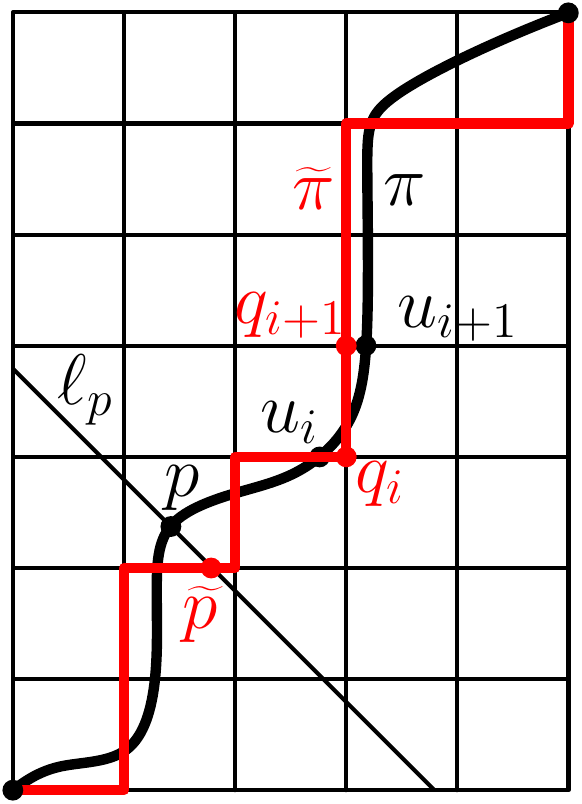}
  \end{center}
\end{minipage}\vspace*{0.5ex}
	  Furthermore, the above implies $w(\widetilde{\pi}(t)) \leq w(\pi(t)) + \frac{\varepsilon \mu}{20000 (|T_1| + |T_2|)}$ $(\star)$. Thus:
	\begin{eqnarray*}
		|\widetilde{\pi}|_w & = & \int_{0}^1 w (\widetilde{\pi}(t))  || (\widetilde{\pi})'(t)||_1\ dt \stackrel{(\star)}{\leq} \int_{0}^1 \left( w(\pi(t)) + \frac{\varepsilon \mu}{20000 (|T_1| + |T_2|)} \right)  || (\pi)'(t)||_1\ dt\\
		& = & \int_{0}^1  w(\pi(t))  || (\pi)'(t)||_1\ dt + \frac{\varepsilon \mu \int_{0}^1  1 \ || (\pi)'(t)||_1\ dt}{20000 (|T_1| + |T_2|)}\\
		& =& |\pi|_w + \frac{\varepsilon \mu}{20000} \stackrel{\textit{Lemma~\ref{lem:lowerBoundForSummedFDcase1}}}{\leq}  |\pi|_w + \varepsilon |\pi|_w = (1+\varepsilon) |\pi|_w.
	\end{eqnarray*}
\end{proof}

\subsection{Analysis of Case B}\label{subsubsec:anaG2}
 In this subsection, we assume that there is a monotone low path $\pi$ between $\mathfrak{s}$ and $\mathfrak{t}$. 
%First we make some basic, local observations for weighted shortest paths inside $P$. After that we show how to apply these observations in different configurations. Finally, we assemble these results to a general $(1+\varepsilon)$-approximation quality.
%
%\paragraph{Shortest weighted paths inside one and two adjacent parameter cell(s)}
	%
First, we make a key observation that is also of independent interest. It states that a shortest path (that is not necessarily low) inside a parameter cell is uniquely determined by its monotone free space axis.
\begin{lemma}\label{lem:key}
	Let $C$ be an arbitrarily chosen parameter cell and $a, b \in C$ such that $a \leq_{xy} b$. Furthermore, let $\ell$ be the monotone free space axis of $C$ and $R$ the rectangle that is induced by $a$ and $b$. The shortest path $\pi_{ab} \subset C$ between $a$ and $b$ is given as:
		\begin{itemize}
			 \item  $ac_1 \cup c_1c_2 \cup c_2b$, if $\ell$ intersects $R$ in $c_1$ and $c_2$ such that $c_1 <_{xy} c_2$ and as
			\item $ac \cup cb$, otherwise, where $c$ is defined as the closest point from $R$ to $\ell$.
		\end{itemize}
		
\begin{figure}[ht]
  \begin{center}
    \begin{tabular}{ccccccc}
      \includegraphics[height=2.2cm]{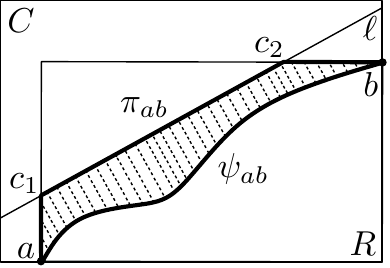} & &
       \includegraphics[height=2.2cm]{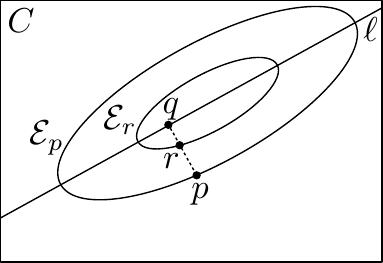}&&\\ 
      {\small (a) Construction of a curve $\pi_{ab}$ between} & &
      {\small (b) Projecting a point orthogonally onto}&&\\
      {\small $a$ and $b$ which is not longer than $\psi_{ab}$.}&&
      {\small a free space axis reduces its weight.}&&
    \end{tabular}
  \end{center}
  \vspace*{-12pt}
  \caption{A shortest weighted $xy$-monotone path between two points $a$ and $b$ with $a \leq_{xy} b$.}
  \label{fig:smallerWeightProjection}
\end{figure}
		
\end{lemma}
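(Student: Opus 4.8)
The plan is to reduce everything to a one-dimensional weighted-distance problem along lines parallel to the free space axes, using the fact that $w(\cdot)$ restricted to a parameter cell is a very structured function. Recall that inside $C$ the level sets $\{w \le \delta\}$ are intersections of $C$ with confocal ellipses sharing the monotone axis $\ell$ and the non-monotone axis $\hbar$; consequently $w$ is constant on lines perpendicular to $\ell$ (i.e., parallel to $\hbar$) only in the degenerate disc case, but in general $w$ depends monotonically on the ``distance to $\ell$'' measured appropriately. The crucial elementary fact I would isolate first is Figure~\ref{fig:smallerWeightProjection}(b): orthogonally projecting any point $p\in C$ onto $\ell$ does not increase $w(p)$; equivalently, among all points on a fixed line $h$ with gradient $-1$, the one lying on $\ell$ minimizes $w$, and $w$ is monotone in the distance from that intersection point along $h$. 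I would prove this by writing $w$ in coordinates aligned with the ellipse axes and checking that $w$ as a function of the $\hbar$-coordinate (with the $\ell$-coordinate fixed) is even and increasing in absolute value — this is where the confocal-ellipse structure of the free space does the work.

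Next I would set up the optimization. Any monotone path $\psi_{ab}$ from $a$ to $b$ in $C$ can, without increasing its \emph{unweighted} $L_1$-length, be replaced by a monotone staircase; but more to the point, $|{\cdot}|_w = \int w\,\|\psi'\|_1$, and since $\|\psi'\|_1$ integrates to the fixed quantity $\|ab\|_1$ for \emph{any} monotone path between $a$ and $b$ (the $L_1$-length of a monotone path in $P$ depends only on its endpoints), minimizing weighted length is exactly minimizing a weighted average of $w$ along the path with total ``mass'' $\|ab\|_1$ fixed. Therefore the optimal path should spend its horizontal/vertical budget on the points of smallest $w$ that are reachable monotonically — i.e., it should get onto $\ell$ as fast as possible and stay there as long as possible. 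Concretely: in the first case, where $\ell$ meets the rectangle $R$ spanned by $a,b$ in two points $c_1 <_{xy} c_2$, I claim the path $ac_1 \cup c_1c_2 \cup c_2 b$ is optimal; in the second case $\ell$ misses $R$, so the reachable point minimizing $w$ is the $L_\infty$-closest point $c$ of $R$ to $\ell$, and $ac \cup cb$ is optimal. To prove the first-case optimality I would use an exchange/projection argument: given any competing monotone path $\psi_{ab}$, project each point of $\psi_{ab}$ onto $\ell$ along its gradient-$(-1)$ line to get a "comparison" path; by the projection lemma this does not increase $w$ pointwise, and by the monotone-$L_1$-length invariance it does not change the length factor — then straighten to the canonical three-segment path, arguing the straightening only decreases the weighted integral because $w$ along $\ell$ between $c_1$ and $c_2$ is dominated by (is a convex/unimodal restriction compared to) what the detour path sees. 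The companion picture (a) in Figure~\ref{fig:smallerWeightProjection} is presumably exactly this straightening step.

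For the second case the argument is shorter: since $\ell\cap R=\emptyset$ and $R$ is an axis-parallel box, the function "distance to $\ell$ along gradient-$(-1)$ lines" is monotone across $R$, so its minimum over $R$ is attained at the corner $c$ that is $L_\infty$-closest to $\ell$, and $c$ is monotonically reachable from $a$ and reaches $b$ monotonically; the same projection-plus-monotonicity argument shows any monotone path from $a$ to $b$ has weighted length at least that of $a c\cup c b$, while $w$ along the two legs is handled by the $1$-Lipschitz / monotone-in-distance-to-$\ell$ property.

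The main obstacle I expect is making the "straightening along $\ell$ does not increase the weighted integral" step fully rigorous, i.e., ruling out that a path which wanders near but not exactly on $\ell$, or which touches $\ell$ and leaves and returns, could beat the canonical path. This requires the precise statement that $w$ restricted to $\ell$ is unimodal (decreasing then increasing, with minimum at the foot of the perpendicular from the ellipse center, if that foot lies in $C$) together with a careful accounting that any excursion off $\ell$ both (i) costs strictly more $w$ per unit $L_1$-length by the projection lemma and (ii) does not "save" length because monotone $L_1$-length between fixed endpoints is invariant — so there is never a trade-off in the path's favour. Once the projection lemma and the one-dimensional unimodality of $w|_\ell$ are in hand, the rest is a routine case analysis on whether $\ell$ separates $a$ from $b$, whether it enters $R$, and on which side the closest point lies, and the formula in the statement drops out; uniqueness follows because each inequality in the exchange argument is strict off the canonical path except on measure-zero degenerate configurations, which I would note explicitly.
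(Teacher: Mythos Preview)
Your proposal is essentially the paper's argument, but you have introduced an unnecessary detour that creates the very ``obstacle'' you worry about. The paper uses exactly your two ingredients: the projection fact (moving a point toward $\ell$ along a line perpendicular to $\ell$ can only decrease $w$, because the free-space ellipses are nested with common axis $\ell$) and the observation that every monotone path from $a$ to $b$ has the same $L_1$-length. The difference is in how the comparison is executed. You propose a two-step scheme: first project $\psi_{ab}$ all the way onto $\ell$, then ``straighten'' to the canonical path $\pi_{ab}$; this second step is where you reach for unimodality of $w|_\ell$. The paper instead defines the map $\bot:\psi_{ab}\to\pi_{ab}$ directly, sending each $p\in\psi_{ab}$ to the unique point of $\pi_{ab}$ on the line through $p$ perpendicular to $\ell$. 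Since $\pi_{ab}$ is, by construction, on every such perpendicular the point of $R$ nearest to $\ell$ (it lies on $\ell$ between $c_1$ and $c_2$, and on the side of $\partial R$ facing $\ell$ elsewhere), one has $\bot(p)$ between $p$ and $\ell$, hence $w(\bot(p))\le w(p)$ by the projection fact alone. Parametrising both paths by $L_1$-arclength then gives $|\pi_{ab}|_w\le|\psi_{ab}|_w$ immediately --- no straightening step, and no need for unimodality of $w$ along $\ell$.

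Two small clarifications that dissolve the residual uncertainties in your write-up. First, in the paper's arclength parametrisation of the parameter space the axes of the free-space ellipses are always along the directions $(1,1)$ and $(1,-1)$ (this drops out of the Hessian of $w^2(x,y)=|T_1(x)-T_2(y)|^2$), so $\ell$ always has slope $+1$ and ``perpendicular to $\ell$'' and ``gradient $-1$'' are the same thing --- your hedging between the two is unnecessary. Second, your proposed intermediate path on $\ell$ (the image of $\psi_{ab}$ under projection to $\ell$) actually has weighted length \emph{smaller} than $\pi_{ab}$ in general, since it sits on $\ell$ even outside $R$; so comparing it to $\pi_{ab}$ goes the wrong way, and the unimodality lemma you reach for would not rescue that direction of inequality. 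Projecting onto $\pi_{ab}$ rather than onto $\ell$ is the fix.
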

\begin{proof} Let $\psi_{ab} \subset C$ by an arbitrary monotone path that connects $a$ and $b$. In the following, we show that $|\pi_{ab}|_w \leq |\psi_{ab}|_w$. For this, we prove the following: Let $p \in C$ be chosen arbitrarily and $q$  be its orthogonal projection onto $\ell$ (see Figure~\ref{fig:smallerWeightProjection}(b)). We show $w(r) \leq w(p)$ for $r \in pq$. This implies that there is an injective, continuous function $\bot: \psi_{ab} \rightarrow \pi_{ab}$ with $w(\bot(p)) \leq w(p)$ for all~$p \in \psi$. In particular, $\bot(p)$ is defined as the intersection point of $\pi_{ab}$ and the line $d$ that lies perpendicular to $\ell$ such that $p \in d$. The function $\bot(\cdot)$ is well defined and injective as both $\psi_{ab}$ and $\pi_{ab}$ are monotone paths that connect $a$ and $b$. Similarly, as in the proof of Lemma~\ref{lem:apprQualityG1}, this implies $|\pi_{ab}|_w \leq |\psi_{ab}|_w$ because $|\pi_{ab}| = |\psi_{ab}|$.

	To be more precise, consider $\psi, \pi: [0,1] \rightarrow C$ to be parametrized such that $d_1(a,\psi(t)) = d_1(a,\pi(t)) = td_1(a, b)$. This implies $||(\psi)'(t)||_1 =d_1(a,b)= ||(\pi)'(t)||_1$ for all $t \in [0,1]$. Thus:
	
	\begin{eqnarray*}
		|\psi_{ab}|_w & = & \int_{0}^1 w (\psi_{ab}(t))  || (\psi_{ab})'(t)||_1\ dt \geq  \int_{0}^1 w (\bot (\psi_{ab}(t)))  || (\pi_{ab})'(t)||_1\ dt\\
		& =& \int_{0}^1 w (\pi_{ab}(t))  || (\pi_{ab})'(t)||_1\ dt = | \pi_{ab}|_w.
	\end{eqnarray*}
	Finally, we show:  $w(r) \leq w(p)$, for $r \in pq$. Note that $w(r)$ and $w(p)$ are the leash lengths for  $r$ and $p$ that lie on the boundary of the white space inside $C$, i.e., on the boundary of the ellipses $\mathcal{E}_{r}$ and~$\mathcal{E}_p$, resp. (see Figure~\ref{fig:smallerWeightProjection}). Since $r \in pq$ we get $\mathcal{E}_r \subseteq \mathcal{E}_p$, which implies $w(r) \leq w(p)$.
\end{proof}

	We call a point $p \in C$ \emph{canonical} if $p \in \ell$. Let $C_o$ and $C_p$ be two parameter cells that share a parameter edge $e$. Furthermore, let $o \in \ell_0 \subset C_o$  and $p \in \ell_p \subset C_p$ be two canonical parameter points such that $o \leq_{xy} p$ where $\ell_o$ and $\ell_p$ are the monotone free space axis of $C_o$ and $C_p$, respectively. Let $c_o$ be the top-right end point of $\ell_o$ and $c_p$ the bottom-left end point of $\ell_p$. The following lemma is based on Lemma~\ref{lem:key} and characterizes how a shortest path passes through the parameter edges.\\ 

\noindent\begin{minipage}{0.5\linewidth}\vspace*{0.5ex}
\begin{lemma}\label{lem:canonicalOneVertex} If $c_o,c_p \in e$ and $c_o \leq_{xy} c_p$, $\pi_{op}$ is equal to the concatenation of the segments $oc_o$, $c_oc_p$, and $c_pp$  (see figure~(a) on right). Otherwise, there is a $z \in e$ such that $\pi_{op}$ is equal to the concatenation of the segments $oz_o$, $z_oz_p$, and $z_pp$, where $z_o \in \ell_{C_o}$ and $z_p \in C_p$ such that $z$ is the orthogonal projection of $z_o$ and $z_p$ onto $e$ (see figure~(b)).
\end{lemma}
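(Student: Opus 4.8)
The plan is to reduce the claim to a one-dimensional optimisation over the single point at which a connecting path crosses the shared parameter edge $e$, and then to resolve that optimisation cell-by-cell with Lemma~\ref{lem:key}. Recall first that every monotone path from $o$ to $p$ has the same $L_1$-length $d_1(o,p)$, since both its coordinate functions are monotone; hence, after the arc-length reparametrisation already used in the proofs of Lemma~\ref{lem:apprQualityG1} and Lemma~\ref{lem:key}, comparing weighted lengths of such paths amounts to comparing the integrals $\int_0^1 w(\cdot)$ along them. Fix an arbitrary monotone path $\psi$ from $o$ to $p$. Since $o$ and $p$ lie on opposite sides of the line carrying $e$ and $\psi$ is monotone, $\psi$ meets $e$ in a (possibly degenerate) subsegment $\overline{z_1z_2}$ with $z_1\leq_{xy}z_2$, and we may write $\psi=\psi^o\cup\overline{z_1z_2}\cup\psi^p$ with $\psi^o\subseteq C_o$ a monotone path from $o$ to $z_1$ and $\psi^p\subseteq C_p$ a monotone path from $z_2$ to $p$.

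I would then collapse the crossing to a single point. Writing $\pi^{C_o}_{oz}$ and $\pi^{C_p}_{zp}$ for the shortest paths supplied by Lemma~\ref{lem:key} inside $C_o$ and $C_p$, that lemma gives $|\psi^o|_w\geq|\pi^{C_o}_{oz_1}|_w$ and, since $\overline{z_1z_2}\cup\pi^{C_p}_{z_2p}$ is a monotone path from $z_1$ to $p$ inside $C_p$, also $|\overline{z_1z_2}|_w+|\psi^p|_w\geq|\overline{z_1z_2}|_w+|\pi^{C_p}_{z_2p}|_w\geq|\pi^{C_p}_{z_1p}|_w$. Hence $|\psi|_w\geq h(z_1)$, where for $z\in e$ with $o\leq_{xy}z\leq_{xy}p$ we set $h(z):=|\pi^{C_o}_{oz}|_w+|\pi^{C_p}_{zp}|_w$. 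The function $h$ is continuous on a compact interval, so it attains its minimum at some $z^{\ast}$, and the concatenation $\pi^{C_o}_{oz^{\ast}}\cup\pi^{C_p}_{z^{\ast}p}$ is itself a monotone path from $o$ to $p$ of weighted length exactly $h(z^{\ast})$. Therefore the shortest path $\pi_{op}$ exists, equals this concatenation, and $|\pi_{op}|_w=\min_z h(z)$; it remains to locate $z^{\ast}$ and to rewrite the concatenation in one of the two stated forms.

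For this last part I would make the two in-cell shortest paths explicit via Lemma~\ref{lem:key}: since $o\in\ell_o$ is the bottom-left corner of the rectangle induced by $o$ and $z$, $\pi^{C_o}_{oz}$ follows $\ell_o$ until $\ell_o$ leaves that rectangle --- through its top side at $c_o$ if $c_o\in e$ and $c_o\leq_{xy}z$, and through its right side, namely at the point where $\ell_o$ meets the line through $z$ orthogonal to $e$, otherwise --- and then runs monotonically straight to $z$; symmetrically $\pi^{C_p}_{zp}$ runs straight from $z$ to a point of $\ell_p$ (reaching $c_p$ on $e$ if $c_p\in e$ and $z\leq_{xy}c_p$) and then follows $\ell_p$ to $p$. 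Combining these descriptions with the ellipse-nesting property established in Lemma~\ref{lem:key} --- moving a point toward its cell's free-space axis does not increase $w$ --- one shows that $h$ cannot increase when $z$ is moved rightward toward $c_o$, nor when it is moved leftward toward $c_p$. If $c_o,c_p\in e$ and $c_o\leq_{xy}c_p$, both moves are simultaneously admissible: every $z^{\ast}$ with $c_o\leq_{xy}z^{\ast}\leq_{xy}c_p$ is optimal, and for each of them the concatenation is $oc_o\cup c_oc_p\cup c_pp$ (the portion along $e$ is $\overline{c_oc_p}$ no matter where inside it one switches cells), which is the first asserted form. In every other configuration --- $c_{o,x}>c_{p,x}$, or $c_o\notin e$, or $c_p\notin e$ --- the two moves conflict, $z^{\ast}$ is pinned strictly between the two feet, both $\pi^{C_o}_{oz^{\ast}}$ and $\pi^{C_p}_{z^{\ast}p}$ leave, respectively enter, their rectangles through a side orthogonal to $e$, and their straight parts lie on the common line through $z^{\ast}$ orthogonal to $e$ and merge into one segment $z_oz_p$; this yields $oz_o\cup z_oz_p\cup z_pp$ with $z_o\in\ell_o$, $z_p\in C_p$, and $z^{\ast}$ the orthogonal projection of both $z_o$ and $z_p$ onto $e$ --- the second asserted form.

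The step I expect to be the main obstacle is precisely ``$h$ cannot increase as $z$ moves toward the feet'': making this global rather than merely local, and doing so uniformly over the ``otherwise'' configurations. I would establish it by a direct monotonicity argument on the explicit expression for $h$ above, repeatedly using Lemma~\ref{lem:key} (a longer stretch on a free-space axis, paid for by an equally long stretch that leaves the axis, is never worse) together with the ellipse-nesting inequality; note that a projection-onto-$\pi_{op}$ argument of the kind used for Lemma~\ref{lem:key} itself does not suffice here, because the projection onto the $e$-portion of $\pi_{op}$ need not decrease $w$. The one genuinely delicate point inside this argument is the clamping: when the orthogonal projection of a point onto $\ell_o$ or $\ell_p$ would overshoot its cell, one must check that the clamped image actually used by $\pi_{op}$ still carries no larger $w$, which again reduces to the nesting of the defining ellipses.
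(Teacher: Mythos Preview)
The paper does not actually supply a proof of this lemma: it is stated immediately after Lemma~\ref{lem:key} with the remark ``The following lemma is based on Lemma~\ref{lem:key}'' and is then used without further justification. So there is no ``paper's own proof'' to compare against; your write-up is already more than what the paper provides.

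Your reduction to the one-parameter optimisation $h(z)=|\pi^{C_o}_{oz}|_w+|\pi^{C_p}_{zp}|_w$ over the crossing point $z\in e$, with each summand handled by Lemma~\ref{lem:key}, is exactly the natural way to make the lemma precise, and the case split you describe matches the statement. Regarding the step you flag as the main obstacle (monotonicity of $h$): at least in the first case it can be discharged cleanly without any local/derivative analysis. For $z$ with $z\leq_{xy}c_o\leq_{xy}c_p$ one has, by Lemma~\ref{lem:key} applied in $C_p$, $\pi^{C_p}_{zp}=zc_p\cup c_pp$, whence
\[
h(z)-|\,oc_o\cup c_oc_p\cup c_pp\,|_w=\bigl(|\pi^{C_o}_{oz}|_w+|zc_o|_w\bigr)-|oc_o|_w .
\]
The bracketed term is the weighted length of a monotone path from $o$ to $c_o$ inside $C_o$, so Lemma~\ref{lem:key} (with $a=o$, $b=c_o$, both on $\ell_o$) gives the nonnegativity directly; the case $z\geq_{xy}c_p$ is symmetric. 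This avoids the ``projection onto the $e$-portion'' difficulty you correctly point out. A similar rewriting---comparing $h(z)$ and $h(z')$ via a single-cell application of Lemma~\ref{lem:key} to the path $\pi^{C_o}_{oz}\cup zz'$ versus $\pi^{C_o}_{oz'}$ (and symmetrically in $C_p$)---handles the ``otherwise'' configurations and pins $z^\ast$ as you describe, so your outline goes through.
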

\end{minipage}
\begin{minipage}{0.4\linewidth}
\begin{center}
    \begin{tabular}{ccccccc}
      \includegraphics[height=2.8cm]{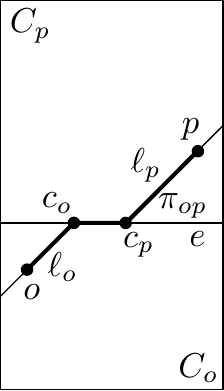} & &
       \includegraphics[height=2.8cm]{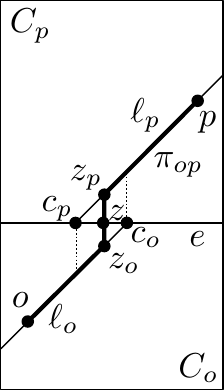}&&\\ 
      {\small (a) $\pi_{op}$ for $o \leq_{xy} p$} & &
      {\small (b) $\pi_{op}$ for $o \nleq_{xy} p$}&&
    \end{tabular}
  \end{center}
\end{minipage} 
\\ 

\noindent {\bf Outline of the analysis of Case B: }
%\subsubsection{Outline of the analysis of Case 2}
In the following, we apply Lemmas~\ref{lem:key} and~\ref{lem:canonicalOneVertex} to subpaths $\pi_{ab}$ of $\pi$ in order to ensure that~$\pi_{ab}$ is a subset of the union of a constant number of balls (that are approximated by grid balls in our approach) and monotone free space axes. In particular, we construct a discrete sequence of points from $\pi$ which lie on the free space axes, see Subsection~\ref{subsec:Sep}. 
For each induced subpath~$\pi_{ab}$, we ensure that $\pi_{ab}$ crosses one or two perpendicular parameter edges. For the analysis we distinguish between the two cases which we consider separately:\\
{\bf Case 1:} $\pi_{ab}$ crosses one parameter edge and 
{\bf Case 2:} $\pi_{ab}$ crosses two parameter edges.

\begin{figure}[ht]
  \begin{center}
    \begin{tabular}{ccccccc}
       \includegraphics[height=3.9cm]{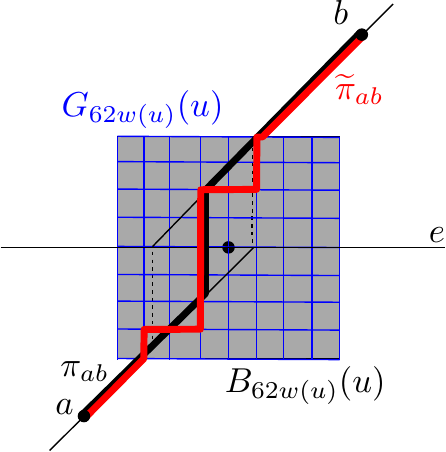} & &
       \includegraphics[height=3.9cm]{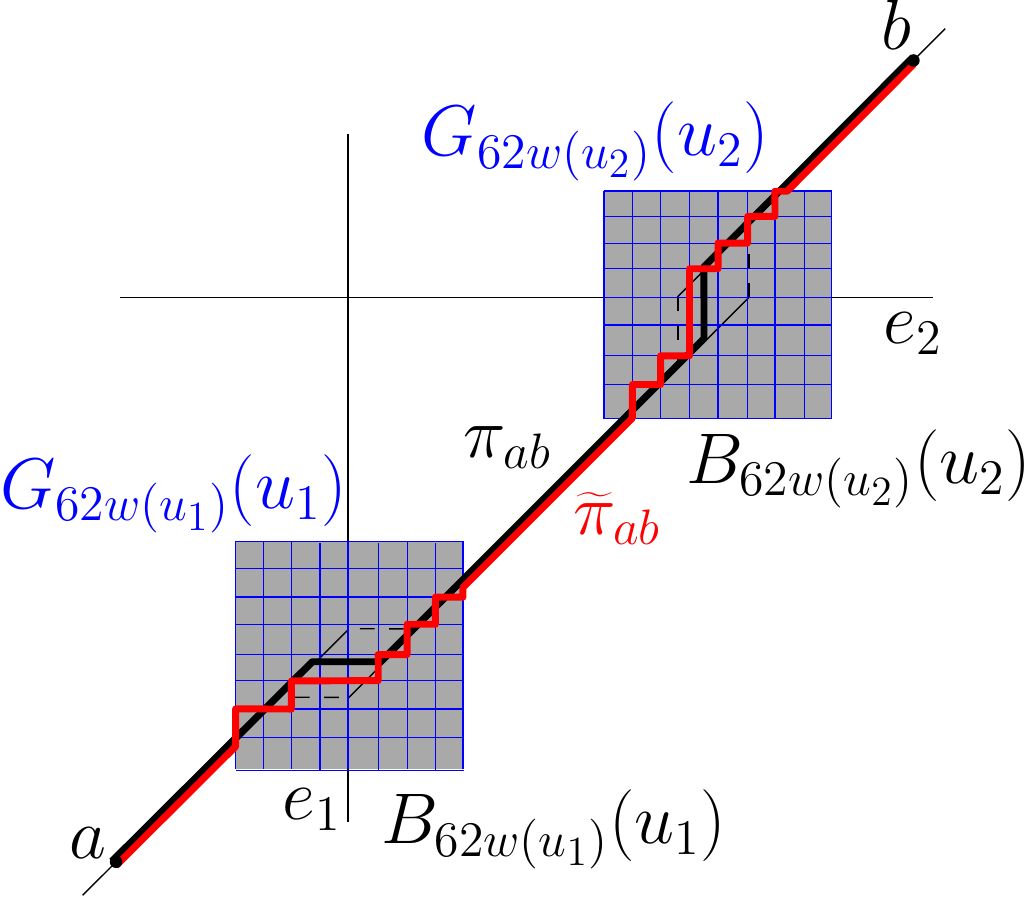} & &
       \includegraphics[height=3.8cm]{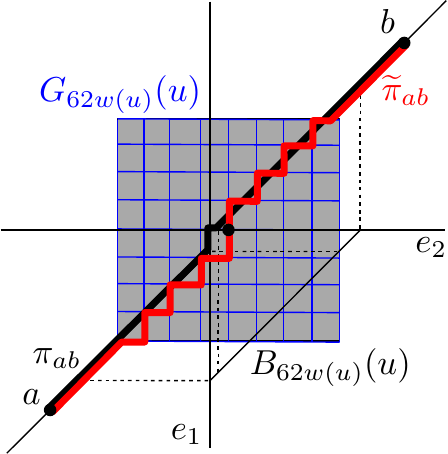}&&\\ 
      {\small (a) Case (1.)} & &
      {\small (b) Case (2.1.)}&&
      {\small (c) Case (2.2.)}\\
      %{\small $\pi_{ab} \subset \ell_a \cup B_{62w(u)}(u) \cup \ell_b$.} & &
      %{\small $\pi_{ab} \subset \ell_q \cup B_{62w(u_1)}(u_1)$}&&
      %{\small $\pi_{ab} \subset \ell_a \cup B_{62w(u)}(u) \cup \ell_b$.}\\
      %{\small } & &
      %{\small $\cup \ell_r \cup B_{62w(u_2)}(u_1) \cup \ell_s$.}&&
      %{\small }
    \end{tabular}
  \end{center}
  \vspace*{-12pt}
  \caption{Three different subcases in which we  ensure, differently, that we capture a subpath $\pi_{ab} \subset \pi$ by balls and free space axes. Path $\pi_{ab}$ is approximated by a path $\widetilde{\pi}_{ab}$ in the graph that is induced by these free space axis and the corresponding grid balls.}
  \label{fig:captureTheSubpath}
\end{figure}
	
    For Case 1, we show that, if $\pi_{ab}$ crosses one edge ($e$) then $\pi_{ab}$ is a subset of the union of the two monotone free space axes of the parameter cells that share $e$ and the ball $B_{62w(u)}(u)$ for $u := \arg \min_{p \in e}w(u)$  (see Figure~\ref{fig:captureTheSubpath}(a) and Subsections~\ref{subsec:anaOneCrossing}).
	
	For Case 2,  (see Subsection~\ref{subsec:anaTwoCrossing}), we consider the case that $\pi_{ab}$ crosses two parameter edges~$e_1$ and~$e_2$. In particular, $\pi_{ab}$ runs through three parameter cells $C_q$, $C_r$, and $C_s$, where $C_q$ and~$C_r$ share $e_1$ and~$C_r$ and $C_s$ share $e_2$. 
	
	We further distinguish further between two subcases. For this, let $u_1 := \arg \min_{p \in e_1} w(p)$ and $u_2 := \arg \min_{p \in e_2} w(p)$. \\
{\bf Case 2.1:}  We show that, if $d_1(u_1,u_2) \geq 6 \max \{ w(u_1),w(u_2) \}$, then $\pi_{ab}$ is a subset of the union of the balls $B_{62w(u_1)}(u_1)$ and $B_{62w(u_2)}(u_2)$ and the monotone free space axes of $C_q$, $C_r$, and $C_s$ (see Figure~\ref{fig:captureTheSubpath}(b) and Lemma~\ref{lem:shortestPathOneCrossing}).\\ 
{\bf Case 2.2:} We show that, if $d_1(u_1,u_2) \leq 6 \max \{ w(u_1),w(u_2) \}$, then $\pi_{ab}$ is a subset of the union of the ball $B_{62w(u)(u)}$ and the monotone free space axes of $C_q$ and $C_s$ (see Figure~\ref{fig:captureTheSubpath}(c) and Lemma~\ref{lem:twoCrossingComplex}).
	
	For the analysis of the length of a shortest path $\widetilde{\pi} \subset G_2$ that lies between $\mathfrak{s}$ and $\mathfrak{t}$, we construct for $\pi_{ab} \subset \pi$ a path $\widetilde{\pi}_{ab} \subset G_2$ between $a$ and $b$ such that $|\widetilde{\pi}_{ab}|_{w} \leq (1 + \varepsilon) |\pi_{ab}|_w$. In particular, $\widetilde{\pi}_{ab}$ is a subset of the grid balls that approximate the above considered balls and the free space axes that are involved in the individual (sub-)case for $\pi_{ab}$ (see, Figure~\ref{fig:captureTheSubpath}). Finally, we define $\widetilde{\pi} \subset G_2$ as the concatenation of the approximations $\widetilde{\pi}_{ab}$ for all $\pi_{ab}$.
		
\subsubsection{Separation of a shortest path}\label{subsec:Sep}
	
	In the following, we determine a discrete sequence of canonical points $\mathfrak{s} = p_1,...,p_k = \mathfrak{t} \in \pi$ such that $\pi_{p_ip_{i+1}}$ crosses at most two parameter lines for each $i \in \{ 1,...,k-1 \}$. First we need the following supporting lemma:
	
\begin{lemma}\label{lem:tech}
	For all $q_1, q_2 \in \pi$ that lie in the same parameter cell with $q_1 \leq_{xy} q_2$ we have $q_2.y-q_1.y - \frac{\mu}{50} \leq q_2.x-q_1.x \leq q_2.y-q_1.y + \frac{\mu}{50}$.
\end{lemma}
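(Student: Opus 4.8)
The plan is to derive the claim directly from the hypothesis that $\pi$ is \emph{low}, using that inside a single parameter cell both $T_1$ and $T_2$ are affine and parametrized by arc length. First I would fix the parameter cell $C$ that contains $q_1$ and $q_2$; it is associated with one segment of $T_1$ and one segment of $T_2$, whose unit direction vectors I call $u$ and $v$, so $\|u\|_2=\|v\|_2=1$. Writing $a:=q_2.x-q_1.x$ and $b:=q_2.y-q_1.y$, the assumption $q_1\leq_{xy}q_2$ gives $a\geq 0$ and $b\geq 0$, while arc-length linearity gives $T_1(q_2.x)-T_1(q_1.x)=a\,u$ and $T_2(q_2.y)-T_2(q_1.y)=b\,v$.

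The second step is to bound $\|a\,u-b\,v\|_2$. Rearranging these two identities and applying the triangle inequality,
\[
  \|a\,u-b\,v\|_2 \;=\; \| (T_1(q_2.x)-T_2(q_2.y)) - (T_1(q_1.x)-T_2(q_1.y)) \|_2 \;\leq\; w(q_2)+w(q_1),
\]
and since $\pi$ is low and $q_1,q_2\in\pi$ the right-hand side is at most $\tfrac{\mu}{100}+\tfrac{\mu}{100}=\tfrac{\mu}{50}$.

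The third step converts this into the desired bound on $|a-b|$. Because $u$ and $v$ are unit vectors,
\[
  \|a\,u-b\,v\|_2^2 \;=\; a^2 - 2ab\,(u\cdot v) + b^2 \;=\; (a-b)^2 + 2ab\,(1-u\cdot v) \;\geq\; (a-b)^2,
\]
where the inequality uses $a\geq 0$, $b\geq 0$, and $u\cdot v\leq 1$. Combining the two displays yields $(a-b)^2\leq(\mu/50)^2$, i.e.\ $\bigl|(q_2.x-q_1.x)-(q_2.y-q_1.y)\bigr|\leq \mu/50$, which is exactly the pair of inequalities in the statement.

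There is no serious obstacle here; the only points that need care are that the monotonicity $q_1\leq_{xy}q_2$ is precisely what makes the cross term $2ab(1-u\cdot v)$ nonnegative (the claim genuinely fails for non-comparable points), and that $q_1,q_2$ sharing a parameter cell is what licenses the affine, unit-speed expressions for $T_1$ and $T_2$. Equivalently, one can read the last step geometrically: the monotone $L_1$-segment $q_1q_2$ has $L_1$-length $a+b$, and the Euclidean displacement of its image under ``$T_1$ on the $x$-axis minus $T_2$ on the $y$-axis'' is $\|a\,u-b\,v\|_2\geq|a-b|$, so a small such image displacement — guaranteed here since both endpoints have weight at most $\mu/100$ — forces $a$ and $b$ to be nearly equal.
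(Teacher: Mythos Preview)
Your proof is correct and follows essentially the same idea as the paper: use that $\pi$ is low so $w(q_1),w(q_2)\le\mu/100$, use that a single parameter cell corresponds to arc-length parametrized segments, and apply the triangle inequality. The paper's presentation is marginally more direct---it applies the triangle inequality straight to the four points $T_1(q_1.x),T_1(q_2.x),T_2(q_1.y),T_2(q_2.y)$ to get $b\le w(q_2)+a+w(q_1)$ without introducing unit direction vectors or the quadratic identity---but the content is the same (your inequality $(a-b)^2\le\|au-bv\|_2^2$ is just the reverse triangle inequality $|\,\|au\|-\|bv\|\,|\le\|au-bv\|$).
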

\begin{proof}
	The triangle inequality implies:\\
	 $d_2(T_2(q_2.y),T_2(q_1.y)) \leq d_2(T_2(q_2.y),T_1(q_2.x)) + d_2(T_1(q_2.x),T_1(q_1.x)) + d_2(T_1(q_1.x),T_2(q_1.y))$. This implies $d_2(T_2(q_2.y),T_2(q_1.y)) - \frac{\mu}{50} \leq d_2(T_1(q_2.x),T_1(q_1.x))$, 
because\\ $d_2(T_2(q_2.y),T_1(q_2.x)), d_2(T_1(q_1.x), T_2(q_1.x)) \leq \frac{\mu}{100}$. Furthermore, $d_2(T_2(q_2.y),T_2(q_1.y)) = q_2.y-q_1.y$ and $d_2(T_1(q_2.x), T_1(q_1.x)) = q_2.x - q_1.x$ because $q_1$ and $q_2$ lie in the same cell. This implies $q_2.y-q_1.y - \frac{\mu}{50} \leq q_2.x-q_1.x$. 	A corresponding argument yields $q_2.x-q_1.x \leq q_2.y-q_1.y + \frac{\mu}{50}$.
\end{proof}

\begin{lemma}\label{lem:separatingPoints}
	There are canonical points $\mathfrak{s} = p_1,\dots p_k = \mathfrak{t} \in \pi$ such that for all $i \in \{ 1,\dots,k-1 \}$ the following holds: (P1) $\pi_{p_{i}p_{i+1}}$ crosses at most one vertical and at most one horizontal parameter line which are both not part of $\partial P$ and (P2) the distance of $p_i$ to a parameter line is lower-bounded by $\frac{\mu}{6}$ for all $i \in \{ 2,\dots,k-1 \}$.
\end{lemma}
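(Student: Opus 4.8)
I would build the sequence $p_1,\dots,p_k$ by a greedy walk along $\pi$, driven by two structural facts about a low shortest path inside a single parameter cell $C$ with monotone free space axis $\ell_C$. First, since $\pi$ is low, $\pi\cap C$ lies in the ellipse $\mathcal{E}_{\mu/100}$ from the preliminaries, whose semi-axis perpendicular to $\ell_C$ has length at most $\tfrac{\mu}{100}$ (the linear map defining $w$ on $C$ has the two unit segment tangents as its columns, hence operator norm at least $1$); thus $\pi\cap C$ stays within perpendicular distance $\tfrac{\mu}{100}$ of $\ell_C$. Second, since $\pi$ is a shortest path, $\pi\cap C$ is a shortest subpath with endpoints $a\leq_{xy}b$, so by Lemma~\ref{lem:key} it is either an L-shaped path around a corner of the rectangle spanned by $a$ and $b$, or it contains a maximal segment on $\ell_C$. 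Combining these: in the L-shaped case $\pi\cap C$ has diameter $\mathcal{O}(\mu/100)$, because an L has one vertical and one horizontal leg, and whichever of the two axis directions is not (near-)parallel to $\ell_C$ has its leg squeezed to length $\mathcal{O}(\mu/100)$ by the width of the strip around $\ell_C$, after which Lemma~\ref{lem:tech} forces the other leg to be short too. Consequently, any cell $C$ that $\pi$ crosses with $x$- or $y$-extent at least $\tfrac{\mu}{2}$ is crossed along $\ell_C$; by Lemma~\ref{lem:tech} both extents are then at least $\tfrac{\mu}{2}-\tfrac{\mu}{50}$, and the midpoint of that $\ell_C$-segment is a canonical point of $\pi$ at distance at least $\tfrac{\mu}{6}$ from all four bounding parameter lines.

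Given $p_i$, I would define $p_{i+1}$ as follows. Follow $\pi$ from $p_i$ toward $\mathfrak{t}$ and stop just before $\pi$ would cross a second vertical or a second horizontal parameter line not in $\partial P$ (or stop at $\mathfrak{t}$ if it is reached first); call this subpath $\sigma_i$. By construction $\sigma_i$ crosses at most one vertical and at most one horizontal parameter line, so (P1) holds for $(p_i,p_{i+1})$ as soon as $p_{i+1}\in\sigma_i$. If $\sigma_i$ reaches $\mathfrak{t}$, set $p_{i+1}:=\mathfrak{t}$ and stop. Otherwise $\sigma_i$ ends because it is about to cross a second vertical line (say); then between its first vertical crossing and its end the path sweeps a full column of width $\geq\mu$, using at most its one allowed horizontal crossing, hence lying in at most two cells of that column; one of these cells is crossed with $x$-extent $\geq\tfrac{\mu}{2}$, so by the previous paragraph it is traversed along its free space axis, and I let $p_{i+1}$ be the midpoint of that $\ell$-segment, which is canonical and satisfies (P2). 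The symmetric argument handles the case where $\sigma_i$ hits its second horizontal crossing first. Since $\pi$ is monotone through the $n\times n$ grid it meets only finitely many parameter lines, and each step advances $\pi$ past at least one of them (or reaches $\mathfrak{t}$), so the process terminates with $p_k=\mathfrak{t}$; the endpoints $p_1=\mathfrak{s}$ and $p_k=\mathfrak{t}$ are exempt from (P2) and are connected to the canonical structure by the segments $\mathfrak{s}c_{\mathfrak{s}}$, $\mathfrak{t}c_{\mathfrak{t}}$ of the $G_2$ construction.

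The main obstacle is the quantitative content behind the first paragraph: one must verify, for every orientation of $\ell_C$ — including the degenerate near-axis-parallel cases, where $\mathcal{E}_{\mu/100}$ is a very elongated ellipse — that an L-shaped shortest subpath is forced to be $\mathcal{O}(\mu/100)$-small, and then carry enough slack through the constants ($\tfrac{\mu}{100}$ between $\pi$ and $\ell_C$, $\tfrac{\mu}{50}$ from Lemma~\ref{lem:tech}, cell sides $\geq\mu$) so that the midpoint of the free-space-axis segment in a ``substantial'' cell genuinely has clearance at least $\tfrac{\mu}{6}$. Everything else — the bookkeeping of which direction's second parameter line is crossed first, the monotonicity arguments, and the treatment of the boundary cells $C_{\mathfrak{s}}$ and $C_{\mathfrak{t}}$ — is routine.
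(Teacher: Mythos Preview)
Your approach is sound and takes a genuinely different route from the paper's. Both proofs are greedy walks along $\pi$, but the paper chooses each $p_{i+1}$ by coordinate thresholds: it looks at the next grid crossing $u_1$ of $\pi$, branches on whether $d_1(u_1,c)\geq\tfrac{\mu}{2}$ for the nearby grid corner $c$, and in each branch declares $p_{i+1}$ to be the first point of $\pi$ reaching a prescribed $x$- or $y$-coordinate (e.g.\ $p_{i+1}.x=u_1.x+\tfrac{\mu}{4}$ or $p_{i+1}.y=u_2.y+\tfrac{\mu}{4}$); (P1) and (P2) are then checked case by case via Lemma~\ref{lem:tech}. Notably, the paper never argues canonicalness of the $p_i$ explicitly---it is left implicit, and in fact follows from (P2) together with exactly the structural fact you isolate in your first paragraph (an L-shaped traversal of a cell by a low shortest path has diameter $O(\mu/100)$, so any point of $\pi$ with clearance $\tfrac{\mu}{6}$ from the cell boundary must sit on the $\ell$-segment).

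Your construction makes canonicalness explicit by taking $p_{i+1}$ to be the midpoint of the $\ell$-segment in a cell that $\pi$ traverses with extent $\geq\tfrac{\mu}{2}$, which is conceptually cleaner and avoids the paper's ``close to corner / far from corner'' case split. What the paper's coordinate-based definition buys in return is that the clearance bounds (P2) fall out almost mechanically from the chosen offsets plus Lemma~\ref{lem:tech}, without ever needing to locate $c_1,c_2$ on $\ell$ or bound $|c_1-a|$. Your route pushes that work into the ``main obstacle'' you flag; it does go through with room to spare, the key observation being that in a substantial cell Lemma~\ref{lem:tech} and lowness together force the slope of $\ell_C$ close to $1$, whence $|c_1-a|,|c_2-b|=O(\mu/100)$ and the midpoint of $c_1c_2$ inherits clearance roughly $\tfrac{\mu}{4}-O(\tfrac{\mu}{100})>\tfrac{\mu}{6}$ from $(a+b)/2$.
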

\begin{proof} First, we give the construction of $p_2,\dots,p_{k-1}$. After that, we establish Properties (P1) and  (P2), for each $i \in \{ 1,\dots,k-1 \}$.
\begin{itemize}
	\item Construction of $p_2,\dots,p_{k-1}$: We construct $p_2,\dots,p_{k-1}$ iteratively with $p_1 := \mathfrak{s}$. Point $p_2$ is defined as the first point on $\pi$ such that $p.x = \frac{\mu}{2}$ or $p.y = \frac{\mu}{2}$. For $i \in \{ 2,\dots,k-3 \}$, let $p_1,\dots,p_i$ be defined, $c$ be the top-right corner of the parameter cell that contains $p_{i}$, and $u_1$ be the next intersection point of $\pi$ (behind $p_i$) with the parameter grid, see Figure~\ref{fig:constructionSequenceCanonicalPoints}. W.l.o.g., we assume that $u_1$ lies on a vertical parameter line.
	
	If $d_1(c,u_1) \geq \frac{\mu}{2}$, we define $p_{i+1}$ as the first point on $\pi$ with $p_{i+1}.y = u_1.y + \frac{c.y-u_1.y}{2}$ or $p_{i+1}.x = u_1.x + \frac{\mu}{4}$, see Figure~\ref{fig:constructionSequenceCanonicalPoints}(a).
	
	If $d_1(c,u_1) < \frac{\mu}{2}$, we consider the next intersection point $u_2$ of $\pi$ with a horizontal parameter line such that $u_1 \leq_{xy} u_2$. We define $p_{i+1}$ as the first point behind $u_2$ such that $p_{i+1}.y = u_2.y + \frac{\mu}{4}$.
	
\begin{figure}[ht]
  \begin{center}
    \begin{tabular}{ccccccc}
      \includegraphics[height=3.5cm]{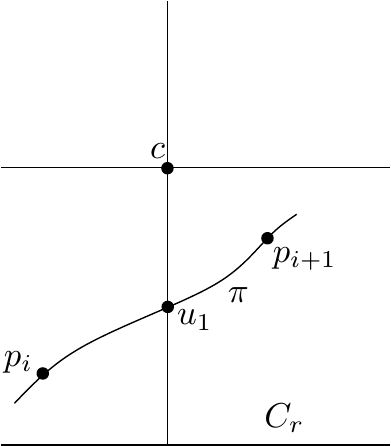} & &
       \includegraphics[height=3.5cm]{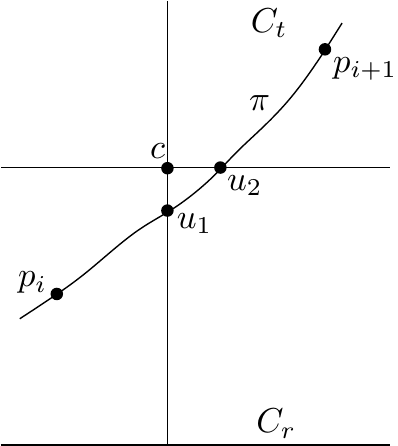}&&\\ 
      {\small (a) Construction of $p_{i+1}$ for $d_1(u_1,c) \geq \frac{\mu}{2}$.} & &
      {\small (b) Construction of $p_{i+1}$ for $d_1(u_1,c) < \frac{\mu}{2}$.}&&
    \end{tabular}
  \end{center}
  \vspace*{-12pt}
  \caption{The iterative construction of $p_{i+1}$ distinguishes between the two cases depending on whether the next intersection $u_1$ of $\pi$ and the parameter grid lies close to a vertex $c$ of the parameter grid or not.}
  \label{fig:constructionSequenceCanonicalPoints}
\end{figure}

	\item (P1) and (P2): W.l.o.g., we assume $u.x = c.x$. For the configurations of $u.y = c.y$ a symmetric argument applies. Assume $p_{i-1}$ and $\pi_{p_{i-1}p_{i}}$ fulfil (P1) and (P2) for $i \in \{ 2,\dots, k-2 \}$. We show that $p_{i}$ fulfils (P1) and $\pi_{i_{i}i_{i+1}}$ (P2) for the two cases $d_1(c,u_1) \geq \frac{\mu}{2}$ and $d_1(c,u_1) < \frac{\mu}{2}$ separately. By induction it follows the statement of the lemma.
		\begin{itemize}
			\item $d_1(c,u_1) \geq \frac{\mu}{2}$: 
				\begin{itemize}
					\item (P1): In both subcases $p_{i+1}.y = u_1.y + \frac{c.y-u_1.y}{2}$ or $p_{i+1}.x = u_1.x + \frac{\mu}{4}$ it follows that $p_{i+1}$ lies in the parameter cell $C_r$ that lies to the right of the parameter cell that contains $p_1$. In particular, in the first (second) subcase $p_{i+1}$ lies by construction in the same parameter row (column). As $\pi$ is monotone and $p_{i+1}$ is defined as the first point that fulfils one of the two constraints, $p_{i+1} \in C_r$. This implies (P1).
					\item (P2): The above argument implies that the distances of $p_{i+1}$ to the right and the top parameter line are lower-bounded by $\frac{\mu}{2}$. If $p_{i+1}.y = u_1.y + \frac{c.y-u_1.y}{2}$, we have $p_{i+1}.y - u_1.y = \frac{c.y-u_1.y}{2} \geq \frac{\mu}{4}$. Thus, Lemma~\ref{lem:tech} implies $p_{i+1}.x-u_1.x \geq \frac{\mu}{4} - \frac{\mu}{50} \geq \frac{\mu}{6}$ which yields that the distance of $p_{i+1}$ to the left parameter line is lower-bounded by $\frac{\mu}{6}$. If $p_{i+1}.x = u_1.x + \frac{\mu}{4}$, we have $p_{i+1}.x - u_1.x = \frac{\mu}{4}$. Thus, Lemma~\ref{lem:tech} implies $p_{i+1}.y - u_1.y \geq \frac{\mu}{4} - \frac{\mu}{50} \geq \frac{\mu}{6}$ which yields that the distance of $p_{i+1}$ to the bottom parameter line is lower-bounded by $\frac{\mu}{6}$. Hence, (P2) is fulfilled.
				\end{itemize}
			\item $d_1(c,u_1) < \frac{\mu}{2}$: 
				\begin{itemize}
					\item (P1): Assume $\pi$ crosses another vertical parameter line in a point $u$ that lies before~$u_2$. This implies, $u.x - u_1.x \geq \mu$ and $u.y - u_1.y \leq \frac{\mu}{2}$. Hence, $u.x - u_1.x \geq 2 (u.y - u_1.y)$ which is a contradiction to Lemma~\ref{lem:tech}. Thus, (P1) is fulfilled.
					\item (P2): The construction of $u_2$ implies that the distances to the bottom and to the top parameter line is lower-bounded by $\frac{\mu}{4},\frac{3\mu}{4} \geq \frac{\mu}{6}$. Finally, we lower-bound the distances of $p_{i+1}$ to the left and to the right parameter line as follows:  By combining $u_2.y-u_1.y \leq \frac{\mu}{2}$ and Lemma~\ref{lem:tech} we obtain $c.x \leq u_2.x \leq c.x + \frac{\mu}{2}+\frac{\mu}{50}$. Another application of Lemma~\ref{lem:tech}, combined with $p_{i+1}.y = u_2.y + \frac{\mu}{4}$ leads to $c.x + \frac{\mu}{4}-\frac{\mu}{50} \leq p_{i+1}.x \leq c.x + \frac{\mu}{2}+\frac{\mu}{50} + \frac{\mu}{4} + \frac{\mu}{50}$. Thus, the distances of $p_{i+1}$ to the  left and to the right parameter line are lower-bounded by $\frac{\mu}{6}$. Hence, (P2) is fulfilled.
				\end{itemize}
		\end{itemize}
\end{itemize}
\end{proof}

\subsubsection{Analysis of subpaths that cross one parameter edge}\label{subsec:anaOneCrossing}
	We need to show that those parts of $\pi$ that do not lie on the free space axes are covered by the balls~$B_{62w(u)}$. For this, we use the following geometrical interpretation of the free space axes $\ell$ and $\hbar$ of a parameter cell $C$. Let $t_1 \in T_1$ and $t_2 \in T_2$ be the segments that correspond to $C$. We denote the angular bisectors of $t_1$ and $t_2$ by $d_{\ell}$ and $d_{\hbar}$ such that the start points $t_1(0)$ and $t_2(0)$ of $t_1$ and $t_2$ lie on different sides w.r.t. $\ell$, see Figure~\ref{fig:dual}(b). If $t_1$ and~$t_2$ are parallel, $d_{\ell}$ denotes the line between $t_1$ and $t_2$ and we declare $d_{\hbar}$ as undefined\footnote{There is a corresponding definition of $\hbar$ in the case of $t_1 \parallel t_2$. However, considering $\hbar$ for $t_1 \parallel t_2$ would unnecessarily complicate the presentation because $\hbar$ is not required.}. We observe (see Figure~\ref{fig:dual}):
	
\begin{observation}\label{obs:dual}
	$q \in \ell \Leftrightarrow T_1(q.x)T_2(q.y) \bot d_{\ell}$ and $p \in \hbar \Leftrightarrow T_1(p.x)T_2(p.y) \bot d_{\hbar}$ .
\end{observation}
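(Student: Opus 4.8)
The plan is to reduce the claim to an elementary computation for the quadratic function $w^2$ on the cell $C$. Let $t_1,t_2$ be the segments of $T_1,T_2$ that correspond to $C$. Since $T_1,T_2$ are parametrised by arc length, on the parameter ranges belonging to $C$ we may write $T_1(x)=A_1+(x-x_0)e_1$ and $T_2(y)=A_2+(y-y_0)e_2$ with unit vectors $e_1,e_2$ pointing along $t_1,t_2$. Introduce the \emph{leash vector} $v(x,y):=T_1(x)-T_2(y)$; it is an affine function of $(x,y)$, $w(x,y)=\|v(x,y)\|_2$, and the segment $T_1(x)T_2(y)$ points in direction $v(x,y)$, so ``$T_1(x)T_2(y)$ is perpendicular to a line $d$'' means exactly ``$v(x,y)$ is orthogonal to the direction of $d$''. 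Set $g:=w^2=\langle v,v\rangle$.

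First I would differentiate $g$. From $\partial_x v=e_1$ and $\partial_y v=-e_2$ we get $\nabla g=2\bigl(\langle v,e_1\rangle,\,-\langle v,e_2\rangle\bigr)$, while $\nabla^2 g$ is $2$ times the matrix with diagonal entries $1$ and off-diagonal entries $-\langle e_1,e_2\rangle$. Hence, when $|\langle e_1,e_2\rangle|<1$, $g$ is strictly convex, its level sets $\{g=c\}$ are concentric ellipses centred at the unique critical point $p^{\ast}$ of $g$, and their principal axes point along the eigenvectors $(1,1)$ and $(1,-1)$ of $\nabla^2 g$ for \emph{every} $c$ --- which is exactly why the free space axes do not depend on $\delta$. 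In the disc case $\langle e_1,e_2\rangle=0$ the axes are defined by convention as the slope $+1$ and slope $-1$ lines through $p^{\ast}$, consistently. So $\ell$ is the line through $p^{\ast}$ of direction $(1,1)$ and $\hbar$ the line through $p^{\ast}$ of direction $(1,-1)$.

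Next I would turn the membership conditions into conditions on $v$. A point $q$ lies on the principal axis of direction $(1,1)$ iff $\nabla g(q)$ is parallel to $(1,1)$ (at $q=p^{\ast}$ the gradient vanishes and $p^{\ast}$ lies on both axes, consistently). By the formula for $\nabla g$ this is $\langle v(q),e_1\rangle=-\langle v(q),e_2\rangle$, i.e.\ $\langle v(q),\,e_1+e_2\rangle=0$; symmetrically, $q\in\hbar$ iff $\langle v(q),\,e_1-e_2\rangle=0$. Now $e_1+e_2$ and $e_1-e_2$ are precisely the two direction vectors of the angular bisectors of the supporting lines of $t_1,t_2$: they are nonzero (for $e_1\neq\pm e_2$) and orthogonal, since $\langle e_1+e_2,\,e_1-e_2\rangle=\|e_1\|_2^2-\|e_2\|_2^2=0$. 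Unwinding the side convention defining $d_\ell,d_\hbar$ (with the aid of Figure~\ref{fig:dual}(b)) identifies $d_\ell$ with the bisector of direction $e_1+e_2$ and $d_\hbar$ with the one of direction $e_1-e_2$. Putting the pieces together: the leash $T_1(q.x)T_2(q.y)$ is perpendicular to $d_\ell$ iff $\langle v(q),\,e_1+e_2\rangle=0$ iff $q\in\ell$, and likewise for $\hbar$ and $d_\hbar$. When $t_1\parallel t_2$ the same computation applies with $e_1+e_2=2e_1$ parallel to the midline $d_\ell$, while $\hbar$ and $d_\hbar$ are undefined by the convention of Section~\ref{sec:prelim}.

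The step I expect to be the real obstacle is the orientation bookkeeping of the previous paragraph: the algebra only tells us that the leash of a point of $\ell$ is orthogonal to \emph{one} of the two bisectors, and matching that one with the paper's $d_\ell$ (the bisector singled out by the side convention and by Figure~\ref{fig:dual}(b)) needs a careful check of how $t_1,t_2$ are positioned relative to one another, with the parallel and antiparallel configurations inspected separately. Everything else is the single gradient computation for $g=w^2$.
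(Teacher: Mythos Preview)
The paper does not actually prove this statement: it is labelled an \emph{Observation} and is justified only by a reference to Figure~\ref{fig:dual}. Your proposal therefore supplies a genuine proof where the paper offers none.

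Your argument is correct. The key computation --- that $\nabla(w^2)(q)=2\bigl(\langle v(q),e_1\rangle,\,-\langle v(q),e_2\rangle\bigr)$, that the Hessian has eigenvectors $(1,1)$ and $(1,-1)$ independent of the point, and hence that $q\in\ell$ iff $\langle v(q),e_1+e_2\rangle=0$ --- is clean and does exactly what is needed. The identification of $\ell$ with the $(1,1)$-direction axis is right because the paper singles out $\ell$ as the \emph{monotone} axis, not necessarily the major one.

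On the orientation issue you flag: the parallel limit you already invoke settles it. When $e_1=e_2$ the paper defines $d_\ell$ as the midline between $t_1$ and $t_2$, which has direction $e_1=\tfrac12(e_1+e_2)$; by continuity the bisector with direction $e_1+e_2$ must be $d_\ell$ in the non-degenerate case as well, since the two bisector directions $e_1\pm e_2$ vary continuously with $e_1,e_2$ and cannot swap roles without passing through a degenerate configuration. (Alternatively, one can read off from the side convention --- $t_1(0)$ and $t_2(0)$ on opposite sides of $d_\ell$ --- that $d_\ell$ is the internal bisector, which again has direction $e_1+e_2$.) So the ``real obstacle'' you anticipate dissolves with the limiting argument you already have in hand.
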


\begin{figure}[ht]
  \begin{center}
    \begin{tabular}{ccccccc}
      \includegraphics[height=1.9cm]{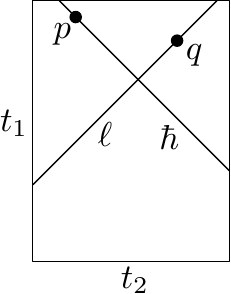} & &
       \includegraphics[height=2.4cm]{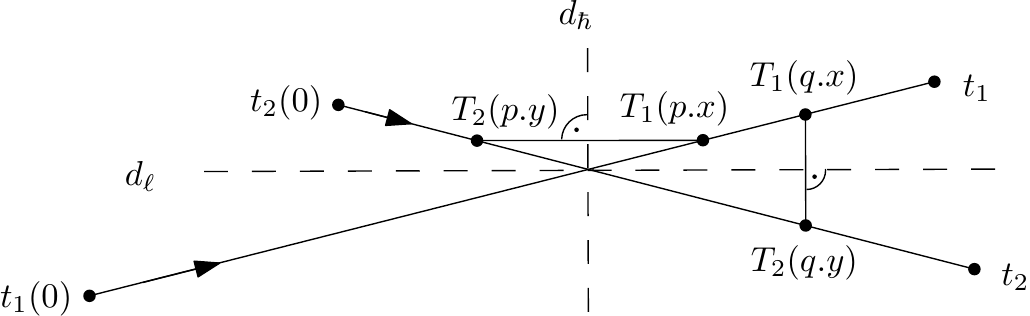}&&\\ 
      {\small (a) Point $p \in \ell$ and $q \in \hbar$.} & &
      {\small (b) To $p$ and $q$ corresponding leashes.}&&
    \end{tabular}
  \end{center}
  \vspace*{-12pt}
  \caption{Duality of parameter points from $\ell$ ($\hbar$) and leashes that lie perpendicular to $d_{\ell}$ ($d_{\hbar}$).}
  \label{fig:dual}
\end{figure}

	From now on, let $o,p \in \pi$ be two consecutive, canonical points that are given via Lemma~\ref{lem:separatingPoints} such that $o \leq_{xy} p$. Furthermore, let $\ell_o$ and $\ell_p$ be the free space axes of the parameter cells~$C_o$ and $C_p$ such that $o \in \ell_o \subset C_o$ and $p \in \ell_p \subset C_p$.
	
\begin{lemma}\label{lem:oneCrossing}
	If $\pi_{op}$ crosses one parameter edge $e$, $c_o,c_p \in e$ exist and we have $d_{\infty}(c_0,c_p) \leq \frac{w(u)}{2}$ where $u = \arg \min_{p \in e} w(p)$.
\end{lemma}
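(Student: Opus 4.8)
First I would put things in normal form. Without loss of generality the crossed edge $e$ is vertical, $e=\{x_e\}\times[y_b,y_t]$, with $C_o$ on its left and $C_p$ on its right. By Lemma~\ref{lem:separatingPoints}(P1), $\pi_{op}$ crosses no horizontal parameter line, so $C_o,C_p$ and $e$ share the $y$-range $[y_b,y_t]$, which is the parameter interval of a single segment $t_2$ of $T_2$; let $t_1^o,t_1^p$ be the two consecutive segments of $T_1$ corresponding to $C_o,C_p$, meeting at $T_1(x_e)$. Assume $o\neq\mathfrak s$ and $p\neq\mathfrak t$ (the remaining cases are exactly why the construction of $G_2$ adds $\mathfrak s c_{\mathfrak s}$ and $\mathfrak t c_{\mathfrak t}$); then Lemma~\ref{lem:separatingPoints}(P2) gives $x_e-o.x\ge\tfrac{\mu}{6}$, $p.x-x_e\ge\tfrac{\mu}{6}$, $o.y\ge y_b+\tfrac{\mu}{6}$ and $p.y\le y_t-\tfrac{\mu}{6}$. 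Lowness of $\pi$ gives $w\le\tfrac{\mu}{100}$ on $\pi_{op}$; writing $q^\ast:=\pi_{op}\cap e$ we get $w(u)\le w(q^\ast)\le\tfrac{\mu}{100}$.

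The first real step is to show that $t_1^o,t_1^p,t_2$ are pairwise almost parallel. Since $\pi_{op}$ is monotone and low, for every $x\in[o.x,p.x]$ there is a $y(x)\in[o.y,p.y]$ with $d_2(T_1(x),T_2(y(x)))\le\tfrac{\mu}{100}$ and $T_2(y(x))\in t_2$, so the subcurve $T_1|_{[o.x,p.x]}$ stays within distance $\tfrac{\mu}{100}$ of the line supporting $t_2$; in particular so do its two pieces $T_1|_{[o.x,x_e]}\subseteq t_1^o$ and $T_1|_{[x_e,p.x]}\subseteq t_1^p$, each of length $\ge\tfrac{\mu}{6}$. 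A straight segment of length $L$ that stays within distance $d$ of a line makes angle $\le\arcsin\tfrac{2d}{L}$ with it, so $\angle(t_1^o,t_2),\angle(t_1^p,t_2)\le\arcsin\tfrac{12}{100}<7^\circ$, and hence $\angle(t_1^o,t_1^p)<14^\circ$.

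Next I would locate $u$ relative to the axes $\ell_o,\ell_p$. Because $T_1(x_e)$ is within $\tfrac{\mu}{100}$ of $T_2(q^\ast.y)$ with $q^\ast.y\in[o.y,p.y]\subseteq[y_b+\tfrac{\mu}{6},\,y_t-\tfrac{\mu}{6}]$, the point of $t_2$ nearest $T_1(x_e)$ lies in the relative interior of $t_2$, so $u.y$ is its foot parameter and the leash $\vec L_u:=T_2(u.y)-T_1(x_e)$ is perpendicular to $t_2$. The bisector $d_{\ell_o}$ of $t_1^o$ and $t_2$ differs from $t_2$ by $\le 3.5^\circ$, so $\vec L_u$ is within $3.5^\circ$ of perpendicular to $d_{\ell_o}$, i.e.\ $|\vec L_u\cdot\widehat{d_{\ell_o}}|\le w(u)\sin 3.5^\circ$. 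Shifting a point $q$ of a cell along the $\hbar$-direction $(1,-1)$ moves its leash by a multiple of $\vec u(t_1)+\vec u(t_2)\parallel d_\ell$; this gives the identity $\operatorname{dist}(q,\ell)=|\vec L_q\cdot\widehat{d_\ell}|/\sqrt{1+\cos\angle(t_1,t_2)}$ for the Euclidean point--line distance (by Observation~\ref{obs:dual} the foot on $\ell$ is where the shifted leash turns perpendicular to $d_\ell$). Applying this at $u$ yields $\operatorname{dist}(u,\ell_o)\le w(u)\sin 3.5^\circ/\sqrt{1+\cos 7^\circ}<\tfrac{1}{20}w(u)$, and symmetrically $\operatorname{dist}(u,\ell_p)<\tfrac{1}{20}w(u)$.

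Finally I would conclude. A monotone free-space axis is a line of gradient $1$ (inside a cell $T_1,T_2$ are arc-length parametrised, so the free-space ellipse has its axes along the two diagonal directions), so $\ell_o$, at perpendicular distance $\le\tfrac{1}{20}w(u)$ from $u=(x_e,u.y)$, meets the vertical line $x=x_e$ at a point whose $y$-coordinate is within $\sqrt 2\cdot\tfrac{1}{20}w(u)<\tfrac{1}{4}w(u)\le\tfrac{\mu}{400}$ of $u.y$, and the same holds for $\ell_p$. Since $u.y$ lies in $[y_b+\tfrac{\mu}{6}-\tfrac{\mu}{50},\,y_t-\tfrac{\mu}{6}+\tfrac{\mu}{50}]\subset(y_b,y_t)$, these two intersection points lie in the relative interior of $e$; being the $\leq_{xy}$-largest point of $\ell_o\cap C_o$ and the $\leq_{xy}$-smallest point of $\ell_p\cap C_p$, they are exactly $c_o$ and $c_p$, hence $c_o,c_p\in e$. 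Moreover $d_\infty(c_o,c_p)=|c_o.y-c_p.y|\le|c_o.y-u.y|+|u.y-c_p.y|\le 2\cdot\tfrac{1}{4}w(u)=\tfrac{1}{2}w(u)$, as required. The crux is the first step: wringing near-parallelism of $t_1^o,t_1^p,t_2$ (and, in the same stroke, forcing the minimiser $u$ into the interior of $e$) out of the lowness of $\pi$ and the $\tfrac{\mu}{6}$-margins of the canonical points; after that the $\operatorname{dist}(q,\ell)$ identity and the trigonometry that brings the constant down to $\tfrac{1}{2}$ are routine.
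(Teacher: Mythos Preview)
Your proof is correct and follows the same two-step route as the paper: first use lowness of $\pi$ together with the $\tfrac{\mu}{6}$ margins from Lemma~\ref{lem:separatingPoints}(P2) to get $\angle(t_1^o,t_2),\angle(t_1^p,t_2)\le\arcsin\tfrac{12}{100}<7^\circ$, then convert the resulting $3.5^\circ$ bisector deviation into $d_\infty(c_o,c_p)\le\tfrac12 w(u)$. Your distance-to-axis identity $\operatorname{dist}(q,\ell)=|\vec L_q\cdot\widehat{d_\ell}|/\sqrt{1+\cos\angle}$ and the explicit verification that $c_o,c_p$ land in the interior of $e$ are a tidier packaging of what the paper does via a direct leash-angle computation in Euclidean space (using $\tan 3.5^\circ$), but the underlying argument is the same.
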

\begin{proof} W.l.o.g., we assume that $e$ is horizontal. Let $t_1,t_2 \in T_1$ and $t_3 \in T_2$ be the segments that induce  parameter cells $C_o$ and $C_p$. First, we show $\angle (t_1,t_3), \angle (t_2,t_3) \leq 7^{\circ}$ and, then, that $d_1(c_0,c_p) \leq w(u)$. Let $q_1 \in \ell_o$ and $q_2 \in \ell_p$ such that $q_1.x = c_p$ and $q_2.x = c_o$, see Figure~\ref{fig:oneCrossing}(a). $\angle (t_1,t_3) \leq 7^{\circ}$ implies $\angle (T_1(u.x)T_2(u.y), T_1(u.x)T_2(q_2.y))\leq 3.5^{\circ}$. Furthermore, $c_p = e \cap \ell_p$ implies: $c_p$ corresponds to a leash $l_p = (T_1(c_p.x), T_2(c_p.y))$ such that $T_1(c_p.x) = T_1(u.x)$ and $T_1(c_p.x), T_2(c_p.y) \bot d_{\ell_o}$, see Figure~\ref{fig:oneCrossing}(b). Thus, $d_2(T_2(q_2.y),T_2(u.y))$ is upper-bounded by $d_2(T_2(u.y), T_2(q_2.y))$ which is upper-bounded by $d_2(T_1(u.x),T_2(u.y))\tan (3.5^{\circ}) \leq 0.065 w(u) < \frac{w(u)}{2}$.

\begin{figure}[ht]
  \begin{center}
    \begin{tabular}{ccccccc}
      \includegraphics[height=3.2cm]{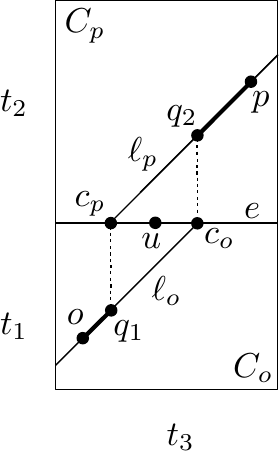} & &
       \includegraphics[height=3.2cm]{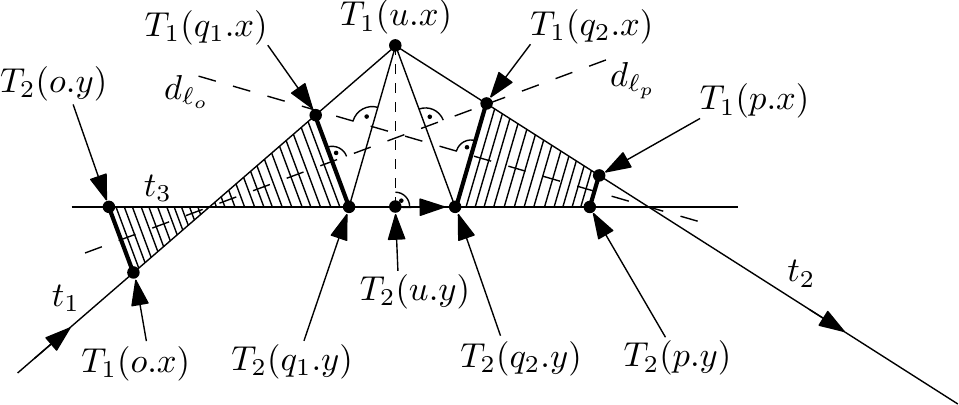}&&\\ 
      {\small (a) The subpath $\pi_{op} \setminus (\ell \cup \hbar)$} & &
      {\small (b) The length of the subcurve of $T_1$ that corresponds to}&&\\
      {\small lies in the convex hull of}& &
      {\small the $x$-coordinates of the points from $\pi_{op} \setminus (\ell \cup \hbar)$}&&\\
      {\small $q_1$, $c_o$, $q_2$, and $c_p$.}&&
      {\small depends on $d_2(T_1(u.x),T_2(u.y))$.}&&
    \end{tabular}
  \end{center}
  \vspace*{-12pt}
  \caption{Configuration of the Lemmas~\ref{lem:oneCrossing} and~\ref{lem:shortestPathOneCrossing}: The length of the subpath of $\pi_{op}$ that does not necessarily lie on $\ell \cup \hbar$ is related to $w(u)$.}
  \label{fig:oneCrossing}
\end{figure}

	Finally, we show $\angle (t_1,t_3), \angle (t_2,t_3) \leq 7^{\circ}$: We have $d_2(T_1(o.x), T_2(o.y)), d_2(T_1(u.x), T_2(u.x)) \leq \frac{\mu}{100}$ because $\pi$ is low. Lemma~\ref{lem:separatingPoints} implies $d_2(T_1(o.x),T_1(u.x)), d_2(T_2(o.y), T_2(u.y)) \geq \frac{\mu}{6}$. Thus, $\angle (t_1,t_3) \leq \arcsin \frac{6}{50} \leq 7^{\circ}$. A similar argument implies that $\angle (t_2,t_3) \leq \arcsin \frac{6}{50} \leq 7^{\circ}$
\end{proof}
	
\begin{lemma}\label{lem:shortestPathOneCrossing}
	$\pi_{op} \subset \ell_o \cup B_{w(u)}(u) \cup \ell_p$ (see Figure~\ref{fig:captureTheSubpath}(a)).
\end{lemma}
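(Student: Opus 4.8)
The plan is to use Lemma~\ref{lem:canonicalOneVertex} to reduce the statement to a local estimate near the single crossed edge~$e$ (assume, as in the proof of Lemma~\ref{lem:oneCrossing}, that $e$ is horizontal). By that lemma, $\pi_{op}$ is the concatenation of a subsegment of $\ell_o$ running from $o$ to $c_o$ (or to a point $z_o\in\ell_o$), a short ``bridge'' segment that crosses~$e$, and a subsegment of $\ell_p$ running from $c_p$ (or $z_p$) to $p$. Hence $\pi_{op}\setminus(\ell_o\cup\ell_p)$ is contained in the bridge, so it suffices to show the bridge lies in $B_{w(u)}(u)$. Since $B_{w(u)}(u)$ is an $L_\infty$-ball and thus convex, it is enough to check that the two endpoints of the bridge lie within $L_\infty$-distance $w(u)$ of~$u$, and the whole argument reduces to locating $c_o,c_p$ (resp. $z_o,z_p$) relative to~$u$, where $u=\arg\min_{p\in e}w(p)$.

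The first step is to show that $u$ is very close to both $c_o$ and $c_p$ along~$e$. As established in the proof of Lemma~\ref{lem:oneCrossing}, $\angle(t_1,t_3),\angle(t_2,t_3)\le 7^\circ$, so the bisectors $d_{\ell_o},d_{\ell_p}$ lie within $3.5^\circ$ of~$t_3$. By Observation~\ref{obs:dual}, the leashes of $c_o$ and $c_p$ — which share their pinned endpoint (the common vertex of the segments meeting at the parameter line through $e$) with the leash of $u$, whose leash is perpendicular to~$t_3$ — make an angle at most $3.5^\circ$ with the leash at~$u$. Comparing their feet on~$t_3$ then gives $d_\infty(u,c_o),d_\infty(u,c_p)\le w(u)\tan 3.5^\circ\le 0.065\,w(u)$. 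In the case $c_o\le_{xy}c_p$ of Lemma~\ref{lem:canonicalOneVertex} the bridge is exactly the segment $c_oc_p\subset e$, so by convexity it lies in $B_{0.065\,w(u)}(u)\subset B_{w(u)}(u)$ and this case is finished.

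The harder case is $c_o\not\le_{xy}c_p$, where the bridge is the segment $z_oz_p$ perpendicular to~$e$ through its crossing point~$z$, with $z_o\in\ell_o$ and $z_p\in\ell_p$. I would first note that $z_o\le_{xy}c_o$ and $c_p\le_{xy}z_p$ (as $c_o,c_p$ are the extreme endpoints of $\ell_o,\ell_p$), and $z_o\le_{xy}z_p$ since $\pi$ is monotone; in the horizontal-$e$ setting this forces the along-$e$ coordinate of $z$ to lie between those of $c_p$ and $c_o$, so $z$ is within $d_\infty(c_o,c_p)\le\frac{w(u)}{2}$ of $c_o$ along~$e$ and hence within $0.565\,w(u)$ of $u$ along~$e$. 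For the displacement orthogonal to~$e$, I would use that the monotone free-space axis of any cell has gradient~$1$ in the parameter space (a short computation shows the free-space ellipse always has the diagonal as one axis, the disc case noted in Section~\ref{sec:prelim} being the familiar instance); thus travelling along $\ell_o$ from $c_o$ until the along-$e$ coordinate matches that of $z$ moves an equal amount orthogonal to~$e$, i.e. at most $\frac{w(u)}{2}$. Combining the two displacements yields $d_\infty(z_o,u)\le 0.565\,w(u)<w(u)$, and symmetrically for $z_p$; convexity then places the whole bridge in $B_{w(u)}(u)$, and together with $oz_o\subset\ell_o$ and $z_pp\subset\ell_p$ we obtain $\pi_{op}\subset\ell_o\cup B_{w(u)}(u)\cup\ell_p$.

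The main obstacle is exactly this second case: one must track the along-$e$ spread of the bridge (controlled by Lemma~\ref{lem:oneCrossing} together with the near-perpendicularity of the axes) and its orthogonal spread (controlled by the gradient-$1$ property of the free-space axes) simultaneously, and verify that the accumulated constants stay safely below~$1$. The remainder is routine bookkeeping with the structural description of $\pi_{op}$ from Lemma~\ref{lem:canonicalOneVertex} and the $7^\circ$ angle bound already obtained in Lemma~\ref{lem:oneCrossing}.
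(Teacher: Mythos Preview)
Your proof is correct and follows essentially the same approach as the paper: both combine Lemma~\ref{lem:canonicalOneVertex} (to decompose $\pi_{op}$ into two axis pieces plus a short bridge across~$e$) with Lemma~\ref{lem:oneCrossing} (to bound that bridge near~$u$). Your write-up is in fact more explicit than the paper's in two places---you separately treat the two cases of Lemma~\ref{lem:canonicalOneVertex}, and you spell out why $z_o,z_p$ stay within $L_\infty$-distance $w(u)$ of~$u$ using the gradient-$1$ property of the free space axes together with the bound $d_\infty(u,c_o),d_\infty(u,c_p)\le 0.065\,w(u)$---details which the paper's terse ``Thus, $z_oz_p\subset B_{w(u)}(u)$'' leaves to the reader.
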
 
\begin{proof}
	We combine Lemmas~\ref{lem:canonicalOneVertex} and~\ref{lem:oneCrossing}. Lemma~\ref{lem:canonicalOneVertex} implies that $\pi_{op}$ orthogonally crosses $e$   at a point $z$ that lies between $c_o$ and $c_p$ such that $z \in z_oz_p \subset \pi_{op}$. Lemma~\ref{lem:oneCrossing} implies $d_1(c_o,c_p)\leq\frac{w(u)}{2}$. Thus, $z_oz_p \subset B_{w(u)}(u)$. Furthermore, $oz_o \subset \ell_o$ and $z_pp \subset \ell_p$. This implies $\pi_{op} \subset  \ell_o \cup B_{w(u)}(u) \cup \ell_p$ because $\pi_{op} = oz_o \cup z_oz_p \cup z_pp$.
\end{proof}

\begin{lemma}\label{lem:shortestPathOneCrossingAppr}
	There is a path $\widetilde{\pi}_{op} \subset G_2$ between $o$ and $p$ such that $|\widetilde{\pi}_{op}|_w \leq (1+\varepsilon) |\pi_{op}|_w$.
\end{lemma}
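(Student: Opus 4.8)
The plan is to reduce everything to Lemma~\ref{lem:shortestPathOneCrossing} and Lemma~\ref{lem:canonicalOneVertex}, which already locate $\pi_{op}$ precisely. Lemma~\ref{lem:canonicalOneVertex} gives the decomposition $\pi_{op}=oz_o\cup z_oz_p\cup z_pp$ (or the simpler $oc_o\cup c_oc_p\cup c_pp$) with $oz_o\subset\ell_o$, $z_pp\subset\ell_p$, and the transition segment $z_oz_p$ crossing $e$ orthogonally, and Lemma~\ref{lem:shortestPathOneCrossing} tells us $\pi_{op}\setminus(\ell_o\cup\ell_p)\subseteq B_{w(u)}(u)$. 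Since $\ell_o$ and $\ell_p$ (restricted to $C_o,C_p$) are components of the arrangement defining $G_2$, and the grid ball $G_{62w(u)}(u)$ — which approximates $B_{62w(u)}(u)\supseteq B_{w(u)}(u)$ — is also such a component, I would build $\widetilde{\pi}_{op}$ by letting it coincide with $\pi_{op}$ on the two axis pieces $oz_o$ and $z_pp$, and replacing the transition part by a monotone staircase through $G_{62w(u)}(u)$ joining (the grid images of) $z_o$ and $z_p$; the only book-keeping is to splice short connectors along $\ell_o,\ell_p$ at the grid lines where the staircase enters and leaves the grid ball, which shifts the peel-off/re-join points by at most one mesh width. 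The resulting $\widetilde{\pi}_{op}$ lies in $G_2$, is $xy$-monotone, and has $|\widetilde{\pi}_{op}|=|\pi_{op}|$.

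For the approximation quality I would argue exactly as in Lemma~\ref{lem:apprQualityG1}: parametrize $\pi_{op},\widetilde{\pi}_{op}:[0,1]\to P$ proportionally to $L_1$-arclength so that $\|\pi_{op}'\|_1=\|\widetilde{\pi}_{op}'\|_1\equiv d_1(o,p)$, and so that on the $t$-subinterval corresponding to $z_oz_p$ the points $\pi_{op}(t)$ and $\widetilde{\pi}_{op}(t)$ lie in a common cell of the grid ball. Since $G_{62w(u)}(u)$ has mesh size $\tfrac{\varepsilon}{456}w(u)$, this gives $d_1(\pi_{op}(t),\widetilde{\pi}_{op}(t))\le\tfrac{\varepsilon}{228}w(u)$, hence $w(\widetilde{\pi}_{op}(t))\le w(\pi_{op}(t))+\tfrac{\varepsilon}{228}w(u)$ by the $1$-Lipschitz property (Lemma~\ref{lem:lip}); on the remaining $t$-subintervals the two curves coincide and the weights are equal. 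Using that $\pi_{op}\setminus(\ell_o\cup\ell_p)$ lies in the $L_\infty$-ball $B_{w(u)}(u)$ and therefore has total $L_1$-length at most $4w(u)$, integration yields
\[
  |\widetilde{\pi}_{op}|_w\;\le\;|\pi_{op}|_w+\tfrac{\varepsilon}{228}w(u)\cdot 4w(u)\;=\;|\pi_{op}|_w+\tfrac{\varepsilon}{57}\,w(u)^2 .
\]

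It remains to absorb the additive error into $\varepsilon|\pi_{op}|_w$, i.e.\ to prove the key estimate $|\pi_{op}|_w\ge\tfrac{1}{16}w(u)^2$. Pick $z\in\pi_{op}\cap e$. Then $w(z)\ge\min_{q\in e}w(q)=w(u)$, while $\pi$ being low gives $w(u)\le w(z)\le\tfrac{\mu}{100}$. By Lemma~\ref{lem:separatingPoints}(P2) the endpoints $o,p$ have $L_\infty$-distance at least $\tfrac{\mu}{6}$ from every parameter line, in particular from $e$, so $d_\infty(o,z),d_\infty(p,z)\ge\tfrac{\mu}{6}>\tfrac{w(u)}{8}$ (the two subpaths incident to $\mathfrak{s}$ or $\mathfrak{t}$ are symmetric and use item~(3) of the construction of $G_2$). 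Hence the monotone path $\pi_{op}$ enters $B_{w(u)/8}(z)$ at some $e_1\leq_{xy}z$ with $d_\infty(e_1,z)=\tfrac{w(u)}{8}$, and the sub-arc of $\pi_{op}$ from $e_1$ to $z$ stays inside $B_{w(u)/8}(z)$ and, being monotone, has $L_1$-length $d_1(e_1,z)\ge d_\infty(e_1,z)=\tfrac{w(u)}{8}$. The $L_1$-diameter of $B_{w(u)/8}(z)$ is $\tfrac{w(u)}{2}$, so the $1$-Lipschitz property gives $w\ge w(z)-\tfrac{w(u)}{2}\ge\tfrac{w(u)}{2}$ on this sub-arc, whence $|\pi_{op}|_w\ge\tfrac{w(u)}{2}\cdot\tfrac{w(u)}{8}=\tfrac{w(u)^2}{16}$. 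Combining with the displayed inequality, $|\widetilde{\pi}_{op}|_w\le|\pi_{op}|_w+\tfrac{\varepsilon}{57}w(u)^2\le|\pi_{op}|_w+\tfrac{16\varepsilon}{57}|\pi_{op}|_w\le(1+\varepsilon)|\pi_{op}|_w$.

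I expect the main obstacle to be not the estimates but the first step: producing $\widetilde{\pi}_{op}$ as a genuine $xy$-monotone path that lives in $G_2$, runs through $o$ and $p$, and keeps $\pi_{op}(t)$ and $\widetilde{\pi}_{op}(t)$ cell-aligned inside the grid ball — in particular, one must check that snapping the (near-axis-parallel) transition segment to the grid does not destroy monotonicity and that the short connectors along $\ell_o,\ell_p$ cost only $\mathcal{O}(\varepsilon w(u))$ additional weighted length, which the lower bound $|\pi_{op}|_w\ge\tfrac{1}{16}w(u)^2$ then comfortably absorbs.
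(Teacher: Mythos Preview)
Your approach is essentially the paper's: split $\pi_{op}$ into two axis pieces (already in $G_2$) and a middle piece near $u$, approximate the middle piece by a staircase in the grid ball $G_{62w(u)}(u)$ using the construction of Lemma~\ref{lem:apprQualityG1}, and absorb the additive error $O(\varepsilon\,w(u)^2)$ into a quadratic lower bound on the weighted length. Your estimates are correct, including the lower bound $|\pi_{op}|_w\ge\tfrac{1}{16}w(u)^2$.

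The one substantive difference is \emph{where you cut}. You peel off at the actual turn points $z_o,z_p$ of Lemma~\ref{lem:canonicalOneVertex}, which lie in $B_{w(u)}(u)$ but are in general \emph{not} vertices of the arrangement defining $G_2$; this is exactly what forces the ``connectors'' you flag as the main obstacle. The paper sidesteps this entirely by redefining $z_o$ and $z_p$ as the intersections of $\ell_o$ and $\ell_p$ with $\partial B_{62w(u)}(u)$ (on the left/bottom and right/top boundary edges, respectively). Those points \emph{are} arrangement vertices by construction, so the staircase can start and end exactly there with no splicing. The price is that the middle piece $\pi_{z_oz_p}$ is now much longer (order $10^2\,w(u)$ rather than $4w(u)$), so the additive error grows to $\tfrac{\varepsilon}{2}w(u)^2$; correspondingly the paper proves the stronger lower bound $|\pi_{z_oz_p}|_w\ge\tfrac{1}{2}w(u)^2$ via the same $1$-Lipschitz argument applied to $\psi:=\pi_{z_oz_p}\cap B_{w(u)}(u)$.

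So the worry in your last paragraph is precisely the gap: when the transition segment has length below the mesh size (which can happen --- e.g.\ in the first case of Lemma~\ref{lem:canonicalOneVertex} with $c_o$ close to $c_p$), snapping to the grid and splicing back onto $\ell_o,\ell_p$ while preserving monotonicity is not automatic. Moving the cut points out to $\partial B_{62w(u)}(u)$ removes this issue cleanly; with that single change your argument goes through.
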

\begin{proof}: By Lemma~\ref{lem:shortestPathOneCrossing}, the two following intersection points are well defined: Let $z_o$ be the intersection point of $\ell_o$ and $\partial B_{62w(u)}(u)$ which lies on the left or bottom edge of $\partial B_{62w(u)}(u)$. Analogously, let $z_p$ be the intersection point of $\ell_p$ and $\partial B_{62w(u)}(u)$ which lies on the right or top edge of $\partial B_{62w(u)}(u)$. By Lemma~\ref{lem:shortestPathOneCrossing}, we can subdivide $\pi_{op}$ into the three pieces $oz_o \subset \ell_o$, $\pi_{z_oz_p}$, and $z_pp \subset \ell$. As $oz_o,z_pp \subset G_2$, we just have to construct a path $\widetilde{\pi}_{z_oz_p} \subset G_2$ between~$z_o$ and $z_p$ such that $|\pi_{z_oz_p}|_w \leq (1 + \varepsilon) |\widetilde{\pi}_{z_oz_p}|_w$.
	
	We construct $\widetilde{\pi}_{z_oz_p}$ by applying the same approach as used in the proof of Lemma~\ref{lem:apprQualityG1}, see Figure~\ref{fig:captureTheSubpath}(a).
	
	To upper-bound $|\widetilde{\pi}_{z_oz_p}|_w$ by $(1+\varepsilon)|\pi_{z_oz_p}|_w$ we first lower-bound $|\pi_{z_oz_p}|_w$ by $\frac{1}{2}w^2(u)$. Then, we apply an approach that is similar to the approach used in the proof of Lemma~\ref{lem:apprQualityG1} 
	\begin{itemize}
		\item $|\pi_{z_oz_p}|_w \geq \frac{1}{2}w^2(u)$: Let $\psi := \pi_{z_oz_p} \cap B_{w(u)}(u)$. As $|\psi| \geq w(u)$ and $w(\cdot)$ is $1$-Lipschitz, we obtain $|\psi|_w \geq \frac{1}{2}w^2(u)$. This implies $|\pi_{z_oz_p}|_w \geq \frac{1}{2}w^2(u)$ because~$\psi \subset \pi_{z_oz_p}$.
		\item $|\widetilde{\pi}_{z_oz_p}|_w \leq (1+\varepsilon) |\pi_{z_oz_p}|_w$: We observe that $|\pi_{z_oz_p}| \leq 114 w(u)$ $(\ddagger)$ because $\pi_{z_oz_p}$ is monotone and $\pi_{z_oz_p} \subset B_{62w(u)}(u)$. Furthermore, we parametrize $\widetilde{\pi}_{z_oz_p}, \pi_{z_oz_p}: [0,1] \rightarrow P$ such that $d_1(z_o, \widetilde{\pi}_{z_oz_p}) = d_1(z_o,\pi_{z_oz_p})$. This implies $w(\widetilde{\pi}_{z_oz_p}(t)) \leq w(\pi_{z_oz_p}(t)) + \frac{\varepsilon w(u)}{228}$ $(\dagger)$ and $||(\widetilde{\pi}_{z_oz_p})'(t)||_1 = ||(\pi_{z_oz_p})'(t)||_1$ $(\star)$ for all $t \in [0,1]$. Thus:
			\begin{eqnarray*}
				|\widetilde{\pi}_{z_oz_p}|_w & = & \int_0^1 w(\widetilde{\pi}_{z_oz_p}(t)) ||(\widetilde{\pi}_{z_oz_p})'(t)||_1 \ dt \\
				& \stackrel{(\dagger) + (\star)}{\leq} & \int_0^1 w(\pi_{z_oz_p})||(\pi_{z_oz_p})'(t)||_1 \ dt + \frac{\varepsilon w(u)}{228}\int_0^1 ||(\pi_{z_oz_p})'(t)||_1 \ dt\\
				& \stackrel{(\ddagger)}{\leq} & |\pi_{z_oz_p}|_w + \frac{\varepsilon}{2}w^2(u)\\
				& \stackrel{|\pi_{z_oz_p}|_w \geq \frac{1}{2}w^2(u)}{\leq}& (1+\varepsilon)|\pi_{z_oz_p}|_w.
			\end{eqnarray*}
	\end{itemize}
\end{proof}
\subsubsection{Analysis of subpaths that cross two parameter edges}\label{subsec:anaTwoCrossing}
	Let $q$ and $s$ be two consecutive parameter points from $\{ p_2,\dots,p_{k-1} \}$ such that $\pi_{qs}$ crosses two parameter edges $e_1$ and $e_2$. By Lemma~\ref{lem:separatingPoints}, $e_1$ and $e_2$ are perpendicular to each other and are adjacent at a point $c$. Let $C_r$ be the parameter cell such that $e_1$ and $e_2$ are part of the boundary of $C_r$. Furthermore, let $C_q$ and $C_s$ be the parameter cells such that $q \in C_q$ and $s \in C_{s}$. We denote the monotone free space axis of $C_q$, $C_r$, and $C_s$ by $\ell_q$, $\ell_r$, and $\ell_s$, respectively. Let $u_1 := \arg \min_{a \in e_1}w(a)$ and $u_2 := \arg \min_{a \in e_2}w(a)$.

\begin{lemma}\label{lem:twoCrossing}
	If $d_1(u_1,u_2) \geq 6 \max \{ w(u_1), w(u_2) \}$, there is another canonical parameter point $r \in \ell_r$ such that $\pi_{qs} \subset \ell_{q} \cup B_{w(u_1)}(u_1) \cup \ell_{r} \cup B_{w(u_2)}(u_2) \cup \ell_s$.
	
\begin{figure}[ht]
  \begin{center}
    \begin{tabular}{ccccccc}
      \includegraphics[height=4cm]{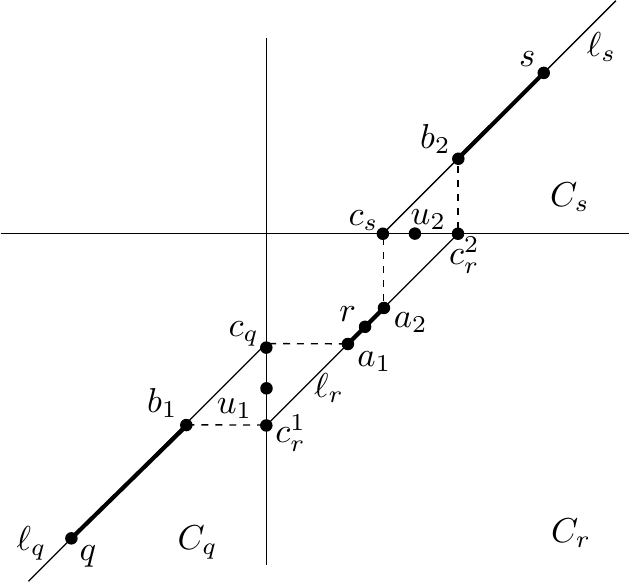} & &
       \includegraphics[height=4cm]{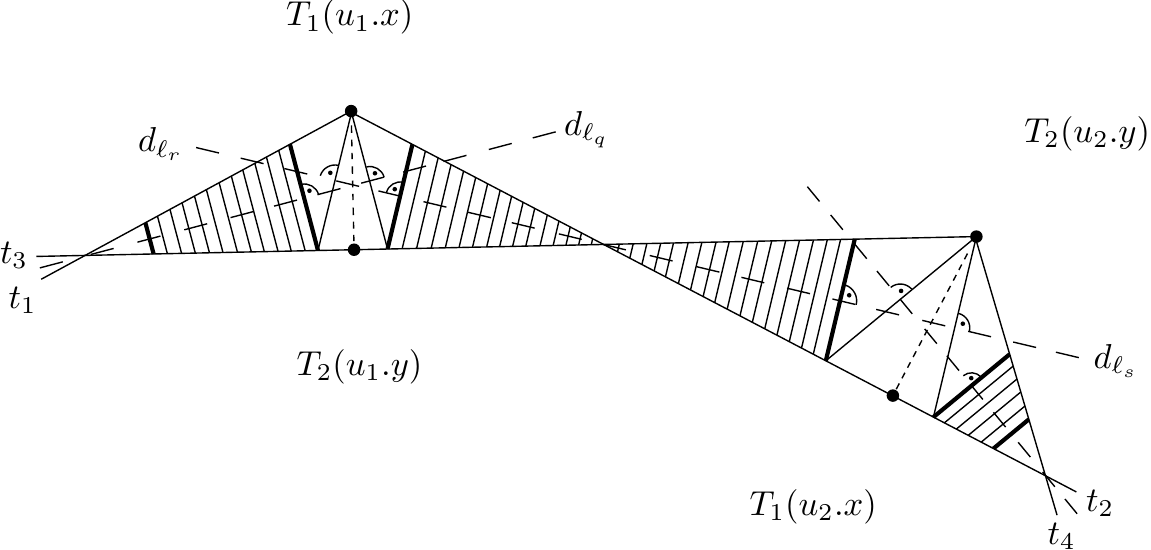}&&\\ 
      {\small (a) Construction of $\pi_{qs}$} & &
      {\small (b) The matching that corresponds $\pi_{qs}$ has three}&&\\
      {\small s.t. $qb_1,a_1a_2,b_2w \subset \pi_{qs}$,}& &
      {\small parts in that the matched points from $T_1 \times T_2$ lie}&&\\
      {\small $\pi_{b_1a_1} \subset B_{w(u_1)}(u_1)$, and}&&
      {\small perpendicular to the corresponding angular bisector}&&\\
      {\small $\pi_{a_1b_2} \subset B_{w(u_2)}(u_2)$}&&
      {\small }&&
    \end{tabular}
  \end{center}
  \vspace*{-12pt}
  \caption{Configuration of Lemma~\ref{lem:twoCrossing} in the Euclidean space of $T_1$ and $T_2$ and in the parameter space of $\pi_{qs}$: There are three supaths $qb_1, a_1a_2,b_2s \subset \pi_{qs}$ that lie on monotone free space axis $\ell_q$, $\ell_r$, and $\ell_s$ and, thus, three parts of the corresponding matching which are made up of leashes that are perpendicular to the diagonals $d_{\ell_q}$, $d_{\ell_r}$, and $d_{\ell_s}$ that correspond to $\ell_q$, $\ell_r$, and $\ell_s$.}
  \label{fig:twoCrossing}
\end{figure}

\end{lemma}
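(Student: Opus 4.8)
The plan is to reduce this two‑crossing situation to two copies of the single‑crossing analysis of Subsection~\ref{subsec:anaOneCrossing}, glued along a canonical point $r$ on $\ell_r$. First I would pin down the coarse shape of $\pi_{qs}$. By Property~(P1) of Lemma~\ref{lem:separatingPoints}, $\pi_{qs}$ crosses at most one vertical and at most one horizontal parameter line, so the two perpendicular edges $e_1$ and $e_2$ it crosses are exactly that vertical and that horizontal line; hence $\pi_{qs}$ runs through the three cells $C_q,C_r,C_s$, entering $C_r$ through $e_1$ at a point $z_1$ and leaving through $e_2$ at a point $z_2$, with $z_1\le_{xy}z_2$. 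Applying Lemma~\ref{lem:key} inside each of $C_q$, $C_r$, $C_s$ exhibits $\pi_{qs}$ as a concatenation of up to seven pieces: a subsegment of $\ell_q$ from $q$; a segment to $z_1$; inside $C_r$, a segment from $z_1$ to a point $c_1\in\ell_r$, a subsegment $c_1c_2\subset\ell_r$, and a segment from $c_2$ to $z_2$; inside $C_s$, a segment from $z_2$ to a point on $\ell_s$; and a subsegment of $\ell_s$ ending at $s$ (with the usual degenerate subcases of Lemma~\ref{lem:key} when an axis fails to meet the relevant rectangle). I would take $r$ to be the midpoint of $c_1c_2$: it lies on $\ell_r$ and on $\pi_{qs}$, it is the canonical point the lemma demands, and $\pi_{qs}=\pi_{qr}\cup\pi_{rs}$ with $\pi_{qr}$ crossing only $e_1$ and $\pi_{rs}$ crossing only $e_2$.

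It remains to show the two off‑axis transitions are swallowed by the balls. Since $\pi$ is low, $w(z_1),w(z_2)\le\frac{\mu}{100}$, hence also $w(u_1),w(u_2)\le\frac{\mu}{100}$. As in the opening of the proof of Lemma~\ref{lem:oneCrossing}, the two cell‑segments that are \emph{not} shared between $C_q$ and $C_r$ are each within about $7^\circ$ of the one that \emph{is} shared — this uses only that $\pi$ is low and that $q$ has distance at least $\frac{\mu}{6}$ from every parameter line (Property~(P2)) — and symmetrically at $e_2$ using $s$. Feeding these near‑parallelism estimates into the bookkeeping of Lemmas~\ref{lem:canonicalOneVertex} and~\ref{lem:oneCrossing} at $e_1$ (now with $q$ on one side and the canonical point $r\in\ell_r$ on the other) shows that the whole transition of $\pi_{qs}$ from $\ell_q$ to $\ell_r$ — the portion between the last point of $\pi_{qs}$ on $\ell_q$ and the first point on $\ell_r$, which includes $z_1$ — lies within $L_\infty$‑distance $\frac{1}{2}w(u_1)$ of $u_1$ and hence inside $B_{w(u_1)}(u_1)$. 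The symmetric computation at $e_2$ puts the transition from $\ell_r$ to $\ell_s$ inside $B_{w(u_2)}(u_2)$. Together with the obvious inclusions of the three axis‑pieces, this gives $\pi_{qs}\subset\ell_q\cup B_{w(u_1)}(u_1)\cup\ell_r\cup B_{w(u_2)}(u_2)\cup\ell_s$.

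The hypothesis $d_1(u_1,u_2)\ge 6\max\{w(u_1),w(u_2)\}$ enters to guarantee that the middle piece $c_1c_2$ is genuinely present — equivalently, that the two balls $B_{w(u_1)}(u_1)$ and $B_{w(u_2)}(u_2)$ are disjoint and $\pi_{qs}$ must actually climb onto $\ell_r$ between them — so that $c_1<_{xy}c_2$, the canonical point $r$ is well defined, and the split $\pi_{qs}=\pi_{qr}\cup\pi_{rs}$ is consistent with Lemma~\ref{lem:key}; without this separation one falls into Case~2.2, where a single ball must be used instead. The step I expect to be the real obstacle is not this skeleton but the quantitative middle paragraph: transporting the $7^\circ$‑style angle bounds and the $\frac{1}{2}w(u_i)$ estimates of Lemma~\ref{lem:oneCrossing} to the present setting, where the ``other'' endpoint of each single‑crossing piece is the constructed point $r$ rather than a separating point from Lemma~\ref{lem:separatingPoints}, and verifying that the small balls of radius $w(u_i)$ (not the inflated radius $62\,w(u_i)$ used later for the approximation argument) really do contain the transitions. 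Making the constants close here — invoking the separation hypothesis exactly where it is needed — is the heart of the proof.
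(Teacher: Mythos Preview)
Your skeleton --- split $\pi_{qs}$ at a canonical point $r\in\ell_r$ and run the single-crossing analysis of Lemma~\ref{lem:oneCrossing} twice --- is exactly the paper's strategy. The gap is in your account of \emph{where} the hypothesis $d_1(u_1,u_2)\ge 6\max\{w(u_1),w(u_2)\}$ is used.

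You say the $7^\circ$ bound at $e_1$ ``uses only that $\pi$ is low and that $q$ has distance at least $\tfrac{\mu}{6}$ from every parameter line,'' and that the separation hypothesis merely certifies that the middle axis piece $c_1c_2$ is nonempty. That is not right. Look back at the proof of Lemma~\ref{lem:oneCrossing}: it needs \emph{two} angle bounds, $\angle(t_1,t_3)\le 7^\circ$ and $\angle(t_2,t_3)\le 7^\circ$, one for each cell adjacent to the edge, and each comes from a (P2) separating point in that cell. At $e_1$ in the present situation the adjacent cells are $C_q$ and $C_r$. The bound $\angle(t_1,t_3)\le 7^\circ$ follows from $q$ as you say, but there is \emph{no} (P2) point inside $C_r$ --- your constructed $r$ is not a separating point from Lemma~\ref{lem:separatingPoints} and could sit arbitrarily close to the corner $c$. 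So the lowness/(P2) argument gives you nothing for $\angle(t_2,t_3)$, the angle governing $\ell_r$.

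This is precisely what the separation hypothesis supplies. Since $u_1\in e_1$ and $u_2\in e_2$ both lie on $\partial C_r$, the inequality $d_1(u_1,u_2)\ge 6\max\{w(u_1),w(u_2)\}$ says that two leashes anchored on $t_2$ and $t_3$ are short compared to their spacing along those segments; the paper turns this into $\angle(t_2,t_3)\le\arcsin(\tfrac{2}{3})<42^\circ$. With $42^\circ$ in place of $7^\circ$ the half-angle is $21^\circ$, $\tan(21^\circ)<\tfrac12$, and the $\tfrac12 w(u_i)$ estimates of Lemma~\ref{lem:oneCrossing} still go through at both $e_1$ and $e_2$. So the hypothesis is doing the quantitative work you flagged as ``the real obstacle,'' not merely ensuring the balls are disjoint; once you put it there, your outline becomes the paper's proof.
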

\begin{proof}
	W.l.o.g., assume that $\pi_{qs}$ crosses first a vertical parameter edge. Let $t_1,t_2 \in T_1$ and $t_3,t_4 \in T_2$ be the segments that induce  parameter cells $C_q$, $C_r$, and $C_s$, see Figure~\ref{fig:twoCrossing}. Let $c_q$ and $c_r^2$ be the top-right end points of $\ell_q$ and $\ell_r$, respectively. Let  $c_r^2$ and $c_s$ be the bottom-left end points of $\ell_r$ and $\ell_s$, respectively (see Figure~\ref{fig:twoCrossing}(a)). Let $a_1,a_2 \in \ell_r$ such that $c_q.y = a_1.y$ and $a_2.x = c_s.x$. Furthermore, let $b_1 \in \ell_q$ and $b_2 \in \ell_s$ such that $b_1.y = c_r^1$ and $c_r^2.x = b_2.x$. In the following, we show that $qb_1, a_1a_2, b_2s \subset \pi_{qs}$, $b_1,a_1 \in B_{w(u_1)}(u_1)$, and $b_2,a_2 \in B_{w(u_2)}(u_2)$. This implies $\pi_{qs} \subset \ell_{q} \cup B_{w(u_1)}(u_1) \cup \ell_{r} \cup B_{w(u_2)}(u_2) \cup \ell_s$ and concludes the proof.

	Below, we show $\angle (t_2,t_3) \leq 42^{\circ}$. Then, a similar argument as used in Lemma~\ref{lem:oneCrossing} implies $d_1(c_r^1,u_1),d_1(c_q,u_1) \leq \frac{w(u_1)}{2}$ and $d_1(c_s,u_2),d_1(c_r^2,u_2) \leq \frac{w(u_1)}{2}$.
	Finally, we show $\angle (t_2,t_3) \leq 42^{\circ}$: $d_1(u_1,u_2) \geq 6 \max \{ w(u_1), w(u_2) \}$ implies that $d_2(T_1(u_1.x),T_2(u_1.y))$ and $ d_2(T_1(u_2.x),T_2(u_2.y))$ are upper-bounded by $3 \min \{ d_2(T_1(u_1.x), T_1(u_2.x)), d_2(T_2(u_1.y), T_2(u_2.y)) \}$. Thus, we obtain $\angle (t_2,t_3) \leq \arcsin (\frac{2}{3}) < 42^{\circ}$.
\end{proof}

\begin{lemma}\label{lem:shortestPathTwoCrossingApprSimplie}
	If $d_1(u_1,u_2) \geq 6 \max \{ w(u_1), w(u_2) \}$, there is a path $\widetilde{\pi}_{qs} \subset G_2$ between $q$ and $s$ such that $|\widetilde{\pi}_{qs}|_w \leq (1+\varepsilon) |\pi_{oqs}|_w$.
\end{lemma}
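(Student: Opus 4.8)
The plan is to mirror the structure of the one-crossing case (Lemma~\ref{lem:shortestPathOneCrossingAppr}) but now with two separated balls. First I would invoke Lemma~\ref{lem:twoCrossing}, which under the hypothesis $d_1(u_1,u_2)\geq 6\max\{w(u_1),w(u_2)\}$ gives a canonical point $r\in\ell_r$ with $\pi_{qs}\subset \ell_q\cup B_{w(u_1)}(u_1)\cup\ell_r\cup B_{w(u_2)}(u_2)\cup\ell_s$. Using the notation $b_1,a_1,a_2,b_2$ from that lemma's proof, I would subdivide $\pi_{qs}$ into $qb_1\subset\ell_q$, $\pi_{b_1a_1}\subset B_{w(u_1)}(u_1)$, $a_1a_2\subset\ell_r$, $\pi_{a_2b_2}\subset B_{w(u_2)}(u_2)$, and $b_2s\subset\ell_s$. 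Since $\ell_q,\ell_r,\ell_s\subset G_2$ by construction (component (1)), the three axis pieces are already in $G_2$, so it suffices to replace each of the two ball-pieces by a path in the appropriate grid ball. For $i\in\{1,2\}$, I would first enlarge: let $z_i,z_i'$ be the points where $\ell$ (the relevant axis) meets $\partial B_{62w(u_i)}(u_i)$ on the bottom-left / top-right, exactly as in Lemma~\ref{lem:shortestPathOneCrossingAppr}, so that $\pi_{z_iz_i'}$ is a monotone subpath of $\pi_{qs}$ contained in $B_{62w(u_i)}(u_i)$ whose endpoints lie on the grid ball, and the remaining two new axis-stubs again lie in $G_2$.

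Next, for each such ball-subpath I would run the verbatim argument of Lemma~\ref{lem:shortestPathOneCrossingAppr}: build $\widetilde{\pi}_{z_iz_i'}\subset G_{62w(u_i)}(u_i)$ by the staircase construction of Lemma~\ref{lem:apprQualityG1}, using that the mesh size of the grid ball is $\frac{\varepsilon}{456}w(u_i)$. The weight-perturbation is then at most $\frac{\varepsilon}{228}w(u_i)$ per step along a monotone path of $L_1$-length $\le 114\,w(u_i)$ (since $\pi_{z_iz_i'}$ is monotone inside $B_{62w(u_i)}(u_i)$), so the additive error is $\le\frac{\varepsilon}{2}w^2(u_i)$; and the lower bound $|\pi_{z_iz_i'}|_w\geq\frac12 w^2(u_i)$ follows because $\pi_{z_iz_i'}$ must cross $B_{w(u_i)}(u_i)$, has length $\geq w(u_i)$ there, and $w(\cdot)$ is $1$-Lipschitz (Lemma~\ref{lem:lip}). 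Hence $|\widetilde{\pi}_{z_iz_i'}|_w\leq(1+\varepsilon)|\pi_{z_iz_i'}|_w$ for $i=1,2$.

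Finally I would concatenate: $\widetilde{\pi}_{qs}:= qb_1\cup(\ell_q\text{-stub})\cup\widetilde{\pi}_{z_1z_1'}\cup(\ell_q\text{- or }\ell_r\text{-stub})\cup a_1a_2\cup\cdots\cup\widetilde{\pi}_{z_2z_2'}\cup\cdots\cup b_2s$, a walk in $G_2$ from $q$ to $s$. Because the weighted length is additive over a concatenation and the axis-pieces are reproduced exactly (error $0$), we get $|\widetilde{\pi}_{qs}|_w = \sum_{\text{axis pieces}}|\cdot|_w + |\widetilde{\pi}_{z_1z_1'}|_w + |\widetilde{\pi}_{z_2z_2'}|_w \leq \sum_{\text{axis pieces}}|\cdot|_w + (1+\varepsilon)(|\pi_{z_1z_1'}|_w + |\pi_{z_2z_2'}|_w) \leq (1+\varepsilon)|\pi_{qs}|_w$, which is the claim (the statement's ``$\pi_{oqs}$'' is a typo for $\pi_{qs}$). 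The one point requiring care—and the main obstacle—is bookkeeping the short axis segments created by enlarging the radius from $w(u_i)$ to $62w(u_i)$: one must check that the portion of $\pi_{qs}$ between, say, $b_1$ and $z_1$ really lies on $\ell_q$ (not already inside the smaller ball in a way that breaks monotonicity of the decomposition), which follows from Lemma~\ref{lem:canonicalOneVertex} and the fact that $\pi_{qs}$ leaves each axis only through the orthogonal-projection crossing points guaranteed there; and that the two grid balls, together with $\ell_q,\ell_r,\ell_s$, indeed all belong to $G_2$ by its defining components (1) and (2). Everything else is the same two estimates already carried out in Lemma~\ref{lem:shortestPathOneCrossingAppr}, applied twice.
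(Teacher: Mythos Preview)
Your proposal is correct and takes essentially the same approach as the paper: invoke Lemma~\ref{lem:twoCrossing} to decompose $\pi_{qs}$ into axis segments and two ball-pieces, then apply the argument of Lemma~\ref{lem:shortestPathOneCrossingAppr} separately to each ball-piece and concatenate. The paper's proof is terser---it simply defines four transition points $z_1,z_2,z_3,z_4$ on the ball boundaries and says ``apply the approach of Lemma~\ref{lem:shortestPathOneCrossingAppr}''---but your more detailed write-up (including the enlargement bookkeeping and the typo observation) matches it exactly.
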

\begin{proof}
	Lemma~\ref{lem:twoCrossing} implies that the following constructions are unique and well defined: Let $z_1$ ($z_2$) be the intersection point of $\partial B_{w(u_1)}(u_1)$ and $\ell_q$ ($\ell_r$) that lies on the left or bottom (respectively, right or top) edge of $\partial B_{w(u_1)}(u_1)$. Analogously, let $z_3$ ($z_4$) be the intersection point of $\partial B_{w(u_2)}(u_2)$ and $\ell_r$ ($\ell_s$) that lies on the left or bottom (respectively, right or top) edge of $\partial B_{w(u_2)}(u_2)$.  
	 By applying the approach of Lemma~\ref{lem:shortestPathOneCrossingAppr}, for $\pi_{z_1z_2}$ and $\pi_{z_3z_4}$, we obtain a path $\widetilde{\pi}_{z_1z_2} \subset G_2$ between $z_1$ and $z_2$ and a path $\widetilde{\pi}_{z_3z_4} \subset G_2$ between $z_3$ and $z_4$ such that $|\widetilde{\pi}_{z_1z_2}|_w \leq (1 + \varepsilon) |\pi_{z_1z_2}|_w$ and $|\widetilde{\pi}_{z_3z_4}|_w \leq (1 + \varepsilon) |\pi_{z_3z_4}|_w$.
	This concludes the proof because $qz_1, z_2z_3, z_4s \subset G_2$.
\end{proof}

\begin{lemma}\label{lem:twoCrossingComplex}
	If $d_1(u_1,u_2) \leq 6 \max \{ w(u_1), w(u_2) \}$,  we have $\pi_{qs} \subset \ell_{q} \cup B_{62 w(u) \}}(u) \cup \ell_s$ where $u := \arg \max_{u \in \{ u_1,u_2 \} } \{ w(u_1), w(u_2) \}$.
\end{lemma}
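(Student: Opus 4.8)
The plan is to bound the three claimed pieces $\ell_q$, $B_{62w(u)}(u)$, $\ell_s$ separately; by the symmetry between $e_1$ and $e_2$ assume $w(u)=w(u_1)\ge w(u_2)$, so $u=u_1$. Throughout, $\pi$ is low, hence $w\le\mu/100$ on all of $\pi_{qs}$, and in particular $w(u_1),w(u_2)\le\mu/100$. First I would record two consequences of the case hypothesis. Writing $c:=e_1\cap e_2$ for the common corner of the two perpendicular edges (simultaneously a vertex of $C_q$, $C_r$ and $C_s$), the $L_1$-geodesic from $u_1\in e_1$ to $u_2\in e_2$ passes through $c$, so $d_1(u_1,c)+d_1(c,u_2)=d_1(u_1,u_2)\le 6w(u)$, whence $d_1(u_1,c),d_1(c,u_2)\le 6w(u)$ and, by Lemma~\ref{lem:lip}, $w(c)\le w(u_1)+d_1(u_1,c)\le 7w(u)$. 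Thus $u_1,u_2,c$ are pairwise within $L_\infty$-distance $6w(u)$, and $w$ is $O(w(u))$ on the stretch of $e_1$ between $u_1$ and $c$ and on the stretch of $e_2$ between $u_2$ and $c$.

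Next I decompose $\pi_{qs}$. Let $z_1:=\pi_{qs}\cap e_1$, $z_2:=\pi_{qs}\cap e_2$ be the crossing points, let $a$ be the last point of $\pi_{qs}$ on $\ell_q$ and $b$ the first on $\ell_s$. Applying Lemma~\ref{lem:key} inside $C_q$ to the pair $q\le_{xy}z_1$ (exactly as in the proof of Lemma~\ref{lem:canonicalOneVertex}) shows $\pi_{qa}\subseteq\ell_q$ and that $\pi_{az_1}$ is a single straight monotone segment of $C_q$; symmetrically $\pi_{bs}\subseteq\ell_s$ and $\pi_{z_2b}$ is a straight monotone segment of $C_s$. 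For the middle piece $\pi_{z_1z_2}\subseteq C_r$ I deliberately do \emph{not} use $\ell_r$: the cell $C_r$ carries no canonical point of the separating sequence, so $\ell_r$ cannot be controlled, which is precisely why $\ell_r$ is absent from the statement; I use only that $\pi_{z_1z_2}$, being a subpath of the monotone $\pi$, lies in the box $[z_1.x,z_2.x]\times[z_1.y,z_2.y]$. Since $\pi_{qa}\subseteq\ell_q$ and $\pi_{bs}\subseteq\ell_s$, it remains to show $\pi_{ab}:=\pi_{az_1}\cup\pi_{z_1z_2}\cup\pi_{z_2b}\subseteq B_{62w(u)}(u)$. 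As $\pi_{ab}$ is monotone with $a\le_{xy}b$ and (from the first paragraph) $a.x\le c.x\le b.x$, $a.y\le c.y\le b.y$, it lies in the box $[a.x,b.x]\times[a.y,b.y]\supseteq\{c\}$; so it suffices to show $d_\infty(a,c)$ and $d_\infty(b,c)$ are at most roughly $55w(u)$, since then $\pi_{ab}\subseteq B_{55w(u)}(c)\subseteq B_{62w(u)}(u)$ using $d_\infty(c,u)\le d_1(c,u_1)\le 6w(u)$.

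To control $a$ I use the free-space-axis geometry of $C_q$. Exactly as in Lemma~\ref{lem:oneCrossing}, the canonical point $q$ is at distance $\ge\mu/6$ from every parameter line (Lemma~\ref{lem:separatingPoints}, (P2)) while $w(q),w(u_1)\le\mu/100$ and $u_1\in\partial C_q$, so the two segments defining $C_q$ make an angle $\le\arcsin(6/50)\le 7^\circ$; hence $\ell_q$ has gradient within $3.5^\circ$ of $1$, and by Observation~\ref{obs:dual} plus the near-agreement of the bisector with the $T_2$-segment of $C_q$, the endpoint $c_q:=\ell_q\cap e_1$ satisfies $d_\infty(c_q,u_1)\le 0.065\,w(u_1)$ (the same computation as in Lemma~\ref{lem:oneCrossing}), so $d_\infty(c_q,c)\le 7w(u)$. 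Since $a$ lies on $\ell_q$ between $q$ and $c_q$ and is joined to $z_1$ by a straight monotone segment of $C_q$, a bound on how far $\pi_{qs}$ crosses $e_1$ from the corner $c$ gives directly a bound on $d_\infty(a,c)$, and the symmetric argument handles $b$, $z_2$, $c_s:=\ell_s\cap e_2$ and $u_2$. So the whole statement reduces to one quantitative claim: \emph{$\pi$ crosses $e_1$ within $O(w(u))$ of $c$ and crosses $e_2$ within $O(w(u))$ of $c$.}

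This last claim is the main obstacle, and I expect most of the work to lie there. It is \emph{not} implied by lowness alone: lowness only forces $z_1$ into the free-space strip of $C_q$, i.e.\ within $O(\mu/100)$ of $c_q$, not within $O(w(u))$. I would instead invoke optimality of $\pi$: if $\pi$ crossed $e_1$ at a point $z_1$ with $d_\infty(z_1,c)\gg w(u)$, then, using $w=O(w(u))$ on an $O(w(u))$-neighbourhood of $c$ (from $w(c)\le 7w(u)$ and $1$-Lipschitzness) and the fact that $c_q,c_s$ lie within $7w(u)$ of $c$, one exhibits a strictly cheaper monotone competitor from $q$ to $s$ that hugs the corner — along $\ell_q$, then along $e_1$ up to $c$, along $e_2$ to $c_s$, then along $\ell_s$ — whose extra weighted length over the portions it shares with $\pi$ is only $O(w(u)^2)$, whereas $\pi$'s detour through a far-away $z_1$ must traverse straight segments of total length $\gtrsim d_\infty(z_1,c)$ on which $w$ grows (being the distance from a near-point of one segment to the other) and so contributes $\gg w(u)^2$. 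Making this exchange quantitative — in particular accounting for the cost on the $\ell_q$- and $\ell_s$-portions, where competitor and $\pi$ differ — is the delicate point, and the hypothesis $d_1(u_1,u_2)\le 6\max\{w(u_1),w(u_2)\}$ enters exactly there, by clustering $u_1,u_2,c$ so that "near the cheap corner" coincides with "near the minima of $w$ on $e_1,e_2$". Once the claim holds, $d_\infty(a,c),d_\infty(b,c)=O(w(u))$, the box $[a.x,b.x]\times[a.y,b.y]$ sits in $B_{62w(u)}(u)$, and combining with $\pi_{qa}\subseteq\ell_q$, $\pi_{bs}\subseteq\ell_s$ yields $\pi_{qs}\subseteq\ell_q\cup B_{62w(u)}(u)\cup\ell_s$.
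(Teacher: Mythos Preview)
Your plan is essentially the paper's proof: define $a$ (last point of $\pi_{qs}$ on $\ell_q$) and $b$ (first on $\ell_s$), bound $d_1(a,c)$ and $d_1(b,c)$ by $56\max\{w(u_1),w(u_2)\}$ via an optimality/exchange argument, and conclude using $d_1(c,u)\le 6w(u)$. The competitor you describe---along $\ell_q$ to its endpoint $c_q$, then to $c$, then to $c_s$, then along $\ell_s$---is exactly the paper's $\widetilde{\pi}_{ab}=ac_q\cup c_qc\cup cc_s\cup c_sb$, and the cost comparison you anticipate (competitor $O(w(u)^2)$ versus $\pi$'s detour $\gg w(u)^2$) is precisely the inequality the paper makes quantitative: it sets $d_1(u_1,a)=\lambda\max\{w(u_1),w(u_2)\}$, shows $\lambda\ge 50$, lower-bounds the vertical/horizontal leg $aa_1$ or $aa_2$ of $\pi$ by $0.4874\,\lambda^2(\max\{w(u_1),w(u_2)\})^2$, upper-bounds $|\widetilde{\pi}_{ab}|_w$ by $(4\lambda(1.01+0.09\lambda)+114)(\max\{w(u_1),w(u_2)\})^2$, and derives the numerical contradiction $0.4874\le 0.4864$. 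Two small corrections to your constants: in this two-crossing configuration the paper only obtains $\angle(t_1,t_3)\le 10^\circ$ (not $7^\circ$), giving $d_1(u_1,c_q)\le\sin(5^\circ)\,w(u_1)$ rather than $0.065\,w(u_1)$; and the paper bounds $d_1(a,c)$ directly rather than detouring through $z_1$, which slightly streamlines the lower-bound side of the exchange.
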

\begin{proof} Let $a \in \pi_{qs}$ be the last point that lies on $\ell_q$, i.e., there is no point $d \in \pi \cap \ell_q \setminus \{ a \}$ such that $a \leq_{xy} d$, see Figure~\ref{fig:twoCrossingComplex}(b). In the following, we show $d_1(a,c) \leq 56 \max \{ w(u_1), w(u_2) \}$. Analogously, we construct the first point $b \in \pi_{qs}$ that lies on $\ell_s$, i.e., there is no point $d \in \pi \cap \ell_s \setminus \{ b \}$, see Figure~\ref{fig:twoCrossingComplex}(b). A similar argument as above implies $d_1 (b,c) \leq 56 \max \{ w(u_1),w(u_2) \}$. The triangle inequality implies $d_1(d,u) \leq d_1(d,c)+d_1(c,u) \leq 62 \max \{ w(u_1),w(u_2) \}$ for all $d \in \pi_{ab}$ and $u \in \{ u_1,u_2 \}$. This concludes the proof.

	For the sake of contradiction we assume $d_1(a,c) \geq 56 \max \{ w(u_1), w(u_2) \}$. Lemma~\ref{lem:key} implies that $\pi_{ab}$ crosses the boundary $\partial C_q$ of $C_q$ in the orthogonal projection $a_1$ of $a$ onto the top edge of $\partial C_q$ or in the orthogonal projection $a_2$ of $a$ onto the right edge of $\partial C_q$, see Figure~\ref{fig:twoCrossingComplex}(b). Thus, even $aa_1 \subset \pi_{qs}$ or $aa_2 \subset \pi_{ab}$ because $\pi_{ab}$ is monotone. In the following, we show $|aa_1|_w, |aa_2|_w \geq 0.4874 \lambda^2 (\max \{ w(u_1),w(u_2)\})^2$ where $\lambda \geq 0$ such that $ d_1(u_1,a) = \lambda \max \{ w(u_1), w(u_2) \}$. This implies $|\pi_{ab}|_w \geq 0.4874 \lambda^2 \max^2 \{ w(u_1), w(u_2) \}$. 
		
	Furthermore, we construct another path $\widetilde{\pi}_{ab}$ connecting $a$ and $b$ such that $|\widetilde{\pi}_{ab}|_w < ( 4 \lambda (1.01 + 0.09 \lambda) + 114) (\max \{ w(u_1), w(u_2)\})^2$. Additionally, we show $\lambda \geq 50$. This is a contradiction to the fact that $\pi_{ab}$ is a shortest path between $a$ and $b$:
		\begin{alignat*}{4}
			&|\pi_{ab}|_w&\leq&|\widetilde{\pi}_{ab}|_w\\
			\Rightarrow & 0.4874 \lambda^2 (\max \{ w(u_1), w(u_2) \})^2 & \leq & ( 4 \lambda (1.01 + 0.09 \lambda) + 114) (\max \{ w(u_1), w(u_2)\})^2\\
			\Leftrightarrow & 0.4874 \lambda^2 & \leq & \lambda^2 (\frac{4.04}{\lambda} + 0.36 + \frac{114}{\lambda^2})\\
			\stackrel{\lambda \geq 50}{\Rightarrow} & 0.4874 & \leq & 0.0808 + 0.36 + 0.0456 = 0.4864
		\end{alignat*}
	
\begin{figure}[ht]
  \begin{center}
    \begin{tabular}{ccccccc}
      \includegraphics[height=2.6cm]{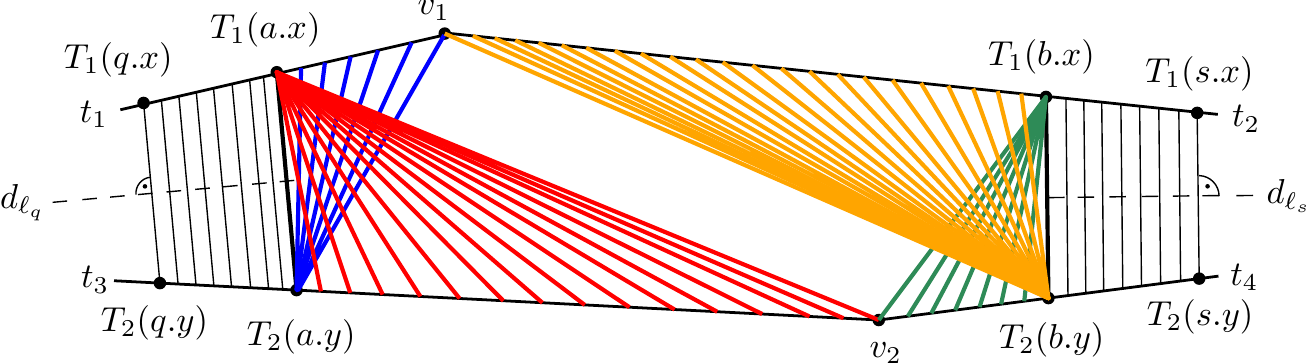} & &
       \includegraphics[height=3cm]{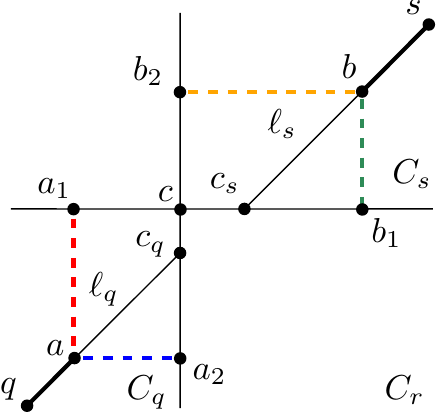}&&\\ 
      {\small (a) The paths (1.) $aa_1$ (red), (2.) $aa_2$ (blue), (3.) $bb_1$ (green), and } & &
      {\small (b) Even $\pi_{ab}$ crosses $\partial C_q$}&&\\
      {\small (4.) $bb_2$ (orange) correspond to matchings such the matched}& &
      {\small orthogonally in $a_1$ or}&&\\
      {\small points (1.) on $T_1$ are just $T_1(a.x)$ (red), (2.) on $T_2$ are}&&
      {\small in $a_2$ and $\partial C_s$ orthogonally }&&\\
      {\small just $T_2(a.y)$ (blue), (3.) on $T_1$ are just $T_1(b.x)$ (green), and }&&
      {\small even in $b_1$ or in $b_2$.}\\
      {\small (4.) on $T_2$ are just $T_2(b.y)$ (orange)}&&
      {\small }\\
      \includegraphics[height=3cm]{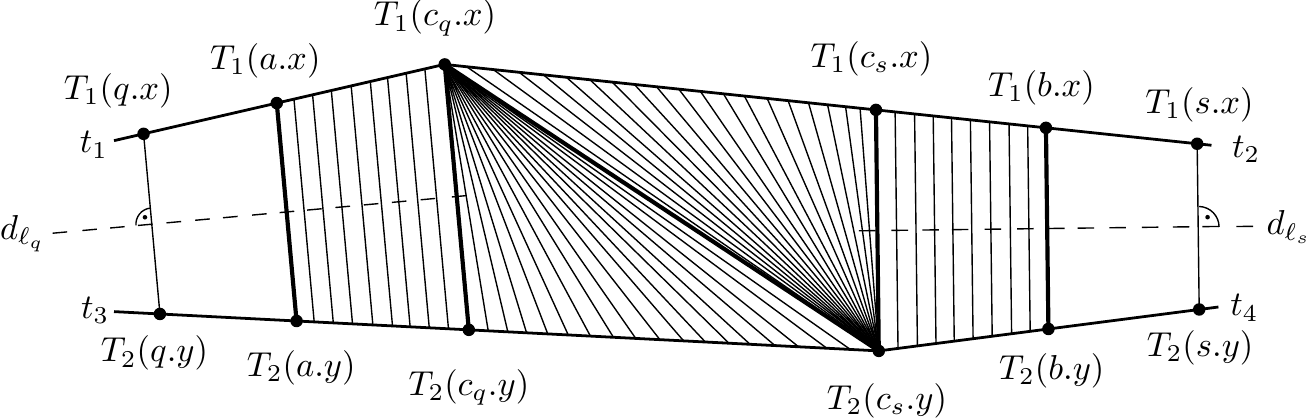} & &
       \includegraphics[height=3cm]{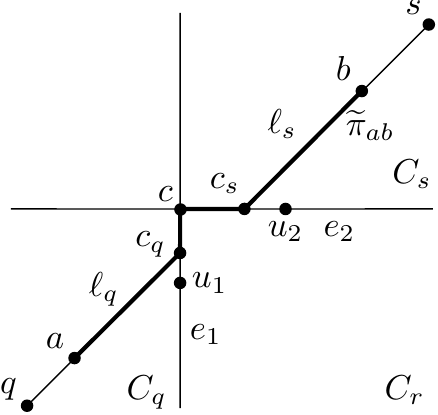}&&\\ 
      {\small (c) Matching that corresponds to $\widetilde{\pi}_{ab}$} & &
      {\small (d) Construction of $\widetilde{\pi}_{ab}$}&&\\
    \end{tabular}
  \end{center}
  \vspace*{-12pt}
  \caption{Two paths $\widetilde{\pi}_{ab}$ and $\pi_{ab}$ and the corresponding matchings: The shortest path $\pi_{ab}$ that ``leaves'' $\ell_{q}$ at the point $a$ and enters $\ell_{q}$ at the point $b$ and the path $\widetilde{\pi}_{ab}$ that is constructed such that $ac_q,c_sb \subset \widetilde{\pi}_{ab}$.}
  \label{fig:twoCrossingComplex}
\end{figure}

	\begin{itemize} 
		\item Construction of $\widetilde{\pi}_{ab}$: Let $c_q$ be the top-right end point of $\ell_q$ and $c_s$ the bottom-left end point of $\ell_s$, see Figure~\ref{fig:twoCrossingComplex}(d). We define $\widetilde{\pi}_{qs} := ac_q \cup c_qc \cup cc_s \cup c_sb$. 
		\item Upper bound for $|\widetilde{\pi}_{ab}|_w$: First of all we show $d_1(a,c_q) \leq 2 \lambda \max \{w(u_1),w(u_2) \}$. After that we show $w(d) \leq (1.01 + 0.09 \lambda) \max \{ w(u_1), w(u_2) \}$ for all $d \in a c_q$. This implies $|\widetilde{\pi}_{ac_q}|_w \leq 2 \lambda (1.01 + 0.09 \lambda) (\max \{w (u_1),w(u_2) \})^2$. A similar argument implies $|\widetilde{\pi}_{c_sb}|_w \leq 2 \lambda (1.01 + 0.09 \lambda) (\max \{ w(u_1),w (u_2) \})^2$. Furthermore, we show $d_1(c_q, c) \leq 7 \max \{ w(u_1),w (u_2) \}$ and $w(d) \leq 8.01 \max \{ w(u_1), w(u_2)\}$ for all $d \in c_q c$. This implies $|\widetilde{\pi}_{c_qc}|_2 \leq 57 (\max \{w(u_1),w(u_2)\})^2$. A similar argument implies  $|\widetilde{\pi}_{cc_s}|_w \leq 57 (\max \{ w(u_1),w(u_2) \})^2$. Finally we upper bound
			\begin{eqnarray*}
				|\widetilde{\pi}_{ab}|_ w & = & |\widetilde{\pi}_{ac_q}|_ w + |\widetilde{\pi}_{c_qc}|_ w + |\widetilde{\pi}_{cc_s}|_ w + |\widetilde{\pi}_{c_sb}|_ w\\
				&\leq& ( 4 \lambda (1.01 + 0.09 \lambda) + 114) (\max \{ w(u_1), w(u_2)\})^2.
			\end{eqnarray*} 
			\begin{itemize}
				\item $d_1(a,c_q) \leq 2 \lambda \max \{w(u_1),w(u_2) \}$: $d_1(u_1,a) = \lambda \max \{ w(u_1),w(u_2) \}$ implies $d_1(a,e_1) \leq \lambda \max \{ w(u_1),w(u_2) \}$. As the gradient of $\ell_q$ is $1$ and $c_q \in \ell_q$, we have $d_1(a,c_q) \leq 2 \lambda \max \{ w(u_1), w(u_2) \}$.
				\item $w(d) \leq (1.01 + 0.09 \lambda) w(u_1)$ for all $d \in a c_q$: First we show $\angle (t_1,t_3) \leq 10^{\circ}$: By Lemma~\ref{lem:separatingPoints} we know $d_1(q,e_1) \geq \frac{\mu}{6}$. Let $d := \pi_{ab} \cap e_1$. This implies $d_1(q,d)\geq \frac{\mu}{2}$. Furthermore, we have $w(q),w(d) \leq \frac{\mu}{100}$ because $\pi$ is low. This implies $\angle (t_1,t_3) \leq \arcsin (\frac{\mu}{100}/\frac{2 \mu}{6}) \leq 10^{\circ}$. Thus we have $\angle (t_1,d_{\ell_q}), \angle (t_3,d_{\ell_q}) \leq 5^{\circ}$. This implies $w(c_q) = d_2(T_1(c_q.x), T_2(c_q.y)) \leq \frac{1}{\cos (5^{\circ})} d_2(T_1(u_1.x), T_2(u_1.y)) \leq 1.01 w(u_1)$. As $\angle (t_1, d_{\ell_q}), \angle (t_3,d_{\ell}) \leq 5^{\circ}$, we get $w(d) \leq w(c_q) + 2 \sin (5^{\circ}) \lambda \max \{ w(u_1), w(u_2) \} \leq (1.01 + 0.09 \lambda) \max \{ w(u_1), w(u_2) \}$.
				\item $d_1(c_q,c) \leq 7 \max \{ w(u_1), w(u_2) \}$: By $d_1(u_1,u_2) \leq 6 \max \{ w(u_1), w(u_2) \}$ it follows that $d_1(u_1,c) \leq 6 \max \{ w(u_1), w(u_2) \}$ holds. Furthermore, we have $d_1(u_1,c_q) \leq \sin (5^{\circ}) w(u_1)$ because $\angle (d_{\ell_q},t_1),\angle (d_{\ell_1}, t_3) \leq 5^{\circ}$. The triangle inequality implies $d_1(c,c_q) \leq d_1(c,u_1) + d_1(u_1,c_q) \leq 7 \max \{ w(u_1), w(u_2) \}$.
				\item $w(d) \leq 8.01 \max \{ w(u_1), w(u_2) \}$ for all $d \in c_qc$: Above we already showed $w(c_q) \leq 1.01 w(u_1) \leq 1.01 \max \{ w(u_1),w(u_2) \}$. By combining the $1$-Lipschitz continuity of $w(\cdot)$ and $d_1(c_q,c) \leq 7 \max \{ w(u_1), w(u_2) \}$ we obtain $w(d) \leq 8.01 \max \{ w(u_1), w(u_2) \}$.
			\end{itemize}
		\item $\lambda \geq 50$: Above we showed $d_1(c,u_1) \geq 6 \max \{ w(u_1), w(u_2) \}$. The triangle inequality implies $d_1(u_1,a) + d_1(u_1,c) \geq d_1(a,c) \Rightarrow \lambda \max \{ w(u_1), w(u_2) \} \geq (56 - 6) \max \{ w(u_1),w(u_2) \}$.
		\item $|aa_1|_w, |aa_2|_w \geq 0.4874 \lambda^2 \max^2 \{ w(u_1),w(u_2) \}$: First we lower-bound $d_1(a,a_1), d_1(a,a_2) \geq 0.98 \lambda \max \{ w(u_1),w(u_2) \}$: Above we already showed $d_1(u_1,c_q) \leq 0.09 \max \{ w(u_1),w(u_2) \}$. Combining this with $d_1(u_1,a) = \lambda \max \{ w(u_1),w(u_2) \}$ and the triangle inequality yields $d_1(a,c_q) \geq (\lambda-0.09) \max \{ w(u_1), w(u_2) \}$. Thus $d_1(a,a_1), d_1(a,a_2) \geq 0.98 \lambda \max \{ w(u_1),w(u_2) \}$ because $\ell_q$ has a gradient of $1$ and $\lambda \geq 50$. 
		
	Above we already showed $w(a) \leq (1.01 + 0.09 \lambda) \max \{ w(u_1), w(u_2) \} \leq 0.12 \lambda \max \{ w(u_1),w(u_2) \}$ because $\lambda \geq 50$. This implies $w(a_1) \geq d_1(a,a_1) - w(a) \geq 0.86 \lambda \max \{ w(u_1),w(u_2) \}$ because $a.x = a_1.x$.
		
	By combining $w(a_1) \geq 0.86 \lambda \max \{ w(u_1),w(u_2) \}$ and $d_1(a,a_1) \geq 0.98 \lambda \max \{ w(u_1),w(u_2) \}$ we get $|aa_1|_w \geq 0.4874 \lambda^2 (\max \{ w(u_1),w(u_2)\})^2$ as follows: Consider the subsegment $\overline{a}_1a_1 := B_{0.86 \lambda \max \{ w(u_1),w(u_2) \}}(a) \cap aa_1$. By $w(a_1) \geq 0.768 \lambda \max \{ w(u_1),w(u_2) \}$ it follows $w(d) \geq 0.98\lambda \max \{ w(u_1),w(u_2) \} - d_1(a_1,d)$ for all $d \in \overline{a}_1a_1$ because $w(\cdot)$ is $1$-Lipschitz. This implies $|aa_1|_w \geq (0.3698 + 0.1176)\lambda^2 (\max \{ w(u_1),w(u_2)\})^2 = 0.4874\lambda^2 (\max \{ w(u_1),w(u_2)\})^2$. 
	\end{itemize}
\end{proof}

	Lemma~\ref{lem:twoCrossingComplex} implies that the  approach taken in the proof of Lemma~\ref{lem:shortestPathOneCrossingAppr} yields that there is a path $\widetilde{\pi}_{qs} \subset G_2$ between $q$ and $s$ such that $|\widetilde{\pi}_{qs}|_w \leq (1+\varepsilon) |\pi_{oqs}|_w$ If $d_1(u_1,u_2) < 6 \max \{ w(u_1), w(u_2) \}$. Combining this with Lemmas~\ref{lem:shortestPathOneCrossingAppr} and~\ref{lem:shortestPathTwoCrossingApprSimplie} yields the following corollary:

\begin{corollary}\label{cor:apprC2}
	Let $\widetilde{\pi} \subset G_2$ be a shortest path. We have $|\pi|_w \leq |\widetilde{\pi}|_w \leq (1+\varepsilon)|\pi|_w$.
\end{corollary}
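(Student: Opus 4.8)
The plan is to establish the two inequalities separately, reusing the machinery of Subsections~\ref{subsec:Sep}--\ref{subsec:anaTwoCrossing}. The lower bound $|\pi|_w\le|\widetilde{\pi}|_w$ is the monotonicity argument already recorded in the analysis overview: $G_2$ is monotone directed and each edge $(v_1,v_2)\in E_2$ carries weight $w_2((v_1,v_2))=|v_1v_2|_w$, so the embedding of any $\mathfrak{s}$--$\mathfrak{t}$ path $\widetilde{\pi}$ of $G_2$ is a monotone $\mathfrak{s}$--$\mathfrak{t}$ path in $P$ whose weighted length equals the sum of its edge weights, i.e.\ equals $|\widetilde{\pi}|_w$; as $\pi$ is a shortest weighted monotone $\mathfrak{s}$--$\mathfrak{t}$ path, $|\pi|_w\le|\widetilde{\pi}|_w$. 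For the upper bound it is enough to produce \emph{one} path $\widetilde{\pi}'\subset G_2$ from $\mathfrak{s}$ to $\mathfrak{t}$ with $|\widetilde{\pi}'|_w\le(1+\varepsilon)|\pi|_w$, since then the shortest $G_2$-path $\widetilde{\pi}$ satisfies $|\widetilde{\pi}|_w\le|\widetilde{\pi}'|_w\le(1+\varepsilon)|\pi|_w$.

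To build $\widetilde{\pi}'$ I would apply Lemma~\ref{lem:separatingPoints} to split $\pi$ at canonical points $\mathfrak{s}=p_1,\dots,p_k=\mathfrak{t}$, so that by property~(P1) each subpath $\pi_{p_ip_{i+1}}$ crosses at most one vertical and at most one horizontal interior parameter line --- hence at most two parameter edges, which, when there are two, are mutually perpendicular and share a grid vertex. This is exactly the case distinction underlying Case~B: (i) no interior edge crossed, which happens only for the two extreme pieces $\pi_{p_1p_2}$ and $\pi_{p_{k-1}p_k}$ (these stay inside $C_{\mathfrak{s}}$ and $C_{\mathfrak{t}}$ because $p_2$ and $p_{k-1}$ sit at distance $\mu/2$ from the boundary while every cell has side length at least $\mu$); (ii) exactly one parameter edge crossed; (iii) exactly two parameter edges crossed, split further according to whether $d_1(u_1,u_2)\ge 6\max\{w(u_1),w(u_2)\}$, with $u_j$ the minimiser of $w$ on the $j$-th crossed edge. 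For each piece I would then quote the matching approximation lemma to obtain a path $\widetilde{\pi}'_{p_ip_{i+1}}\subset G_2$ from $p_i$ to $p_{i+1}$ with $|\widetilde{\pi}'_{p_ip_{i+1}}|_w\le(1+\varepsilon)|\pi_{p_ip_{i+1}}|_w$: case~(ii) is Lemma~\ref{lem:shortestPathOneCrossingAppr}; the ``large-gap'' subcase of~(iii) is Lemma~\ref{lem:shortestPathTwoCrossingApprSimplie}; the ``small-gap'' subcase of~(iii) is Lemma~\ref{lem:twoCrossingComplex} together with the construction from the proof of Lemma~\ref{lem:shortestPathOneCrossingAppr}, as already noted in the paragraph preceding this corollary; and for case~(i), Lemma~\ref{lem:key} pins $\pi_{p_ip_{i+1}}$ to the monotone free space axis of $C_{\mathfrak{s}}$ (resp.\ $C_{\mathfrak{t}}$) plus the single segment $\mathfrak{s}c_{\mathfrak{s}}$ (resp.\ $\mathfrak{t}c_{\mathfrak{t}}$), both of which are edges of $G_2$ by construction items~(1) and~(3), so no approximation is needed there.

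It remains to concatenate: $\widetilde{\pi}':=\widetilde{\pi}'_{p_1p_2}\cup\dots\cup\widetilde{\pi}'_{p_{k-1}p_k}$ is a monotone $\mathfrak{s}$--$\mathfrak{t}$ path of $G_2$ because consecutive pieces meet at $p_i$ and each piece is monotone in $G_2$, and since weighted length is additive along concatenations while $\bigcup_i\pi_{p_ip_{i+1}}=\pi$,
\[
|\widetilde{\pi}'|_w=\sum_{i=1}^{k-1}|\widetilde{\pi}'_{p_ip_{i+1}}|_w\le(1+\varepsilon)\sum_{i=1}^{k-1}|\pi_{p_ip_{i+1}}|_w=(1+\varepsilon)|\pi|_w .
\]
The key point is that the per-piece errors do not accumulate: every piece is within the \emph{same} multiplicative factor $1+\varepsilon$, so their sum is within $1+\varepsilon$ of the total, with no rescaling of $\varepsilon$.

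Since all the geometric content is already in Lemmas~\ref{lem:key}--\ref{lem:twoCrossingComplex} and Lemma~\ref{lem:separatingPoints}, the work left for Corollary~\ref{cor:apprC2} is essentially bookkeeping, and the one place that deserves care is checking that the decomposition of Lemma~\ref{lem:separatingPoints} is exhaustive for the cases handled: no subpath may cross two \emph{parallel} parameter edges (this is precisely (P1)), and the two extreme subpaths must really be confined to $C_{\mathfrak{s}}$ and $C_{\mathfrak{t}}$ so that the special edges $\mathfrak{s}c_{\mathfrak{s}}$, $\mathfrak{t}c_{\mathfrak{t}}$ (or, when the free space axis misses an extreme cell, the incident grid balls) carry them. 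Everything else --- monotonicity and connectedness of $\widetilde{\pi}'$, the coincidence of its graph length with the weighted length of its embedding, and additivity of $|\cdot|_w$ --- is immediate from the definitions of $G_2$ and $w_2$.
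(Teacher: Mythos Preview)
Your proposal is correct and follows essentially the same route as the paper: the paragraph immediately preceding Corollary~\ref{cor:apprC2} states that Lemma~\ref{lem:twoCrossingComplex} combined with the construction of Lemma~\ref{lem:shortestPathOneCrossingAppr} handles the remaining subcase, and that combining this with Lemmas~\ref{lem:shortestPathOneCrossingAppr} and~\ref{lem:shortestPathTwoCrossingApprSimplie} yields the corollary. You have simply made the implicit bookkeeping explicit --- the lower bound via monotonicity of $G_2$, the decomposition from Lemma~\ref{lem:separatingPoints}, the handling of the two extreme pieces via construction items~(1) and~(3), and the additive concatenation argument --- none of which the paper spells out but all of which it clearly intends.
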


\subsection{``Bringing it all together''}

	In Subsections~\ref{subsubsec:anaG1} and~\ref{subsubsec:anaG2}, we showed that in both  cases, Case A and B,  the minimum of the shortest path lengths in $G_1$ and $G_2$ is upper-bounded by $(1+\varepsilon)|\pi|_w$, where $\pi_w$ is a shortest path in $P$. 

	Next, we discuss that our algorithm has a running time of $\mathcal{O}(\frac{\zeta^4 n^4}{\varepsilon})$. Graph $G_1$ is given by the arrangement that is induced by $\Theta(\frac{\zeta^2 n^2}{\varepsilon})$ horizontal and $\Theta(\frac{\zeta^2 n^2}{\varepsilon})$ vertical lines because the corresponding grid has a mesh of size  $\frac{\varepsilon \mu}{40000 (|T_1| + |T_2|)}$. Thus, $|E_1| \in \Theta(\frac{\zeta^4 n^4}{\varepsilon^2})$. Graph	$G_2$ is given by the arrangement that is induced by $\mathcal{O}(n^2)$ free space axis and $\Theta(n^2)$ grid balls. Each grid ball has a complexity of $\Theta (\frac{1}{\varepsilon})$. Thus, $|E_2| \in \mathcal{O}(\frac{n^4}{\varepsilon^2})$. Applying Dijkstra's  shortest path algorithm on $G_1$ and $G_2$ takes time proportional to  $\mathcal{O}(|E_1|)$ and $\mathcal{O}(|E_2|)$. 
As $|E_1| \in \Theta(\frac{\zeta^4 n^4}{\varepsilon^2})$ and $|E_2| \in \mathcal{O}(\frac{n^4}{\varepsilon^2})$ we have to ensure that each edge of $E_1 \cup E_2$ can be computed in constant time to guarantee an overall running time of $\mathcal{O}(\frac{\zeta^4 n^4}{\varepsilon^2})$.

\begin{lemma}\label{lem:edgesComputable}
	All edges of $G_1$ and $G_2$ can be computed in $\mathcal{O}(1)$ time.
\end{lemma}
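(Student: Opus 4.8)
The plan is to reduce the statement to two facts plus an observation. \emph{Fact 1}: every edge of $G_1$ and $G_2$ is a straight segment whose relative interior meets only $\mathcal{O}(1)$ parameter cells. \emph{Fact 2}: for an affine segment $\pi$ contained in a single parameter cell $C$, the weighted length $|\pi|_w$ has a closed form evaluable in $\mathcal{O}(1)$ time. \emph{Observation}: the two endpoints that define each edge are themselves produced in $\mathcal{O}(1)$ time. Granting these, I would finish by splitting each edge at the $\mathcal{O}(1)$ parameter lines it crosses and summing the $\mathcal{O}(1)$ closed-form contributions from Fact~2.

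For Fact~2 I would argue as follows. By the choice of the parameter lines, on a cell $C$ the curves restrict to arc-length parametrized segments, so $T_1(x)=A_1+x v_1$ and $T_2(y)=A_2+y v_2$ with constant $A_1,A_2$ and unit $v_1,v_2$, all read off from $C$ in $\mathcal{O}(1)$ time. For $\pi(t)=p+td$, $t\in[0,1]$, inside $C$, the vector $T_1(\pi(t).x)-T_2(\pi(t).y)$ is affine in $t$, say $E+tF$, hence
\begin{equation*}
w(\pi(t))=\lVert E+tF\rVert_2=\sqrt{\lVert F\rVert_2^{2}\,t^{2}+2\langle E,F\rangle\,t+\lVert E\rVert_2^{2}},\qquad \lVert\pi'(t)\rVert_1=\lVert d\rVert_1,
\end{equation*}
the latter being constant, so that $|\pi|_w=\lVert d\rVert_1\int_0^1\sqrt{at^2+bt+c}\,dt$ with $a=\lVert F\rVert_2^2$, $b=2\langle E,F\rangle$, $c=\lVert E\rVert_2^2$. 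This is the standard elementary integral: a linear function times the square root, plus a logarithmic term; the degenerate cases $a=0$ and $b^2=4ac$ are treated directly. Hence $|\pi|_w$ is obtained in $\mathcal{O}(1)$ time.

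For Fact~1 I would go through the edge types of $G_2$ (and $G_1$) in turn. The monotone-free-space-axis edges and the two connecting segments $\mathfrak s c_{\mathfrak s},\mathfrak t c_{\mathfrak t}$ lie in a single cell by construction, and the axis of a cell --- equivalently the ellipse $\mathcal E$ and its foci --- is computed in $\mathcal{O}(1)$ time~\cite{alt:computing}; the center $u=\arg\min_{p\in e}w(p)$ of a grid ball (cf.\ Definition~\ref{def:gridball}) is a point-to-segment foot of perpendicular, also $\mathcal{O}(1)$; arrangement vertices are intersections of two such segments, $\mathcal{O}(1)$ each. Every edge of $G_1$ is axis-parallel of length at most its mesh $\varepsilon\mu/(40000(|T_1|+|T_2|))<\mu$, hence below the spacing of the orthogonal parameter lines, so it crosses at most one of them. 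For a grid ball $G_{62w(u)}(u)$ each edge has length equal to the grid-ball mesh $\tfrac{\varepsilon}{456}w(u)$, which is $<\mu$ for the grid balls actually used in Case~B --- there $u$ lies on a parameter edge crossed by a low path, so $w(u)\le\mu/100$ --- and in any case one may overlay the $\mathcal{O}(n)$ parameter lines onto the arrangement $G_2$, which keeps $|E_2|$ in $\mathcal{O}(n^4/\varepsilon^2)$ and makes every edge single-cell outright.

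The main obstacle is exactly this last point: controlling how many cells a grid-ball edge can sweep across. I expect Fact~2 to be purely routine once Fact~1 is in place, so the write-up effort concentrates on justifying that the grid balls one ever integrates over have mesh $o(\mu)$ (via the ``low path'' structure of Case~B) or, more robustly, on checking that inserting the parameter lines into $G_2$ does not spoil the edge-count bound.
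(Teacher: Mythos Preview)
Your proposal is correct, and your Fact~2 is essentially the same computation the paper carries out for axis-aligned edges (their $\operatorname{arsinh}$ antiderivative is the special case of your $\int\sqrt{at^2+bt+c}\,dt$ when the segment is horizontal or vertical). The genuine difference is in how edges on a monotone free space axis are handled: you subsume them under the same general quadratic-under-a-root integral, whereas the paper invokes the duality of Observation~\ref{obs:dual} to observe that for $e\subset\ell$ the leashes $T_1(t)T_2(t)$ are all perpendicular to the angular bisector $d_\ell$, so $|e|_w$ is literally the Euclidean area of the trapezoid bounded by $T_1(o.x)T_1(p.x)$, $T_2(o.y)T_2(p.y)$, and the two end-leashes---a cleaner closed form that avoids the logarithmic term entirely. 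Your route is more uniform and requires no extra geometric lemma; theirs exploits structure specific to $\ell$ and yields a simpler expression for that case. You also take care to bound the number of parameter cells an edge can meet (your Fact~1), a point the paper leaves implicit; your fix of overlaying the $\mathcal O(n)$ parameter lines onto $G_2$ is the right way to make this airtight without affecting the $\mathcal O(n^4/\varepsilon^2)$ edge count.
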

\begin{proof}
	There are two types of edges used in $G_1$ and $G_2$: (1.) axis aligned edges and (2.) edges that lie on a monotone free space axis. We consider both cases separately:
		\begin{itemize}
			\item Axis aligned edge $e \subset P$, see Figure~\ref{fig:edgeWeight}(a): W.l.o.g., we assume that $e = (o,p)$ is horizontal. We have $|op|_w = \int_{0}^{|op|} w(o + t (o-p)) \ dt = \int_{o.x}^{p.x}d_2(T_1(t),T_2(p.y)) \ dt$, see Figure~\ref{fig:edgeWeight}(a). Let $s \subset T_1$ be the segment such that $T_1(t) \in s$ for $t \in [p.x-o.x]$. W.l.o.g., we assume that $s$ lies on the $x$-axis. $\int_{o.x}^{p.x}d_2(T_1(t),T_2(p.y)) \ dt$ can be calculated as follows:
					\begin{eqnarray*}
						\int_{o.x}^{p.x}d_2(T_1(t),T_2(p.y)) \ dt & = & \int_{o.x}^{p.x} \sqrt{(T_1(t).x)^2 + (T_2(p.y).y)^2} \ dt\\
						& =&\left. \frac{1}{2}\left( \begin{matrix}
							(T_2(p.y).y)^2 \operatorname{arsinh} \left( \frac{T_1(t).x}{T_2(p.y).y} \right) +\\ 
							T_1(t).x \sqrt{(T_1(t).x)^2 + (T_2(p.y).y)^2}							\end{matrix} \right) \right|^{p.x}_{o.x}.
					\end{eqnarray*}
				That value can be calculated in constant time.
			\item Edge $e$ on a free space axis, see Figure~\ref{fig:edgeWeight}(b): Let $\ell$ be the free space axis such that $e \subset \ell \subset P$ and $d_{\ell} \subset \mathbb{R}^2$ the corresponding angular bisector that corresponds to $\ell$. By observation~\ref{obs:dual}, we have $|op|_w = \int_{0}^{|op|}w(o + t(o-p)) \ dt = \int d_2(T_1(t), T_2(t)) \ dt$ where $T_1(t) T_2(t)$ lies perpendicular to $d_{\ell}$, see Figure~\ref{fig:edgeWeight}(b). Thus, $|op|_w$ is equal to the area that is bounded by $T_1(o.x)T_1(p.x)$, $T_2(o.y)T_2(p.y)$, $T_1(o.x)T_2(o.y)$, and $T_1(p.x)T_2(p.y)$ which can be computed in $\mathcal{O}(1)$ time.
		\end{itemize}
		
		\begin{figure}[ht]
  \begin{center}
    \begin{tabular}{ccccccc}
      \includegraphics[height=2cm]{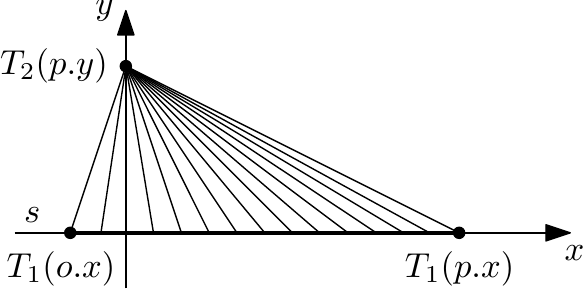} & &
       \includegraphics[height=1.8cm]{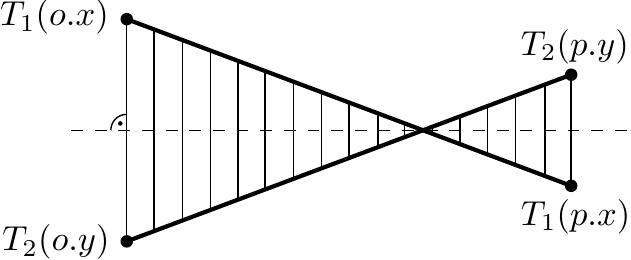}&&\\ 
      {\small (a) Matching of a horizontal edge.} & &
      {\small (b) Matching of an edge an a free space axis.}&&
    \end{tabular}
  \end{center}
  \vspace*{-12pt}
  \caption{The two types of matchings that correspond to the two types of edges from $G_1$ and~$G_2$.}
  \label{fig:edgeWeight}
\end{figure}
\end{proof}
 This leads to our main result.
\begin{theorem}
	We can compute an $(1+\varepsilon)$-approximation of the integral Fr\'echet distance $\mathcal{F}_{\mathcal{S}} \left( T_1, T_2 \right)$ in $\mathcal{O}(\frac{\zeta^4 n^4}{\varepsilon^2})$ time.
\end{theorem}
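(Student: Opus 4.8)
The theorem is a synthesis of everything established above, so the plan is simply to combine the structural results with the graph constructions of Section~\ref{sec:pre}. By Observation~\ref{obs:dualpaths} it suffices to $(1+\varepsilon)$-approximate $|\pi|_w$, where $\pi$ is a shortest weighted monotone path between $\mathfrak{s}$ and $\mathfrak{t}$ in $P$. The algorithm builds $G_1$ and $G_2$, computes a shortest weighted $\mathfrak{s}$-$\mathfrak{t}$ path in each, and returns the smaller of the two lengths; I would prove correctness and then bound the running time.

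\textbf{Correctness.} First I would argue the lower bound: because $G_1$ and $G_2$ are monotone directed graphs in which every edge $(u,v)$ carries weight $|uv|_w$, any $\mathfrak{s}$-$\mathfrak{t}$ path in either graph is a monotone path in $P$ of equal weighted length, hence has weighted length at least $|\pi|_w$. For the upper bound I would split into the two cases fixed just before Subsection~\ref{subsubsec:anaG1}. If there is a shortest path that is not low, Lemma~\ref{lem:apprQualityG1} yields a path $\widetilde{\pi} \subset G_1$ with $|\widetilde{\pi}|_w \le (1+\varepsilon)|\pi|_w$. If every shortest path is low, then the separation of $\pi$ into subpaths crossing at most two parameter edges (Lemma~\ref{lem:separatingPoints}), together with the containment statements for one and two crossings (Lemmas~\ref{lem:shortestPathOneCrossing}, \ref{lem:twoCrossing}, \ref{lem:twoCrossingComplex}) and their grid-ball approximations, gives via Corollary~\ref{cor:apprC2} a path $\widetilde{\pi} \subset G_2$ with $|\widetilde{\pi}|_w \le (1+\varepsilon)|\pi|_w$. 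In both cases the returned value lies in the interval $[\,|\pi|_w,\ (1+\varepsilon)|\pi|_w\,] = [\,\mathcal{F}_{\mathcal{S}}(T_1,T_2),\ (1+\varepsilon)\mathcal{F}_{\mathcal{S}}(T_1,T_2)\,]$.

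\textbf{Running time.} Next I would bound the sizes of the two graphs as in the discussion preceding the statement: since the mesh of $G_1$ has size $\frac{\varepsilon\mu}{40000(|T_1|+|T_2|)}$, the graph $G_1$ is induced by $\Theta(\zeta^2 n^2/\varepsilon)$ horizontal and $\Theta(\zeta^2 n^2/\varepsilon)$ vertical lines, so $|E_1| \in \Theta(\zeta^4 n^4/\varepsilon^2)$; and $G_2$ is induced by $\mathcal{O}(n^2)$ monotone free space axes and $\Theta(n^2)$ grid balls of complexity $\Theta(1/\varepsilon)$ each, so $|E_2| \in \mathcal{O}(n^4/\varepsilon^2)$. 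By Lemma~\ref{lem:edgesComputable} each edge weight $|uv|_w$ is evaluated in $\mathcal{O}(1)$ time while the two arrangements are traversed. Since $G_1$ and $G_2$ are acyclic (being monotone), a single pass over the vertices in topological order relaxes all edges and yields both shortest $\mathfrak{s}$-$\mathfrak{t}$ path lengths in time linear in $|E_1|+|E_2|$ (a heap-based Dijkstra would add a logarithmic factor, which the DAG structure removes). Hence the total running time is $\mathcal{O}(|E_1|+|E_2|) = \mathcal{O}(\zeta^4 n^4/\varepsilon^2)$.

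\textbf{Main obstacle.} The assembly itself is routine; the load-bearing part is the Case B analysis already carried out, which guarantees that a low shortest path decomposes into constantly many pieces, each confined to a bounded union of monotone free space axes and $L_\infty$-balls that are resolved finely enough by the grid balls to lose only a $(1+\varepsilon)$ factor. The one gluing point I would still verify explicitly is that consecutive subpath approximations concatenate into a genuine $\mathfrak{s}$-$\mathfrak{t}$ walk of $G_2$: the connecting segments between the canonical separation points of Lemma~\ref{lem:separatingPoints} lie on monotone free space axes, which are edges of $G_2$ by construction, so the concatenation is valid.
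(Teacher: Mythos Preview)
Your proposal is correct and follows essentially the same assembly as the paper: reduce to the weighted shortest monotone $\mathfrak{s}$--$\mathfrak{t}$ path via Observation~\ref{obs:dualpaths}, use Lemma~\ref{lem:apprQualityG1} for Case~A and Corollary~\ref{cor:apprC2} for Case~B to get the $(1+\varepsilon)$ upper bound, and bound $|E_1|,|E_2|$ and the per-edge cost via Lemma~\ref{lem:edgesComputable} for the running time. Your explicit use of the DAG structure to obtain a linear-time relaxation is a clean justification of the log-free bound; the paper simply states that Dijkstra runs in $\mathcal{O}(|E_i|)$ without spelling this out.
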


\section{Locally optimal Fr\'{e}chet matchings}

In this section, we discuss an application of Lemma~\ref{lem:key} to so-called \emph{locally correct (Fr\'{e}chet) matchings} as introduced by Buchin et al.~\cite{buchin:locally}. For $i \in \{ 1,2 \}$ and $0 \leq a \leq b \leq n$, we denote the subcurve between $T_i(a)$ and $T_i(b)$ by $T_i[a,b]$.
\begin{definition}[\cite{buchin:locally}]\label{def:correct}
	A matching $(\alpha_1,\alpha_2)$ is \emph{locally correct} if 
	$\mathscr{D} \left( T_1[\alpha_1(a),\alpha_2(b)], T_2[\alpha_1(a),\alpha_2(b)] \right) = \max_{t \in [a,b]} d_2(T_1(\alpha_1(a)),T_2(\alpha_2(b)))$, for all $0 \leq a \leq b \leq n$.
\end{definition}

	Buchin et al.~\cite{buchin:locally} suggested to extend the definition of locally correct (Fr\'{e}chet) matchings to ``locally optimal'' (Fr\'{e}chet) matchings as future work. ``The idea is to restrict to the locally correct matching that decreases the matched distance as quickly as possible.''\cite[p. 237]{buchin:locally}. To the best of our knowledge, such an extension of the definition of locally correct matchings has not been given until now. In the following, we give a definition of locally optimal matchings and show that each locally correct matching 
%$(\alpha_1,\alpha_2)$ 
can be transformed, in $\mathcal{O}(n)$ time, into a locally optimal matching.
	
	Buchin et al.~\cite{buchin:locally} require the leash length to decrease as fast possible. In general though, there is no matching that ensures a monotonically decreasing leash length. We therefore also consider increasing the leash length and extend the objective as follows: ``Computing a locally correct matching that locally decreases and increases the leash length as fast as possible between two maxima''. We measure how fast the leash length decreases (increases) as sum of the lengths of the subcurves that are needed to achieve a leash length of $\delta \geq 0$ (the next (local) maximum), then we continue from the point pair that realizes $\delta \geq 0$.
	
	 Thus, it seems to be natural to consider a matched point pair from $T_1 \times T_2$ in that a local maxima is achieved as fixed. Note that requiring a fast reduction and a fast enlargement of the leash length between two pairs $(T_1(\alpha_1(t_1)),T_2(\alpha_2(t_{1})))$ and $(T_1(\alpha_1(t_{2})),T_2(\alpha_2(t_{2})))$ of fixed points is equivalent to requiring a matching that is optimal w.r.t. the partial Fr\'{e}chet similarity between the curves between the points $T_1(\alpha_1(t_1))$ and $T_2(\alpha_2(t_{1}))$ and $T_1(\alpha_1(t_{2}))$ and $T_2(\alpha_2(t_{2}))$ for all thresholds $\delta \geq 0$. 
	 
	 In the following, we give a definition of locally correct matchings that considers the above described requirements.
	Let $f:[0,n] \rightarrow \mathbb{R}_{\geq0}$. $t \in [0,1]$, is the parameter of a \emph{local maximum} of $f$ if the following is fulfilled: there is a $\delta_t > 0$ such that for all $0 \leq \delta\leq \delta_t: f(t \pm \delta) \leq f(t)$ and $f(t + \delta) < f(t)$ or $f(t-\delta) < f(t)$. 
	Given a matching $(\alpha_1,\alpha_2)$, let $t_1, \dots , t_k$ be the ordered sequence of parameters for all local maxima of the function $t \mapsto d_2(T_1(\alpha_1(t)), T_1(\alpha_1(t)))$. For any $t_i,t_{i+1}$, we denote the restrictions of $\alpha_1$ and $\alpha_2$ to $[t_i,t_{i+1}]$ as $\alpha_1[t_i,t_{i+1}]: [t_i,t_{i+1}] \rightarrow [\alpha_1(t_i), \alpha_1(t_{i+1})]$ and $\alpha_1[t_i,t_{i+1}]: [t_i,t_{i+1}] \rightarrow [\alpha_1(t_i), \alpha_1(t_{i+1})]$.
	We say $(\alpha_1,\alpha_2)$ is \emph{locally optimal} if it is locally correct and for all $t_i,t_{i+1}$, $\mathcal{P}_{\delta}(T_1[t_i,t_{i+1}],T_2[t_i,t_{i+1}]) = \mathcal{P}_{(\alpha_1[t_i,t_{i+1}],\alpha_2[t_i,t_{i+1}])}(T_1,T_2)$ for all $\delta \geq 0$.
	
	By applying a similar approach as in the proof of Lemma~\ref{lem:key} we obtain the following:
	
\begin{lemma}\label{cor:partFS}
	Let $C$ be an arbitrarily chosen parameter cell and $a, b \in C$ such that $a \leq_{xy} b$ and $\pi_{ab}$ the path induced by Lemma~\ref{lem:key}. Then, $\mathcal{P}_{\delta}(T_1[a.x,b.x], T_2[a.y,b.y]) = |\mathcal{E}_{\delta} \cap \pi_{ab}|$ for all $\delta \geq 0$, where $\mathcal{E}_{\delta}$ is the free space ellipse of $C$ for the distance threshold $\delta$.
\end{lemma}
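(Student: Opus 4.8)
The plan is to move the statement into the parameter cell $C$, where the partial Fr\'echet similarity of the two subsegments turns into the $L_1$-length of the part of a monotone $a$--$b$ path that lies inside the free space ellipse, and then to re-run the orthogonal-projection argument from the proof of Lemma~\ref{lem:key}, this time tracking membership in $\mathcal{E}_\delta$ rather than the weighted integral.

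First I would make the translation precise. A monotone matching of $T_1[a.x,b.x]$ and $T_2[a.y,b.y]$ is the same thing as a monotone path $\psi\subset R$ from $a$ to $b$ via $\psi(t)=(\alpha_1(t),\alpha_2(t))$, where $R$ is the rectangle spanned by $a$ and $b$. Because $T_1$ and $T_2$ are parametrized by arclength, $\|(T_i\circ\alpha_i)'(t)\|_2=|\alpha_i'(t)|$, so the integrand $\|(T_1\circ\alpha_1)'(t)\|_2+\|(T_2\circ\alpha_2)'(t)\|_2$ equals $\|\psi'(t)\|_1$, and the condition $d_2(T_1(\alpha_1(t)),T_2(\alpha_2(t)))\le\delta$ is exactly $\psi(t)\in\mathcal{E}_\delta$. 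Hence $\mathcal{P}_{(\alpha_1,\alpha_2)}(T_1[a.x,b.x],T_2[a.y,b.y])$ equals the length (in the sense of $|\cdot|$ from Section~\ref{sec:prelim}) of the part of $\psi$ lying in $\mathcal{E}_\delta$; writing this as $|\psi\cap\mathcal{E}_\delta|$, we get $\mathcal{P}_\delta=\sup_{\psi}|\psi\cap\mathcal{E}_\delta|$ over monotone $a$--$b$ paths $\psi$ (only monotone matchings are admissible here, as the functional would otherwise be unbounded, and $\pi_{ab}$ is monotone).

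Next I would show that $\pi_{ab}$ attains this supremum, and does so for every $\delta$ at once. Let $\psi$ be any monotone $a$--$b$ path and let $\bot:\psi\to\pi_{ab}$ be the map from the proof of Lemma~\ref{lem:key}: $\bot(p)$ is the point where the line through $p$ perpendicular to $\ell$ meets $\pi_{ab}$. As shown there, $\bot$ is a continuous bijection, it preserves the $L_1$-distance from $a$ (the perpendiculars to $\ell$ have slope $-1$, along which $x+y$, and hence $d_1(a,\cdot)$ for points of monotone paths emanating from $a$, is constant), and $w(\bot(p))\le w(p)$ for all $p\in\psi$. The only new point is that $\bot$ therefore respects $\mathcal{E}_\delta$: if $w(p)\le\delta$ then $w(\bot(p))\le\delta$, so $\bot$ restricts to an injective, $L_1$-length-preserving map from $\psi\cap\mathcal{E}_\delta$ into $\pi_{ab}\cap\mathcal{E}_\delta$, giving $|\psi\cap\mathcal{E}_\delta|\le|\pi_{ab}\cap\mathcal{E}_\delta|$. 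Since $\pi_{ab}$ is itself a monotone $a$--$b$ path, the supremum equals $|\pi_{ab}\cap\mathcal{E}_\delta|=|\mathcal{E}_\delta\cap\pi_{ab}|$, which proves the lemma and, as $\delta$ was arbitrary, shows the matching induced by $\pi_{ab}$ is simultaneously optimal for all leash lengths.

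The work is almost entirely bookkeeping on top of Lemma~\ref{lem:key}; the two facts that must be reused carefully are that $\bot$ is an $L_1$-isometry---which rests on the monotone free space axis $\ell$ having slope exactly $1$, equivalently on $\ell$ being a principal axis of the quadratic form $w^2$, whose Hessian has eigenvectors along $(1,1)$ and $(1,-1)$ because $T_1$ and $T_2$ are unit-speed---and that projecting a point orthogonally toward $\ell$ never increases $w$, which holds because the ellipses $\mathcal{E}_\delta$ are concentric and homothetic with common axes $\ell$ and $\hbar$. Neither is a genuine obstacle; the step most likely to need care in the write-up is the faithful translation of the partial-similarity functional and the (implicit) restriction to monotone matchings in the first paragraph.
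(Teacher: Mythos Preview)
Your proposal is correct and follows exactly the route the paper indicates: the paper does not spell out a proof but simply says ``by applying a similar approach as in the proof of Lemma~\ref{lem:key},'' and your argument---reusing the orthogonal projection $\bot$ and observing that $w(\bot(p))\le w(p)$ implies $\bot$ carries $\psi\cap\mathcal{E}_\delta$ into $\pi_{ab}\cap\mathcal{E}_\delta$ in an $L_1$-length-preserving way---is precisely that approach. Your explicit justification that $\ell$ has slope~$1$ (via the Hessian of $w^2$ under the arc-length parametrization) and your remark on restricting to monotone matchings are both sound and in fact make the argument more self-contained than the paper's own treatment; the paper uses the slope-$1$ fact elsewhere (e.g.\ in the proof of Lemma~\ref{lem:twoCrossingComplex}) without proving it.
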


Lemma~\ref{cor:partFS} implies that each locally correct matching $\pi$ can be transformed into a locally optimal Fr\'{e}chet matching in $\mathcal{O}(n)$ time as follows: Let $p_1,\dots,p_{2n} \in \pi$ be the intersection points with the parameter grid. For each $i \in \{ 1,...,2n-1 \}$ we substitute the subpath $\pi_{p_ip_{i+1}}$ by the path between $p_{i}$ and $p_{i+1}$ which is induced by Lemma~\ref{lem:key}.
	The algorithm from~\cite{buchin:locally} computes a locally correct matching in $\mathcal{O}(n^3 \log n)$ time. Thus, a locally optimal matching can be computed in $\mathcal{O}(n^3 \log n)$ time.

\section{Conclusion}

	We presented pseudo-polynomial $(1+\varepsilon)$-approximation algorithms for the integral and average Fr\'{e}chet distance which have a running time of $\mathcal{O}(\frac{\zeta^4 n^4}{\varepsilon^2})$. In particular, in our approach we compute two geometric graphs and their weighted shortest path lengths in parallel. It remains open if one can reduce the complexity of $G_1$ to polynomial with respect to the input parameters such that  $G_1 \cup G_2$ still ensures an $(1+\varepsilon)$-approximation.

%%
%% Bibliography
%%

%% Either use bibtex (recommended), but commented out in this sample

%\bibliography{dummybib}

%% .. or use bibitems explicitely

%\nocite{Simpson}

\end{document}